\def\cqedsymbol{\ifmmode$\lrcorner$\else{\unskip\nobreak\hfil
\penalty50\hskip1em\null\nobreak\hfil$\lrcorner$
\parfillskip=0pt\finalhyphendemerits=0\endgraf}\fi} 
\newcommand{\cqed}{\renewcommand{\qed}{\cqedsymbol}}
\numberwithin{equation}{section}
\newtheorem{theorem}{Theorem}[section]
\newtheorem{lemma}[theorem]{Lemma}
\newtheorem{claim}[theorem]{Claim}
\theoremstyle{definition}
\newtheorem{definition}[theorem]{Definition}
\newcommand{\Oh}{\mathcal{O}}
\newcommand{\Ohstar}{\Oh^\star}
\newcommand{\picname}{\textsc{Proper Interval Completion}\xspace}
\title{A subexponential parameterized algorithm for\\ \textsc{Proper Interval Completion}%
\thanks{The research leading to these results has received funding from the European Research Council under the European Union's Seventh Framework Programme (FP/2007-2013) / ERC Grant Agreement n. 267959}}
\author{
  Ivan Bliznets%
  \thanks{
    St. Petersburg Academic University of the Russian Academy of Sciences, Russia, \texttt{ivanbliznets@tut.by}.
  }
  \and
  Fedor V. Fomin%
  \thanks{
    Department of Informatics, University of Bergen, Norway, \texttt{fomin@ii.uib.no}.
  }
  \and
  Marcin Pilipczuk%
  \thanks{
    Department of Informatics, University of Bergen, Norway, \texttt{Marcin.Pilipczuk@ii.uib.no}.
  }
  \and
  Micha\l{} Pilipczuk%
  \thanks{
    Department of Informatics, University of Bergen, Norway, \texttt{michal.pilipczuk@ii.uib.no}.
  }
  }
\date{}
\begin{document}

\maketitle

\begin{abstract}
In the {\sc{Proper Interval Completion}} problem we are given a graph $G$ and an integer $k$, and the task is to turn $G$ using at most $k$ edge additions into a proper interval graph, i.e., a graph admitting an intersection model of equal-length intervals on a line. The study of {\sc{Proper Interval Completion}} from the viewpoint of parameterized complexity has been initiated by Kaplan, Shamir and Tarjan~[FOCS 1994; SIAM J. Comput. 1999], who showed an algorithm for the problem working in $\Oh(16^k\cdot (n+m))$ time. In this paper we present an algorithm with running time $k^{\Oh(k^{2/3})} + \Oh(nm(kn+m))$, which is the first subexponential parameterized algorithm for {\sc{Proper Interval Completion}}.

\end{abstract}

\section{Introduction}\label{sec:intro}
A graph $G$ is an {\em{interval graph}} if it admits a model of the following form: each vertex is associated with an interval on the real line, and two vertices are adjacent if and only if the associated intervals overlap. If moreover the intervals can be assumed to be of equal length, then $G$ is a {\em{proper interval graph}}; equivalently, one may require that no associated interval is contained in another~\cite{roberts}. Interval and proper interval graphs appear naturally in molecular biology in the problem of {\em{physical mapping}}, where one is given a graph with vertices modelling contiguous intervals (called {\em{clones}}) in a DNA sequence, and the edges indicate which intervals overlap. Based on this information one would like to reconstruct the layout of the clones. We refer to~\cite{goldberg1995four,Golumbic80,tarjan-sicomp} for further discussion on biological applications of (proper) interval graphs.

The biological motivation was the starting point of the work of Kaplan et al.~\cite{tarjan-sicomp}, who initiated the study of (proper) interval graphs from the point of view of parameterized complexity. It is namely natural to expect that some information about overlaps will be lost, and hence the model will be missing a small number of edges. Thus we arrive at the problems of {\sc{Interval Completion}} ({\sc{IC}}) and {\sc{Proper Interval Completion}} ({\sc{PIC}}): given a graph $G$ and an integer $k$, one is asked to add at most $k$ edges to $G$ to obtain a (proper) interval graph. Both of the problems are known to be NP-hard~\cite{Yannakakis81}, and hence it is natural to ask for an FPT algorithm parameterized by the expected number of additions $k$. For {\sc{Proper Interval Completion}} Kaplan et al.~\cite{tarjan-sicomp} presented an algorithm with running time $\Oh(16^k\cdot (n+m))$, while fixed-parameterized tractability of {\sc{Interval Completion}} was resolved much later by Villanger et al.~\cite{Villanger:2009ez}.
Recently, Liu et al.  \cite{cocoon13} obtained $\Oh(4^k + nm(n + m))$-time algorithm  for  {\sc{PIC}}.

The approach of Kaplan et al.~\cite{tarjan-sicomp} is based on a characterization by {\em{forbidden induced subgraphs}}, pioneered by Cai~\cite{Cai96}: proper interval graphs are exactly graphs that are chordal, i.e., do not contain any induced cycle $C_\ell$ for $\ell\geq 4$, and moreover exclude three special structures as induced subgraphs: a {\em{claw}}, a {\em{tent}}, and a {\em{net}}. Therefore, when given a graph which is to be completed into a proper interval graph, we may apply a basic branching strategy. Whenever a forbidden induced subgraph is encountered, we branch into several possibilities of how it is going to be destroyed in the optimal solution. A cycle $C_\ell$ can be destroyed only by triangulating it, which requires adding exactly $\ell-3$ edges and can be done in roughly $4^{\ell-3}$ different ways. Since for special structures there is only a constant number of ways to destroy them, the whole branching procedure runs in $c^k n^{\Oh(1)}$ time for some constant $c$.

The approach via forbidden induced subgraphs has driven the research on the parameterized complexity of graph modification problems ever since the pioneering work of Cai~\cite{Cai96}. Of particular importance was the work on polynomial kernelization; recall that a {\em{polynomial kernel}} for a parameterized problem is a polynomial-time preprocessing routine that shrinks the size of the instance at hand to polynomial in the parameter. While many natural completion problems admit polynomial kernels, there are also examples where no polynomial kernel exists under plausible complexity assumptions~\cite{KratschW13}. In particular, {\sc{PIC}} admits a kernel with $\Oh(k^3)$ vertices which can be computed in $\Oh(nm(kn+m))$ time~\cite{pic-kernel}, while the kernelization status of {\sc{IC}} remains a notorious open problem.

The turning point came recently, when Fomin and Villanger~\cite{FominV13} proposed an algorithm for {\sc{Fill-in}}, i.e. {\sc{Chordal Completion}}, that runs in {\em{subexponential parameterized time}}, more precisely $k^{\Oh(\sqrt{k})} n^{\Oh(1)}$. As observed by Kaplan et al.~\cite{tarjan-sicomp}, the approach via forbidden induced subgraphs leads to an FPT algorithm for {\sc{Fill-in}} with running time $ 16^k n^{\Oh(1)}$. Observe that in order to achieve a subexponential running time one needs to completely abandon this route, as even branching on encountered obstacles as small as, say, induced $C_4$-s, leads to running time at least $2^k n^{\Oh(1)}$. To circumvent this, Fomin and Villanger proposed the approach of gradually building the structure of a chordal graph in a dynamic programming manner. The crucial observation was that the number of `building blocks' (in their case, {\em{potential maximal cliques}}) is subexponential in a YES-instance, and thus the dynamic program operates on a subexponential space of states.

This research direction  
 was continued by Ghosh et al.~\cite{ghosh2012faster} and by Drange et al.~\cite{DrangeFPV13}, who identified several more graph classes for which completion problems have subexponential parameterized complexity: threshold graphs, split graphs, pseudo-split graphs, and trivially perfect graphs (we refer to~\cite{DrangeFPV13,ghosh2012faster} for respective definitions). Let us remark  that problems admitting subexponential parameterized algorithms are very scarce, since for most natural parameterized problems existence of such algorithms can be refuted under the Exponential Time Hypothesis (ETH) \cite{ImpagliazzoPZ01}. Up to very recently, the only natural positive examples were problems on specifically constrained inputs, like $H$-minor free graphs~\cite{demaine2005subexponential} or tournaments~\cite{alon2009fast}. Thus, completion problems admitting subexponential parameterized algorithms can be regarded as `singular points on the complexity landscape'. Indeed, Drange et al.~\cite{DrangeFPV13} complemented their work with a number of lower bounds excluding (under ETH) subexponential parameterized algorithms for completion problems to related graphs classes, like for instance cographs.

Interestingly, threshold graphs, trivially perfect graphs and chordal graphs, which are currently our main examples, correspond to graph parameters {\em{vertex cover}}, {\em{treedepth}}, and {\em{treewidth}} in the following sense: the parameter is equal to the minimum possible maximum clique size in a completion to the graph class~($\pm 1$), see Fig.~\ref{fig:diagram}. It is therefore natural to ask if {\sc{Interval Completion}} and {\sc{Proper Interval Completion}}, which likewise correspond to {\em{pathwidth}} and {\em{bandwidth}}, also admit subexponential parameterized algorithms.

\begin{figure}
\centering
\includegraphics{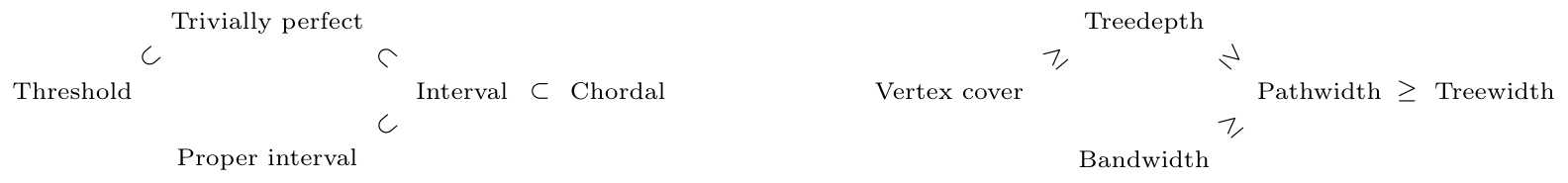}
\caption{Graph classes and corresponding graph parameters. Inequalities on the right side are with $\pm 1$ slackness.}
\label{fig:diagram}
\end{figure}

\vskip 0.3cm

\noindent {\bf{Our Results.}} In this paper we answer the question about \picname in affirmative by proving the following theorem:

\begin{theorem}\label{thm:main}
\picname can be solved in $k^{\Oh(k^{2/3})} + \Oh(nm(kn+m))$ time.
\end{theorem}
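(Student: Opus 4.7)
The plan is to follow the general strategy pioneered by Fomin and Villanger for \textsc{Fill-in}: first kernelize the instance to polynomial size, and then solve the kernel using a dynamic programming algorithm whose state space is provably subexponential. The polynomial kernel for \picname with $O(k^3)$ vertices, computable in $O(nm(kn+m))$ time~\cite{pic-kernel}, is already available and takes care of the additive term in the claimed running time. Thus it suffices to design an algorithm that, on an instance with $n = O(k^3)$ vertices, runs in time $n^{O(k^{2/3})}$, which equals $k^{O(k^{2/3})}$.

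The heart of the algorithm is a dynamic programming routine that reconstructs an optimal completion by processing a linear ordering of its maximal cliques. Here we exploit the classical characterization of proper interval graphs via a \emph{straight enumeration}: a linear ordering of the vertices in which the closed neighborhood of each vertex is a contiguous block, and equivalently the maximal cliques form a clique path along which consecutive cliques differ by exactly one vertex being removed from the left and one being added on the right. The natural building blocks to DP over are therefore the maximal cliques (or, better, ordered pairs of consecutive maximal cliques) that can appear in a proper interval completion of $G$ using at most $k$ edge additions. A transition between two consecutive states corresponds to the ``sliding'' of the current clique by one position, and the cost of the transition accounts for the edges that have to be added inside the new clique. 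Once the family of relevant building blocks is enumerated, the DP itself is a standard shortest-path computation on a layered graph whose vertices are the building blocks.

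The main obstacle, and the crux of the proof, is to bound the number of building blocks that can possibly appear in any proper interval completion of $G$ of cost at most $k$ by $n^{O(k^{2/3})}$, and to enumerate them within the same time bound. The proof of such a combinatorial bound follows the philosophy of potential-maximal-clique counting for chordal completion, but is substantially more delicate because the local structure of a proper interval graph is rigid: each clique is tightly related to its two neighbours on the clique path. One shows that any relevant maximal clique, viewed as a subset of $V(G)$, is either ``close'' to a clique already present in $G$, in which case it is determined by a small number of local choices, or it is ``heavily touched'' by added edges, in which case charging the additions to the clique proves that the number of such cliques is small. The exponent $2/3$, rather than the $1/2$ familiar from \textsc{Fill-in}, is forced by balancing these two regimes: a clique that is touched by $t$ added edges may have roughly $t^2$ candidate instantiations, and grouping cliques by $t$ and using that the total budget is $k$ gives an optimum at $t \approx k^{1/3}$ and a bound of $n^{O(k^{2/3})}$.

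Putting the pieces together, the algorithm first kernelizes, then enumerates all relevant building blocks on the kernel together with all admissible consecutive pairs, and finally runs the layered-graph dynamic program. Correctness follows from the characterization of proper interval graphs by straight enumerations, together with the fact that the true optimal completion induces a legal sequence of transitions among the enumerated building blocks. The running time is the number of building blocks times a polynomial overhead for processing a single transition, which yields $k^{O(k^{2/3})}$ on the kernel and, combined with the kernelization, the bound claimed in Theorem~\ref{thm:main}.
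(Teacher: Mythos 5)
There is a genuine gap at the heart of your plan: the step ``the DP itself is a standard shortest-path computation on a layered graph whose vertices are the building blocks'' is exactly the step that fails for \picname, and your proposal offers no mechanism to repair it. A state consisting of a maximal clique (or a cut) \emph{as a set} does not suffice: in an umbrella ordering the vertices of a cut must end in the same relative order in which they began, so to glue a partial solution on the left of a cut with one on the right you must also record the internal ordering of the cut's vertices, and a cut may contain $\Theta(n)$ vertices, so you cannot enumerate these orderings. (Your auxiliary claim that consecutive maximal cliques of a proper interval graph differ by exactly one vertex on each side is also false --- e.g.\ the graph with cliques $\{1,2,3\}$ and $\{3,4,5\}$ sharing only vertex $3$ is proper interval --- so the transitions are not single ``slides'' either.) The paper's entire technical machinery exists to circumvent precisely this ordering obstruction: it first branches on the at most $(2k)^{2/3}$ \emph{expensive} vertices (those incident to more than $\tau=(2k)^{1/3}$ fill edges), turning the problem into a sandwich instance in which every vertex is cheap; then a layer-one DP works only with \emph{cheap} cuts (jump sets with at most $2k/\tau$ incident fill edges), for which the internal order can be recovered by guessing the incident fill edges and a lexicographic matching argument; and a layer-two DP, in the spirit of Feige's bandwidth algorithm, decomposes ``horizontally'' via chains the regions where all cuts are expensive, using that such regions span at most $\tau$ consecutive jumps.

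Relatedly, your explanation of the exponent $2/3$ (``a clique touched by $t$ added edges may have roughly $t^2$ candidate instantiations, balance at $t\approx k^{1/3}$'') is not an argument and does not match where the exponent actually comes from: guessing a cut touched by $t$ fill edges costs about $n^{\Oh(t)}$, not $t^2$, and no charging scheme is given that would yield $n^{\Oh(k^{2/3})}$ building blocks together with a correct DP over them. In the paper the $2/3$ arises from balancing two different costs, namely $k^{\Oh(k/\tau)}$ for the expensive-vertex branching and for guessing fill edges incident to cheap cuts, against $k^{\Oh(\tau^2)}$ for enumerating chains (sequences of up to $\tau$ sections, each drawn from a family of size $k^{\Oh(\tau)}$) in the layer-two DP; setting $\tau=(2k)^{1/3}$ equalizes the two. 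So while your opening moves (kernelization via Bessy--Perez, enumeration of subexponentially many building blocks, DP) are the right high-level template, the proposal is missing the key ideas --- elimination of expensive vertices and the two-layer vertical/horizontal decomposition --- without which the claimed DP cannot be made correct within the stated state space.
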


In a companion paper~\cite{my-ic} we also present an algorithm for {\sc{Interval Completion}} with running time $k^{\Oh(\sqrt{k})} n^{\Oh(1)}$, which means that the completion problems for all the classes depicted on Figure~\ref{fig:diagram} in fact do admit subexponential parameterized algorithms. We now describe briefly our techniques employed to prove Theorem~\ref{thm:main}, and main differences with the work on interval graphs~\cite{my-ic}.

From a space-level perspective, both the approach of this paper and of~\cite{my-ic} follows the route laid out by Fomin and Villanger in~\cite{FominV13}. That is, we enumerate a subexponential family of potentially interesting building blocks, and then try to arrange them into a (proper) interval model with a small number of missing edges using dynamic programming. In both cases, a natural candidate for this building block is the concept of a {\em{cut}}: given an interval model of a graph, imagine a vertical line placed at some position $x$ that pins down intervals containing $x$. A {\em{potential cut}} is then a subset of vertices that becomes a cut in some minimal completion to a (proper) interval graph of cost at most $k$. The starting point of both this work and of~\cite{my-ic} is enumeration of potential cuts. Using different structural insights into the classes of interval and proper interval graphs, one can show that in both cases the number of potential cuts is at most $n^{\Oh(\sqrt{k})}$, and they can be enumerated efficiently. Since in the case of proper interval graphs we can start with a cubic kernel given by Bessy and Perez~\cite{pic-kernel}, this immediately gives $k^{\Oh(\sqrt{k})}$ potential cuts for the {\sc{PIC}} problem. In the interval case the question of existence of a polynomial kernel is widely open, and the need of circumventing this obstacle causes severe complications in~\cite{my-ic}.

Afterwards the approaches diverge completely, as it turns out that in both cases the potential cuts are insufficient building blocks to perform dynamic programming, however for very different reasons. For {\sc{Interval Completion}} the problem is that the cut itself does not define what lies on the left and on the right of it. Even worse, there can be an exponential number of possible left/right alignments when the graph contains many modules that neighbour the same clique. To cope with this problem, the approach taken in~\cite{my-ic} remodels the dynamic programming routine so that, in some sense, the choice of left/right alignment is taken care of inside the dynamic program. The dynamic programming routine becomes thus much more complicated, and a lot of work needs to be put into bounding the number of its states, which can be very roughly viewed as quadruples of cuts enriched with an `atomic' left/right choice (see the definition of a {\em{nested terrace}} in~\cite{my-ic}). 

Curiously, in the proper interval setting the left/right choice can be easily guessed along with a potential cut at basically no extra cost. Hence, the issue causing the most severe problems in the interval case is simply non-existent. The problem, however, is in the {\em{order}} of intervals in the cut: while performing a natural left-to-right dynamic program that builds the model, we would need to ensure that intervals participating in a cut begin in the same order as they end. Therefore, apart from the cut itself and a partition of the other vertices into left and right, we would need to include in a state also the order of the vertices of the cut; as the cut may be very large, we cannot afford constructing a state for every possible order.

Instead we remodel the dynamic program, this time by introducing two layers. We first observe that the troublesome order may be guessed expeditiously providing that the cut in question has only a sublinear in $k$ number of incident edge additions. Hence, in the first layer of dynamic programming we aim at chopping the optimally completed model using such cheap cuts, and to conclude the algorithm we just need to be able to compute the best possible completed model between two border cuts that are cheap, assuming that all the intermediate cuts are expensive. This task is performed by the layer-two dynamic program. The main observation is that since all the intermediate cuts are expensive, there cannot be many disjoint such cuts and consequently the space between the border cuts is in some sense `short'. As the border cuts can be large, it is natural to start partitioning the space in between `horizontally' instead of `vertically' --- shortness of this space guarantees that the number of sensible `horizontal' separations is subexponential. The horizontal partitioning method that we employ resembles the classic $\Ohstar(10^n)$ exact algorithm for bandwidth of Feige~\cite{Feige00}.

\section{Preliminaries}\label{sec:prelims}
\providecommand{\ord}{\sigma}
\providecommand{\sol}{F}
\providecommand{\incsol}[2]{#1(#2)}
\providecommand{\incF}[1]{\incsol{\sol}{#1}}
\providecommand{\Gdown}{G_\downarrow}
\providecommand{\Gup}{G_\uparrow}
\providecommand{\spic}{\textsc{SPIC}}
\providecommand{\spicname}{\textsc{Sandwich Proper Interval Completion}}
\providecommand{\cost}{c}
\providecommand{\pos}{\Sigma}
\providecommand{\posmap}{\pi}

\paragraph{Graph notation.} In most cases, we follow standard graph notation.

An \emph{ordering} of a vertex set of a graph $G$
is a bijection $\ord: V(G) \to \{1,2,\ldots,|V(G)|\}$.
We say that a vertex $v$ is \emph{to the left} or \emph{before}
a vertex $w$ if $\ord(v) < \ord(w)$ and \emph{to the right} or \emph{after} $w$
if $\ord(v) > \ord(w)$.
We also extend these notions to orderings of subsets of vertices:
for any $X \subseteq V(G)$, any injective function $\ord: X \to \{1,2,\ldots,|V(G)|\}$
is called an ordering. We sometimes treat such $\ord$ as an ordering
of the vertex set of $G[X]$ as well, implicitly identifying $\ord(X)$ with $\{1,2,\ldots,|X|\}$ in the
monotonous way.

For any graph $G$ we shall speak about, we implicitly
fix one arbitrary ordering $\ord_0$ on $V(G)$.
We shall use this ordering to break ties and canonize some objects
(orderings, completion sets, solutions, etc.).
That is,
     assume that $X = \{x_1,x_2,\ldots,x_{|X|}\} \subseteq V(G)$
     with $\ord_0(x_1) < \ord_0(x_2) < \ldots < \ord_0(x_{|X|})$.
     Then with every ordering $\ord: X \to \{1,2,\ldots,|V(G)|\}$
we associate a sequence
$(\ord(x_1),\ord(x_2), \ldots, \ord(x_{|X|}))$,
and sort the orderings of $X$ according to this sequence lexicographically.
In many places we consider some family of orderings for a fixed choice of $X$;
if we pick the lexicographically minimum ordering of this family, we mean the one with
lexicographically minimum associated sequence.

Observe that an ordering $\ord$ of $V(G)$ naturally defines a graph $\ord(G)$ with vertex set $\{1,2,\ldots,|V(G)|\}$
and $pq \in E(\ord(G))$ if and only if $\ord^{-1}(p)\ord^{-1}(q) \in E(G)$. Clearly, $\ord(G)$ and $G$ are isomorphic
with $\ord$ being an isomorphism between them.

For any integers $a,b$ we denote $[a,b] = \{a,a+1,\ldots,b\}$.

We use $n$ and $m$ to denote the number of vertices and edges of the input graph.

\paragraph{Proper interval graphs.}
A graph $G$ is a \emph{proper interval graph} if it admits an intersection model, where
each vertex is assigned a closed interval on a line such that no interval is a proper subset of another one,
and two vertices are adjacent if and only if their intervals intersect.
In our work it is more convenient to use an equivalent combinatorial object,
called an \emph{umbrella ordering}.

\begin{definition}[umbrella ordering]
Let $G$ be a graph and $\ord: V(G) \to \{1,2,\ldots,n\}$ be an ordering of its vertices. We say
that $\ord$ satisfies the \emph{umbrella property} for a triple $a,b,c \in V(G)$
if $ac \in E(G)$ and $\ord(a) < \ord(b) < \ord(c)$ implies $ab,bc \in E(G)$.
Furthermore, $\ord$ is called an \emph{umbrella ordering} if
it satisfies the umbrella property for any $a,b,c \in V(G)$.
\end{definition}
The following result is due to Looges and Olariu. 
\begin{theorem}[\cite{umbrella}]
A graph is a proper interval graph if and only if it admits an umbrella ordering.
\end{theorem}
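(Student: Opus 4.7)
The plan is to prove both implications constructively, passing between geometric interval models and combinatorial orderings while using the umbrella property to control monotonicity of left and right endpoints along the ordering.

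For the forward direction, I would start with a proper interval model $\{I_v = [\ell_v, r_v] : v \in V(G)\}$. The defining feature of a proper interval model is that no interval properly contains another, which is equivalent to saying that the orderings by $\ell_v$ and by $r_v$ coincide (break ties consistently, e.g.\ using $\ord_0$ from the preliminaries). Let $\ord$ be this common ordering. To verify the umbrella property, suppose $ac \in E(G)$ and $\ord(a) < \ord(b) < \ord(c)$. Then $\ell_a \le \ell_b \le \ell_c$ and $r_a \le r_b \le r_c$, and $I_a \cap I_c \ne \emptyset$ yields $\ell_c \le r_a$. Chaining these inequalities immediately gives $\ell_b \le \ell_c \le r_a \le r_b$ and $\ell_a \le \ell_b \le r_a \le r_b \le r_c$, so $I_a$ meets $I_b$ and $I_b$ meets $I_c$.

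For the reverse direction, given an umbrella ordering $\ord$ on $V(G)$, I would first extract a monotonicity lemma: if $R(v) := \max(\{\ord(v)\} \cup \{\ord(u) : uv \in E(G)\})$, then $\ord(v) < \ord(w)$ implies $R(v) \le R(w)$. This uses the umbrella property directly: if $R(w)$ were achieved by a vertex $u$ with $\ord(u) < \ord(v) < \ord(w)$, the umbrella triple $(u, v, w)$ would force $uv \in E(G)$, bumping $R(v)$ up to at least $\ord(u)$; otherwise $R(w) \ge \ord(w) > \ord(v)$ and a symmetric umbrella triple handles the case when $R(v)$ is realized to the right of $\ord(w)$. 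Next, to produce an actual proper interval model, I would define
\[
I_v = \bigl[\ord(v),\ R(v) + \epsilon\cdot \ord(v)\bigr]
\]
for a sufficiently small $\epsilon > 0$ (say $\epsilon < 1/(n+1)$). Both endpoints are now strictly increasing in $\ord(v)$: the left endpoint trivially, and the right endpoint by combining the monotonicity of $R$ with the perturbation term. Strict monotonicity of both endpoints rules out proper containment. Finally, membership of $\ord(w)$ in $I_v$ (for $\ord(v) < \ord(w)$) is equivalent, by the smallness of $\epsilon$, to $R(v) \ge \ord(w)$, which means some neighbor $u$ of $v$ lies at position $\ge \ord(w)$; then the umbrella property applied to $(v, w, u)$ forces $vw \in E(G)$. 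The converse that $vw \in E(G)$ implies intersection is immediate from $R(v) \ge \ord(w)$.

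The main subtlety is the proper containment step: the naive choice $I_v = [L(v), R(v)]$, with $L$ the symmetric leftmost-neighbor position, already realizes $G$ as an intersection graph, but it can allow $I_v \subsetneq I_w$ (for instance on $P_3$, where the middle vertex's interval strictly contains the endpoints' intervals). The small perturbation $\epsilon \cdot \ord(v)$ on the right endpoint (alternatively, a staircase construction with per-vertex offsets) is the one maneuver that has to be done carefully, and it is what turns the ordering-based model into a genuinely proper one.
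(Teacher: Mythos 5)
Your proposal is correct, but there is nothing in the paper to compare it against: the statement is quoted as a known theorem of Looges and Olariu (the cited reference \emph{umbrella}) and the paper gives no proof of it. Your argument is the standard self-contained one and it works. In the forward direction, non-containment of intervals is exactly what makes the left-endpoint and right-endpoint orders agree (up to identical intervals, where consistent tie-breaking is fine), and the chain of weak inequalities you write down does verify the umbrella property. In the reverse direction, the rightmost-neighbor function $R(v)$ is weakly monotone along the ordering, the umbrella property converts $R(v)\ge\sigma(w)$ into the edge $vw$, and the per-vertex perturbation $\epsilon\cdot\sigma(v)$ makes both endpoint sequences strictly increasing, which rules out containment; with $\epsilon<1/(n+1)$ the perturbation cannot create spurious intersections since positions are integers. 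One cosmetic blemish: in your monotonicity lemma the first case (where $R(w)$ is ``achieved'' by a vertex to the left of $v$) is vacuous, because $R(w)\ge\sigma(w)>\sigma(v)$ holds by definition; the only case that genuinely needs the umbrella property is when $R(v)$ is realized to the right of $\sigma(w)$, and your triple $(v,w,u)$ handles exactly that, so the lemma and the subsequent checks go through.
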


Observe that we may equivalently define an umbrella ordering $\ord$ as an ordering such that
for every $ab \in E(G)$ with $\ord(a) < \ord(b)$ the subgraph $\ord(G)[[a,b]]$ is a complete graph. Alternatively,  $\ord$ is an umbrella ordering of $G$ if and only if for any $a,a',b',b\in V(G)$ such that $\ord(a)\leq \ord(a')<\ord(b')\leq \ord(b)$ and $ab\in E(G)$, it also holds that $a'b'\in E(G)$. We will use these alternative definitions implicitly in the sequel.
See also Fig.~\ref{fig:umbrella} for an illustration.
\begin{figure}
\centering
\includegraphics{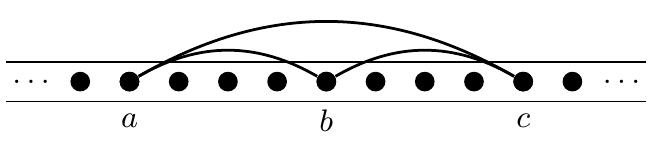}
\caption{An umbrella property for triple $a$, $b$, $c$. The existence of an edge $ac$
  implies the existence of edges $ab$ and $bc$.}
\label{fig:umbrella}
\end{figure}

Observe also the following simple fact that follows immediately from the definition of an umbrella ordering.

\begin{lemma}\label{lem:intersection-union}
Let $G_1,G_2$ be two proper interval graphs with $V(G_1)=V(G_2)=V$. Assume further that some ordering $\sigma$ of $V$ is an umbrella ordering of both $G_1$ and $G_2$. Then $\sigma$ is also an umbrella ordering of $H_\downarrow:=(V,E(G_1)\cap E(G_2))$ and $H_\uparrow:=(V,E(G_1)\cup E(G_2))$, and in particular $H_\downarrow$ and $H_\uparrow$ are proper interval graphs.
\end{lemma}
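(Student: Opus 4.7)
The proof should be essentially immediate from the definition of the umbrella property, applied separately to $H_\downarrow$ and $H_\uparrow$. The plan is to verify the umbrella property pointwise for each triple $a,b,c$ and then invoke the Looges--Olariu characterization to conclude that both $H_\downarrow$ and $H_\uparrow$ are proper interval graphs.

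For $H_\downarrow=(V,E(G_1)\cap E(G_2))$, fix any triple $a,b,c \in V$ with $\sigma(a) < \sigma(b) < \sigma(c)$ and $ac \in E(H_\downarrow)$. By definition of intersection we have $ac \in E(G_1)$ and $ac \in E(G_2)$, so the umbrella property of $\sigma$ applied in $G_1$ yields $ab,bc \in E(G_1)$, and the same property applied in $G_2$ yields $ab,bc \in E(G_2)$. Intersecting these memberships gives $ab,bc \in E(H_\downarrow)$, as required.

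For $H_\uparrow=(V,E(G_1)\cup E(G_2))$, fix a triple $a,b,c$ with $\sigma(a)<\sigma(b)<\sigma(c)$ and $ac \in E(H_\uparrow)$. Without loss of generality $ac \in E(G_1)$ (the other case is symmetric). Then the umbrella property of $\sigma$ in $G_1$ gives $ab,bc \in E(G_1)$, hence $ab,bc \in E(G_1)\cup E(G_2) = E(H_\uparrow)$.

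Thus $\sigma$ is an umbrella ordering of both $H_\downarrow$ and $H_\uparrow$, and by the theorem of Looges and Olariu cited above, both graphs are proper interval graphs. There is no real obstacle here --- the lemma is a one-line consequence of the definition, and the main point of stating it is to fix notation for later use.
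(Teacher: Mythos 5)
Your proof is correct and is exactly the direct verification the paper has in mind --- the paper states this lemma without proof, noting it ``follows immediately from the definition of an umbrella ordering,'' and your pointwise check of the umbrella property for $H_\downarrow$ and $H_\uparrow$ followed by the Looges--Olariu characterization is that immediate argument spelled out.
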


We use the assumed fixed ordering $\ord_0$ to canonize umbrella orderings:
for a proper interval graph $G$, the canonical umbrella ordering of $G$
is the one with its associated sequence being lexicographically minimum.

\paragraph{Proper interval completion.}
For a graph $G$, a \emph{completion} of $G$ is a set $\sol \subseteq \binom{V(G)}{2} \setminus E(G)$ such that $G+\sol := (V(G),E(G) \cup \sol)$ is a proper interval graph.
The \picname{} problem asks for a completion of $G$ of size not exceeding a given budget $k$.

However, in our paper it is more convenient to work with orderings as a basic notion, instead of completions.
Moreover, for technical reasons, we also need a slightly more general \emph{sandwich} version of the \picname{} problem,
henceforth called \spicname{} (\spic{} for short).
Here, apart from a graph $G$ and budget $k$, we are given
\begin{enumerate}
\item for each $u \in V(G)$ a set of \emph{allowed positions} $\pos_u \subseteq \{1,2,\ldots,V(G)\}$;
\item two graphs $\Gdown$ and $\Gup$ with vertex set $\{1,2,\ldots,|V(G)|\}$ satisfying
\begin{enumerate}
\item $\Gdown$ is a subgraph of $\Gup$;
\item both $\Gdown$ and $\Gup$ are proper interval graphs, and the identity is an umbrella ordering for both of them.
\end{enumerate}
\end{enumerate}
The \spicname{} problem asks for a completion $\sol$ of $G$, together with an ordering $\ord$ of $V(G)$, such that
\begin{enumerate}
\item $\ord$ is an umbrella ordering of $G+\sol$;
\item $\ord(u) \in \pos_u$ for each $u \in V(G)$;
\item $E(\Gdown) \subseteq E(\ord(G+\sol)) \subseteq E(\Gup)$;
\item the \emph{cost} of the ordering $\ord$ and completion $\sol$, defined as $\cost(\ord,\sol)=|\sol|$, is at most $k$.
\end{enumerate}

We now observe that an ordering $\ord$ in fact yields a unique `best' completion $\sol$.
Formally,
  for any ordering $\ord$ of $V(G)$ we define $\sol^\ord$ to be the set of such unordered pairs $xy \notin E(G)$ for which one of the following holds:
\begin{enumerate}
\item $\ord(x)\ord(y) \in E(\Gdown)$; or\label{p:sol-Gdown}
\item there exist $x',y' \in V(G)$ such that $x'y' \in E(G)$ and $\ord(x') \leq \min(\ord(x),\ord(y)) \leq \max(\ord(x),\ord(y)) \leq \ord(y')$.\label{p:sol-cap}
\end{enumerate}
We need he following property of $\sol^\ord$.
\begin{lemma}\label{lem:unique-sandwich-sol}
Set $\sol^\ord$ is a completion of $G$, $\ord$ is an umbrella ordering of $G+\sol^\ord$, and $\Gdown$ is a subgraph of $\ord(G+F^{\ord})$.
Furthermore, $\sol^\ord$ is the unique inclusion-wise minimal completion of $G$ for which $\ord$ is an umbrella ordering of $G+\sol^\ord$
and $\Gdown$ is a subgraph of $\ord(G+\sol^\ord)$.
\end{lemma}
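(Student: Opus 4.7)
The plan is to split the statement into four claims and handle them one by one, leaning on the alternative characterization of umbrella orderings stated just before the lemma.

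First I would observe that $\Gdown \subseteq \ord(G+\sol^\ord)$ is essentially built into the definition: every edge $pq$ of $\Gdown$ corresponds to vertices $x = \ord^{-1}(p)$, $y=\ord^{-1}(q)$, and if $xy \notin E(G)$ then clause~\eqref{p:sol-Gdown} places $xy$ in $\sol^\ord$. So $\Gdown$-edges are covered immediately.

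Next I would prove the umbrella property for $\ord$ on $G+\sol^\ord$, which by the Looges--Olariu theorem simultaneously gives that $G+\sol^\ord$ is a proper interval graph, i.e.\ that $\sol^\ord$ is a completion. Take $a,b,c$ with $\ord(a)<\ord(b)<\ord(c)$ and $ac \in E(G+\sol^\ord)$; I want $ab,bc \in E(G+\sol^\ord)$. There are three sources for the edge $ac$:
\begin{enumerate}
 \item If $ac\in E(G)$, then taking $x':=a$, $y':=c$ in clause~\eqref{p:sol-cap} directly certifies that each of $ab,bc$ is either already in $E(G)$ or lies in $\sol^\ord$.
 \item If $ac\in \sol^\ord$ via clause~\eqref{p:sol-cap} witnessed by some $x',y'$, then the same $x',y'$ still witness the inequalities for $ab$ and for $bc$, since $\ord(x')\le\ord(a)\le\ord(b)\le\ord(c)\le\ord(y')$.
 \item If $ac\in \sol^\ord$ via clause~\eqref{p:sol-Gdown}, then $\ord(a)\ord(c)\in E(\Gdown)$; since the identity is an umbrella ordering of $\Gdown$, the edges $\ord(a)\ord(b)$ and $\ord(b)\ord(c)$ are in $E(\Gdown)$, and clause~\eqref{p:sol-Gdown} again puts $ab,bc$ into $\sol^\ord$ (or they are already in $E(G)$).
\end{enumerate}
This exhausts the cases and proves that $\ord$ is an umbrella ordering of $G+\sol^\ord$.

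Finally, for uniqueness of the inclusion-minimal completion, I would show that every completion $\sol'$ with the two required properties contains $\sol^\ord$; then $\sol^\ord$ itself satisfies the properties by the previous paragraphs, so it must be the unique minimal one. Pick any $xy\in \sol^\ord$. If it was added by clause~\eqref{p:sol-Gdown}, then $\ord(x)\ord(y)\in E(\Gdown)\subseteq E(\ord(G+\sol'))$ forces $xy\in E(G+\sol')$, hence $xy\in \sol'$. If it was added by clause~\eqref{p:sol-cap}, then the witnessing pair $x'y'\in E(G)\subseteq E(G+\sol')$ and the alternative ``nested pair'' formulation of the umbrella property applied to $\ord(x')\le\min(\ord(x),\ord(y))\le\max(\ord(x),\ord(y))\le\ord(y')$ force $xy\in E(G+\sol')$, so again $xy\in \sol'$. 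I don't expect a real obstacle here; the only subtlety to be careful about is bookkeeping whether a pair already lies in $E(G)$ when checking membership in $\sol^\ord$, which I would handle by keeping the dichotomy $xy\in E(G)$ vs.\ $xy\in\sol^\ord$ explicit throughout.
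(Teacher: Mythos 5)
Your proposal is correct and follows essentially the same route as the paper: the same three-case analysis of why $ac$ lies in $E(G+\sol^\ord)$ (reusing the witnessing pair, or the umbrella property of the identity on $\Gdown$), and the same minimality argument that clause-\eqref{p:sol-Gdown} edges are forced by the $\Gdown$-subgraph condition while clause-\eqref{p:sol-cap} edges are forced by the umbrella requirement. No gaps.
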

\begin{proof}
The claim that $\Gdown$ is a subgraph of $\ord(G)$ is straightforward from the definition, as we explicitely add the edges of $E(\Gdown)$.
We now show that $\ord$ is an umbrella ordering of $G+\sol^\ord$. To this end, consider a triple $a,b,c \in V(G)$ with $\ord(a) < \ord(b) < \ord(c)$
and $ac \in E(G + \sol^\ord)$. We consider three cases, depending on the reason why $ac \in E(G+\sol^\ord)$.

If $ac \in E(G)$ then, by the second criterion of belonging to $\sol^\ord$, we have that $ab \in \sol^\ord$ unless $ab \in E(G)$, and $bc \in \sol^\ord$ unless $bc \in E(G)$.
Similarly, if $ac \in \sol^\ord$ because of the second criterion for belonging to $\sol^\ord$, then there exist   $a',c' \in V(G)$ with $a'c'\in E(G)$ and
$\ord(a') \leq \ord(a) < \ord(c) \leq \ord(c')$; clearly $a',c'$ also witness that $ab,bc \in E(G) \cup \sol^\ord$.
Finally, if $\ord(a)\ord(c) \in E(\Gdown)$, then the assumption that $\Gdown$ is a proper interval graph with identity being an umbrella ordering implies
that $\ord(a)\ord(b) \in E(\Gdown)$ and $\ord(b)\ord(c) \in E(\Gdown)$. Consequently, the umbrella property is satisfied for the triple $a,b,c$, and
$\ord$ is an umbrella ordering for $G+\sol^\ord$.

To show the second claim of the lemma, simply observe that every completion $\sol$ of $G$ for which $\Gdown$  is a subgraph of $\ord(G+F^{\ord})$   contains the edges
of $\sol^\ord$  falling into the first criterion, whereas every completion $\sol$ of $G$ for which $\ord$ is an umbrella ordering  contains the edges
of $\sol^\ord$ that fall into the second criterion.
\end{proof}
Hence, Lemma~\ref{lem:unique-sandwich-sol} allows us to use the notion of the  cost of
an ordering $\ord$ (instead of the  cost of a pair $(\ord,\sol)$ or completion $\sol$), where we use
the completion $\sol^\ord$. That is, we denote $\cost(\ord) = |\sol^\ord|$.

We say that an ordering $\ord$ is \emph{feasible} if $\ord(u) \in \pos_u$ for each $u \in V(G)$ and additionally
$E(\ord(G)) \subseteq E(\Gup)$. It is straightforward to verify using Lemma~\ref{lem:intersection-union}, minimality of $\sol^\ord$, and the fact that $\ord$ is an umbrella ordering of $\Gup$, that
the second condition for $\ord$ being feasible is equivalent to $E(\ord(G+\sol^\ord)) \subseteq E(\Gup)$.
Hence, by Lemma~\ref{lem:unique-sandwich-sol}, the \spic{} problem may equivalently ask for a feasible ordering $\ord$ of cost at most $k$.

Finally, observe that \spic{} is a generalization of \picname{}, as we may take $\pos_u = \{1,2,\ldots,|V(G)|\}$ for each $u \in V(G)$,
  $\Gdown$ to be edgeless and $\Gup$ to be a complete graph.
Note that for such an instance, any ordering of $V(G)$ is feasible.
In this way, given a \picname{} instance $(G,k)$
and an ordering $\ord$ of $V(G)$, the notions of $\sol^\ord$ and $\cost(\ord)$ are well-defined.
Hence, the \picname{} problem equivalently asks for an ordering $\ord$ of
cost at most $k$, that is, for which $|\sol^\ord| \leq k$.

We now set up a few more notions.
For a completion $\sol$ of $G$ and a vertex $v \in V(G)$ by $\incF{v}$ we denote the set of edges
$e \in \sol$ that are incident with $v$.
We extend this notion to vertex sets $X \subseteq V(G)$ by $\incF{X} = \bigcup_{v \in X} \incF{v}$.

For a \spic{} instance $(G,k,(\pos_u)_{u \in V(G)},\Gdown,\Gup)$ and a feasible ordering $\ord$
we denote $G^\ord := G+\sol^\ord$.
We extend the notion of feasibility and of $\sol^\ord$ to orderings $\ord$ of subsets of $V(G)$ in the following natural manner.
If $X \subseteq V(G)$ and $\ord: X \to \{1,2,\ldots,|V(G)|\}$ is injective, then $\ord$ is feasible if and only if $\ord(u)\in \pos_u$ for each $u\in X$ and $E(\ord(G[X]))\subseteq E(\Gup)$. The set $\sol^\ord$ is defined as follows:
$xy \in \sol^\ord$ if and only if $x,y \in X$, $xy \notin E(G)$, but either $\ord(x)\ord(y) \in E(\Gdown)$
or there exists an edge $x'y' \in E(G[X])$
with $\ord(x') \leq \min(\ord(x), \ord(y)) < \max(\ord(x),\ord(y)) \leq \ord(y')$. Again, the same argument shows that the second condition of feasibility is equivalent to $E(\ord(G[X]+\sol^\ord))\subseteq E(\Gup)$.

We use the assumed fixed ordering $\ord_0$ to canonize a solution of a \spic{} instance $(G,k,(\pos_u)_{u \in V(G)},\Gdown,\Gup)$. 
An ordering $\ord$ of $V(G)$ is called the \emph{canonical umbrella ordering} of
$(G,k,(\pos_u)_{u \in V(G)},\Gdown,\Gup)$
if $\ord$ is feasible, its cost is minimum possible, and $\ord$ is lexicographically smallest with this property.
This notion projects to the notion of a canonical umbrella ordering of a graph $G$ by taking again $\pos_u = \{1,2,\ldots,n\}$ for any $u \in V(G)$, $\Gdown$ to be edgeless and $\Gup$ to be a complete graph.
Observe that this notion thus extends the notion of canonical umbrella ordering for proper interval
graphs, as in the case of a proper interval graph the unique minimum completion is empty.

The associated completion $\sol^\ord$ with the canonical umbrella ordering $\ord$ is called the \emph{canonical completion}.
If additionally the cost of $\ord$ is at most $k$, we call $\ord$ the \emph{canonical solution}
to the \spic{} instance $(G,k,(\pos_u)_{u \in V(G)},\Gdown,\Gup)$, or, in the special case, to a \picname{} instance $(G,k)$.

\paragraph{A polynomial kernel.} Our starting point for the proof of Theorem~\ref{thm:main}
is the polynomial kernel for \picname{} due to Bessy and Perez. 

\begin{theorem}[\cite{pic-kernel}]\label{thm:kernel}
\picname{} admits a kernel with $\Oh(k^3)$ vertices computable in time $\Oh(nm(kn+m))$.
\end{theorem}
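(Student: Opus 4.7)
The plan is to reproduce the kernelization blueprint of Bessy and Perez~\cite{pic-kernel}, which the paper treats as a black box. The standard approach for producing a polynomial kernel for a completion problem to a hereditary graph class characterized by forbidden induced subgraphs is to combine obstruction-based reduction rules with a modular-decomposition shrinking step, and this is the route I would follow.

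First, I would set up easy reduction rules: disconnect the instance into connected components and handle each independently, discard components that are already proper interval graphs, and answer trivially if the kernel budget is exceeded. Then I would use the Kaplan--Shamir--Tarjan characterization that proper interval graphs are exactly chordal graphs with no induced claw, tent, or net. The idea is that any short obstruction (an induced $C_\ell$ for $\ell\ge 4$, claw, tent, or net) must be \emph{hit} by some edge of an optimal completion $\sol$; hence short obstructions restrict the set of \emph{affected} vertices, those incident to $\sol$. Since $|\sol|\le k$, only $\Oh(k)$ vertices are affected, and the challenge is to bound the surrounding unaffected structure.

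The heart of the argument is a modular-decomposition reduction. In any umbrella ordering of a proper interval graph, a module of true twins must be placed contiguously, so if an equivalence class $M$ of true twins contains more than a polynomial-in-$k$ number of vertices and none of them is touched by the completion, we may delete the excess without altering the optimum: given an umbrella ordering of the reduced instance, we reinsert the deleted twins next to the kept representatives, and since twins have the same neighbourhood the umbrella property is preserved at no extra cost. The number of such twin classes is in turn bounded by relating them to the structure around affected vertices: each class is determined by its intersection pattern with the $\Oh(k)$ affected vertices and by its relative position among other classes in the (partially determined) canonical clique path. A careful counting yields $\Oh(k^2)$ classes, each reduced to $\Oh(k)$ vertices, giving the $\Oh(k^3)$ bound.

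The main obstacle in my view is proving the safe-reduction lemma rigorously, i.e., that shrinking twin-modules down to $\Oh(k)$ representatives preserves the exact optimum, not just a constant-factor approximation. The subtlety is that an affected vertex inside or adjacent to a large twin module may interact with only some of its vertices, and one must argue via an exchange argument that one can always reroute completion edges so that only the kept representatives are used, which in turn relies on the umbrella property behaving monotonically under twin-contiguity. The quoted running time $\Oh(nm(kn+m))$ then falls out of iteratively detecting short obstructions, maintaining modular decomposition, and applying each reduction rule in polynomial time.
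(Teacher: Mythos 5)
There is nothing in the paper to compare your argument against: Theorem~\ref{thm:kernel} is not proved in this paper at all, it is imported as a black box from Bessy and Perez~\cite{pic-kernel}, and the authors only invoke it once, at the start, to reduce to the case $n=\Oh(k^3)$. So the only question is whether your sketch would stand as a proof of the cited result on its own, and as written it does not: it is a blueprint in which the two load-bearing steps are asserted rather than established. The first is the safe-shrinking lemma for large true-twin classes, which you yourself flag as the ``main obstacle''; the exchange argument (rerouting completion edges so that only kept representatives are touched, and reinserting deleted twins next to an unaffected representative without increasing cost) is plausible but is precisely the kind of statement that needs a careful umbrella-ordering argument, not a remark.

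The second, and more serious, gap is the counting step. You claim that the unaffected twin classes number $\Oh(k^2)$ because each is ``determined by its intersection pattern with the $\Oh(k)$ affected vertices and by its relative position in the canonical clique path.'' A priori there are exponentially many intersection patterns with a set of $\Oh(k)$ vertices, and the clique path / umbrella ordering of the \emph{completed} graph is not available before the completion is known, so ``relative position among other classes'' is not a well-defined quantity you can enumerate at kernelization time. Making this counting work is exactly where the real structural effort of the kernelization lies, and the actual argument of Bessy and Perez is organized differently: rather than bounding twin classes via modular decomposition, they design reduction rules that shrink large cliques and so-called branches --- induced subgraphs that already admit a proper interval (umbrella) structure and attach to the rest of the graph only through their extremities --- and then charge the surviving vertices to the at most $2k$ vertices incident to a solution; this is also where the stated $\Oh(nm(kn+m))$ running time comes from. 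So your proposal neither matches the route of~\cite{pic-kernel} nor closes its own key steps; to repair it you would have to prove the twin-class counting bound, which currently has no supporting argument.
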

That is, in time $\Oh(nm(kn+m))$  we can  construct an equivalent instance of   \picname{} with $\Oh(k^3)$ vertices.

The algorithm of Theorem~\ref{thm:main} starts with applying the kernelization
algorithm of Theorem~\ref{thm:kernel}. This step contributes $\Oh(nm(kn+m))$ to the running
time, and all further computation will take $k^{\Oh(k^{2/3})}$ time, yielding
the promised time bound. 
Hence, in the rest of the paper we assume that we are given a \picname{} instance $(G,k)$
with $n = |V(G)| = \Oh(k^3)$, and we are targeting at the canonical umbrella ordering
of $G$ provided that it yields a completion of size at most $k$.
Moreover, we assume that $G$ is connected, as we may otherwise solve each connected component
of $G$ independently, determining in each component the size of minimum possible solution.

\paragraph{Lexicographically minimum perfect matching.}
In a few places we need the following greedy procedure to find some canonical object.
\begin{lemma}\label{lem:lex-match}
Given two linearly ordered sets $X = \{x_1 \prec x_2  \prec \cdots  \prec x_s\}$ and $Y = \{y_1 \prec y_2  \prec \cdots  \prec y_s\}$,
and allowed sets $A_i \subseteq Y$ for each $1 \leq i \leq s$,
one can in polynomial time either 
find a bijection $f: X \to Y$ that satisfies
\begin{equation}\label{eq:lex-match-cond}
f(x_i) \in A_i\qquad\mathrm{for\ any\ }1 \leq i \leq s
\end{equation}
and, subject to~\eqref{eq:lex-match-cond},
yields lexicographically minimum sequence $(f(x_1),f(x_2),\ldots,f(x_s))$, or correctly conclude that such a bijection does not exist.
\end{lemma}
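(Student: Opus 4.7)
The plan is a standard greedy-with-feasibility-check construction. I will process the positions $i = 1, 2, \ldots, s$ in order, and at step $i$ I will commit to the smallest (with respect to $\prec$) element $y \in A_i$ that is still available (i.e.\ not used as $f(x_j)$ for some $j < i$) \emph{and} is such that the remaining partial problem, restricted to indices $i+1, \ldots, s$ and to $Y$ minus the already used elements (including $y$), still admits a completion satisfying~\eqref{eq:lex-match-cond}. If no such $y$ exists at step $i = 1$, then no valid bijection exists at all and we output this conclusion; otherwise, after $s$ steps the constructed $f$ is the desired bijection.

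The key subroutine is the feasibility test. Deciding whether a partial assignment can be extended to a full bijection respecting~\eqref{eq:lex-match-cond} is exactly the question of existence of a perfect matching in the bipartite graph whose left side is the set of unassigned indices, whose right side is the set of unused elements of $Y$, and whose edges are given by the (restricted) allowed sets $A_j$. This can be solved in polynomial time by any standard bipartite matching algorithm (e.g.\ Hopcroft--Karp). Since at every position $i$ we try at most $|A_i| \le s$ candidates before committing, and for each candidate we call the matching routine once, the total running time is polynomial in $s$.

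Correctness of the greedy is the standard exchange argument for lex-minimum solutions: suppose for contradiction that there is a valid bijection $g$ whose associated sequence is lexicographically strictly smaller than the one we produced. Let $i$ be the first index where they differ; then $g(x_i) \prec f(x_i)$ in $Y$. But $g(x_i) \in A_i$, $g(x_i)$ is unused by our construction up to step $i$ (since $f$ and $g$ agree on $x_1, \ldots, x_{i-1}$), and the remaining assignment $g|_{\{x_{i+1}, \ldots, x_s\}}$ witnesses that setting $f(x_i) := g(x_i)$ still leaves a feasible completion. This contradicts the choice rule, which would have picked $g(x_i)$ instead of $f(x_i)$ at step $i$.

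The only mildly delicate point—not really an obstacle, but worth flagging—is to make sure the feasibility test is invoked \emph{after} tentatively marking $y$ as used, so that we test extendability of the assignment \emph{committing} $y$ at position $i$, rather than merely extendability of what was committed before step $i$. With this convention the greedy rule is well defined and the argument above goes through verbatim.
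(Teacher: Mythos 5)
Your proposal is correct and essentially the same as the paper's proof: both fix $f(x_i)$ greedily for $i=1,\ldots,s$, choosing the smallest still-available candidate in $A_i$ for which the remaining instance still has a perfect matching in the natural bipartite graph, i.e., the self-reducibility of bipartite perfect matching. Your explicit exchange argument and the remark about testing feasibility after tentatively committing $y$ are just spelled-out versions of what the paper leaves as "straightforward to verify."
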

\begin{proof}
We model the task of satisfying the condition~\eqref{eq:lex-match-cond}
as a problem of finding a perfect matching in a bipartite graph, which can be solved in polynomial time.
We construct an auxiliary bipartite graph $H$ with bipartition classes $X$ and $Y$, and make each $x_i \in X$ adjacent to all $y_j \in A_i$.
Clearly, any perfect matching in $H$ corresponds to a bijection $f$ satisfying~\eqref{eq:lex-match-cond}.

To obtain the lexicographically minimum sequence $(f(x_1),f(x_2),\ldots,f(x_s))$, we use the self-reducibility of the task of finding a perfect matching.
That is, for each $i=1,2,\ldots,s$ we try to match $x_i$. 
When we consider $x_i$, we try each $j=1,2,\ldots,s$ and, whenever $y_j$ is yet unmatched and $y_j \in A_i$, we temporarily match $x_i$ with $y_j$ and compute whether the subgraph induced by the currently unmatched vertices contains a perfect matching.
If this is true, we fix the match $f(x_i)=y_j$, and otherwise we proceed to the next vertex $y_j$.
It is straightforward to verify that this procedure indeed yields $f$ as desired.
\end{proof}

\section{Expensive vertices}\label{sec:exp}
Recall that we are given a \picname{} instance $(G,k)$
and we want to reason about its canonical umbrella ordering, denoted $\ord$, provided
that $(G,k)$ is a YES-instance.
In this section we deal with vertices that are incident with many edges of $\sol^\ord$.
Formally, we set a threshold $\tau := (2k)^{1/3}$ and say that a vertex $v$
is \emph{expensive} with respect to $\ord$ if $|\incsol{\sol^\ord}{v}| > \tau$,
and \emph{cheap} otherwise.
Note that there are at most $(2k)^{2/3} = \tau^2$ expensive vertices, and given that
$|V(G)|$ is bounded polynomially in $k$, we may afford guessing a lot of information about
expensive vertices within the promised time bound.
Our goal is to get rid of expensive vertices, at the cost of turning
our \picname{} instance $(G,k)$ into a \spic{} instance.

More formally, we branch into $k^{\Oh(k/\tau)} = k^{\Oh(k^{2/3})}$ subcases, considering all possible
values for the following (see also Figure~\ref{fig:expensive}).
\begin{enumerate}
\item A set $V_\$ \subseteq V(G)$ of all expensive vertices with respect to $\ord$.
\item For every $v \in V_\$$, integers $p_v$, $p_v^L$ and $p_v^R$ satisfying
$p_v = \ord(v)$, $p_v^L = \min \{\ord(w) : w \in N_{G^\ord}[v]\}$
and $p_v^R = \max \{\ord(w): w \in N_{G^\ord}[v]\}$.
\end{enumerate}
In each branch, we look for the canonical minimum solution to the instance $(G,k)$,
assuming that the aforementioned guess is a correct one.
The \emph{correct branch} is the one where this assumption is indeed true.

\begin{figure}
\centering
\includegraphics{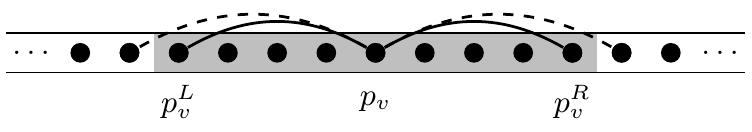}
\caption{The definition of values $p_v$, $p_v^L$ and $p_v^R$ for an expensive vertex $v$.
  The gray area denotes $N_{G^\ord}[v]$.}
\label{fig:expensive}
\end{figure}

We now perform some cleanup operations.
First, observe that from the definition of an umbrella ordering it follows
that in the correct branch
$w \in N_{G^\ord}[v]$ if and only if $p_v^L \leq \ord(w) \leq p_v^R$.
In particular, $p_v^L \leq p_v \leq p_v^R$.
Consider now a pair $v_1,v_2 \in V_\$$ and observe the following.
If $p_{v_1} \leq p_{v_2}$ then the properties of an umbrella ordering implies that
$p_{v_1}^L \leq p_{v_2}^L$ and
$p_{v_1}^R \leq p_{v_2}^R$.
Hence, we terminate all the branches where any of these inequalities is not satisfied,
or where $p_{v_1} = p_{v_2}$ for some $v_1 \neq v_2$.

Furthermore, note that in the correct branch we have $v_1v_2 \in E(G^\ord)$ iff
$p_{v_2} \in [p_{v_1}^L, p_{v_1}^R]$ and $p_{v_1} \in [p_{v_2}^L, p_{v_2}^R]$,
and $v_1v_2 \notin E(G^\ord)$ iff \emph{neither} of the two aforementioned
inclusions hold. Thus, we terminate the branch if \emph{exactly one}
of these inclusions holds, or if $v_1v_2 \in E(G)$ and at least one of them
does not hold.

Denote $\pos_\$ = \{p_v: v \in V_\$\}$ to be the set of positions guessed to be used by the expensive
vertices, and $\pos = \{1,2,\ldots,n\} \setminus \pos_\$$ to be the set of the remaining positions.
For  every $1 \leq i \leq |\pos|$, by $\posmap(i)$ we denote the $i$-th position of $\pos$.
Define also $\ord_\$: V_\$ \to \pos_\$$ as $\ord_\$(v) = p_v$.

We compute a set $\sol_\$$ consisting of all (unordered) pairs $v_1,v_2 \in V_\$$ such that
$v_1v_2 \notin E(G)$, but $p_{v_2} \in [p_{v_1}^L,p_{v_1}^R]$, that is,
the guessed values imply that $v_1v_2 \in E(G^\ord)$ and, consequently,
$\sol_\$ = \sol^\ord \cap \binom{V_\$}{2}$ in the correct branch.
Observe the following.
\begin{lemma}\label{lem:exp-sol}
In all branches
$\sol_\$$ is a completion of $G[V_\$]$, and $\ord_\$$, treated as an ordering
of $V_\$$, is an umbrella ordering of $G[V_\$] + \sol_\$$.
\end{lemma}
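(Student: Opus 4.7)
The plan is to unfold the definitions and use the monotonicity of the numbers $p_v^L, p_v^R$ enforced by the branch-termination rules, and then conclude via the umbrella characterization. The key observation is that in every surviving branch the relation of being adjacent in $G[V_\$]+\sol_\$$ is cleanly described by the intervals $[p_v^L,p_v^R]$.

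First, $\sol_\$$ is disjoint from $E(G)$ by the very definition of $\sol_\$$. It remains to show that $\ord_\$$ is an umbrella ordering of $G[V_\$]+\sol_\$$; the fact that $G[V_\$]+\sol_\$$ is a proper interval graph then follows from the theorem of Looges and Olariu. Recall that the branch-termination rules guarantee: (a) $p_v^L \le p_v \le p_v^R$ for every $v \in V_\$$; (b) if $p_{v_1} \le p_{v_2}$, then $p_{v_1}^L \le p_{v_2}^L$ and $p_{v_1}^R \le p_{v_2}^R$; and (c) for any distinct $v_1,v_2 \in V_\$$, either both of the inclusions $p_{v_2}\in[p_{v_1}^L,p_{v_1}^R]$ and $p_{v_1}\in[p_{v_2}^L,p_{v_2}^R]$ hold, or neither of them does (the mixed case is terminated); moreover if $v_1v_2 \in E(G)$ then we are in the former case. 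Combining (c) with the definition of $\sol_\$$, we conclude that in every surviving branch one has $v_1v_2 \in E(G[V_\$])\cup\sol_\$$ if and only if both inclusions hold.

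Now take $a,b,c \in V_\$$ with $p_a < p_b < p_c$ and $ac \in E(G[V_\$]+\sol_\$)$. By the equivalence above, $p_c \in [p_a^L,p_a^R]$ and $p_a \in [p_c^L,p_c^R]$. For the pair $a,b$ we verify both inclusions: $p_b \ge p_a \ge p_a^L$ and $p_b \le p_c \le p_a^R$, so $p_b \in [p_a^L,p_a^R]$; and $p_a \ge p_c^L \ge p_b^L$ (using monotonicity (b) and $p_b \le p_c$) while $p_a \le p_b \le p_b^R$, so $p_a \in [p_b^L,p_b^R]$. Symmetrically, for $b,c$: $p_c \ge p_b \ge p_b^L$ and $p_c \le p_a^R \le p_b^R$ (using (b) and $p_a \le p_b$), so $p_c \in [p_b^L,p_b^R]$; and $p_b \ge p_a \ge p_c^L$ while $p_b \le p_c \le p_c^R$, so $p_b \in [p_c^L,p_c^R]$. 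In both cases both inclusions are satisfied, hence by the equivalence from the previous paragraph $ab, bc \in E(G[V_\$])\cup\sol_\$$. Thus $\ord_\$$ satisfies the umbrella property for every triple, which completes the proof.

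The only delicate point is making sure that the monotonicity inequalities of (b) and the dichotomy of (c) are enough to push the two required inclusions through for the pairs $(a,b)$ and $(b,c)$; this is the only real content of the argument and is an elementary interval-arithmetic check once the termination rules are laid out explicitly.
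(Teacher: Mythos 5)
Your proof is correct and follows essentially the same route as the paper: derive from the clean-up rules and the definition of $\sol_\$$ that $ac \in E(G)\cup\sol_\$$ forces $p_c \in [p_a^L,p_a^R]$ and $p_a \in [p_c^L,p_c^R]$, and then conclude by interval arithmetic that the middle vertex $b$ satisfies the required inclusions. The only (harmless) extra work is your use of the monotonicity rule to verify \emph{both} inclusions for each of the pairs $(a,b)$ and $(b,c)$; since $\sol_\$$ is defined so that a single inclusion already yields membership, the paper gets by with just $p_b \in [p_a^L,p_a^R]\cap[p_c^L,p_c^R]$.
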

\begin{proof}
Consider any $a,b,c \in V_\$$ with $\ord_\$(a) < \ord_\$(b) < \ord_\$(c)$.
If $ac \in E(G) \cup \sol_\$$ then it follows from the clean-up operations and the definition of $\sol_\$$ that $\ord_\$(c) \in [p_a^L,p_a^R]$ and $\ord_\$(a) \in [p_c^L,p_c^R]$.
Recall that $\ord_\$(a) \in [p_a^L,p_a^R]$ and $\ord_\$(c) \in [p_c^L,p_c^R]$. 
Hence, $\ord_\$(b) \in [\ord_\$(a),\ord_\$(c)] \subseteq [p_a^L,p_a^R] \cap [p_c^L,p_c^R]$
and $ab,bc \in E(G) \cup \sol_\$$.
\end{proof}

Consider now a vertex $u \notin V_\$$.
For any $v \in V_\$$, if $uv \in E(G)$ then in the correct branch $\ord(u) \in [p_v^L,p_v^R]$.
This motivates us to define:
$$\pos_u = \posmap^{-1}\left(\pos \cap \bigcap_{v \in V_\$ \cap N_G(u)} [p_v^L,p_v^R]\right).$$
Observe that in the correct branch $\posmap^{-1}(\ord(u)) \in \pos_u$.

Furthermore, observe that, in the correct branch, if $uv \notin E(G)$ for some $u \notin V_\$$ and $v\in V_\$$, then exactly one of the following holds:
$uv \in \sol^\ord$ or $\ord(u) \notin [p_v^L,p_v^R]$. In other words, a vertex $v \in V_\$$ has degree exactly $p_v^R-p_v^L$ in
the graph $G^\ord$.
This motivates us to define the following cost value for every branch:
$$\cost_\$ = -|\sol_\$| + \sum_{v \in V_\$}( (p_v^R-p_v^L) - \deg_G(v)).$$
Observe that this cost function is actually meaningful for every branch:
\begin{lemma}\label{lem:exp-cost}
Let $\ord'$ be an  ordering  of $V(G)$ and $\sol$ be a completion of $G$ such that
\begin{itemize}
 \item[(i)]  $\ord'$ is an umbrella ordering
of $G+\sol$, and 
\item[(ii)]  for every $v \in V_\$$ we have
$\ord'(v) = p_v$ and $\ord'(N_{G+\sol}[v]) = [p_v^L,p_v^R]$.
Then there are exactly $\cost_\$$ edges of $\sol$ that are incident  with $V_\$$.
\end{itemize}
\end{lemma}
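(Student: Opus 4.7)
The plan is to split the edges of $\sol$ incident with $V_\$$ according to how many of their endpoints lie in $V_\$$, and evaluate each part from conditions (i) and (ii). Write $\sol_{\mathrm{in}} = \sol \cap \binom{V_\$}{2}$ for the edges with both endpoints expensive and $\sol_{\mathrm{out}}$ for the edges with exactly one expensive endpoint. The target quantity is $|\sol_{\mathrm{in}}| + |\sol_{\mathrm{out}}|$. The natural route is to compute the total $\sum_{v \in V_\$} |\incF{v}|$, which by double-counting equals $2|\sol_{\mathrm{in}}| + |\sol_{\mathrm{out}}|$, and then subtract $|\sol_{\mathrm{in}}|$ after identifying $\sol_{\mathrm{in}}$ with $\sol_\$$.

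For the degree sum, condition (ii) pins down $\deg_{G+\sol}(v) = p_v^R - p_v^L$ for each $v \in V_\$$, since $\ord'(N_{G+\sol}[v]) = [p_v^L, p_v^R]$ contains $p_v^R - p_v^L + 1$ positions, one of which is $p_v$ itself. Hence $|\incF{v}| = \deg_{G+\sol}(v) - \deg_G(v) = (p_v^R - p_v^L) - \deg_G(v)$, and summing yields
\[
\sum_{v \in V_\$} |\incF{v}| \;=\; \sum_{v \in V_\$} \bigl((p_v^R - p_v^L) - \deg_G(v)\bigr).
\]
This step is a routine degree count once (ii) has been unpacked.

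The more conceptual part is the equality $\sol_{\mathrm{in}} = \sol_\$$. Condition (ii) applied to $v_1 \in V_\$$ states that for any distinct $v_1, v_2 \in V_\$$, the adjacency $v_1 v_2 \in E(G+\sol)$ holds iff $p_{v_2} = \ord'(v_2) \in [p_{v_1}^L, p_{v_1}^R]$. The definition of $\sol_\$$ uses the same one-sided test, and its well-definedness on unordered pairs is ensured by the cleanup, which terminates every branch in which the containments $p_{v_2} \in [p_{v_1}^L, p_{v_1}^R]$ and $p_{v_1} \in [p_{v_2}^L, p_{v_2}^R]$ disagree. This reduces $\sol_{\mathrm{in}} = \sol_\$$ to bookkeeping; assumption (i) is used only implicitly, to guarantee that $\ord'$ is an umbrella ordering so that $N_{G+\sol}[v]$ has the interval shape postulated in (ii).

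Combining the two steps,
\[
|\sol_{\mathrm{in}}| + |\sol_{\mathrm{out}}| \;=\; \sum_{v \in V_\$} |\incF{v}| - |\sol_{\mathrm{in}}| \;=\; \sum_{v \in V_\$} \bigl((p_v^R - p_v^L) - \deg_G(v)\bigr) - |\sol_\$| \;=\; \cost_\$,
\]
which is the claimed identity. I expect the only mildly subtle point to be invoking the cleanup correctly inside the identification $\sol_{\mathrm{in}} = \sol_\$$; everything else is an arithmetic simplification of the degree sum.
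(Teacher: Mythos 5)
Your proof is correct and follows essentially the same route as the paper's: the degree of each $v\in V_\$$ in $G+\sol$ is forced to be $p_v^R-p_v^L$ by (ii), the sum over $V_\$$ double-counts edges inside $V_\$$, and those edges are exactly $\sol_\$$, yielding $\cost_\$$. You merely make explicit the identification $\sol\cap\binom{V_\$}{2}=\sol_\$$ (which indeed follows directly from (ii) applied to each endpoint, even without invoking the cleanup), a step the paper states without elaboration.
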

\begin{proof}
Observe that the degree of $v \in V_\$$ in $G+\sol$ is exactly $p_v^R-p_v^L$. Hence, exactly $p_v^R-p_v^L - \deg_G(v)$
edges of $\sol$ are incident with $v$
and the sum $\sum_{v \in V_\$} ((p_v^R-p_v^L) - \deg_G(v))$ counts the edges of $\sol$ incident with $V_\$$,
but double-counts the edges of $\sol$ with both endpoints in $V_\$$. However, 
the set of double-counted edges is exactly $\sol \cap \binom{V_\$}{2} = \sol_\$$. The lemma follows.
\end{proof}

We define graphs $\Gdown$ and $\Gup$ with vertex set $\{1,2,\ldots,|\pos|\}$ as follows.
For   $1 \leq i < j \leq |\pos|$, we set
$ij \in E(\Gdown)$ if and only if  there is a \emph{witness} vertex $x \in V_\$$ such that either  $p_x^L \leq \posmap(i) < \posmap(j) < p_x$, or  $p_x < \posmap(i) < \posmap(j) \leq p_x^R$. For $\Gup$, we set
$ij \notin E(\Gup)$ if and only if there exists a \emph{witness} vertex $y \in V_\$$ 
such that either 
$\posmap(i) < p_y^L \leq p_y < \posmap(j)$, or $\posmap(i) < p_y \leq p_y^R < \posmap(j)$.

The next lemma shows  that $\Gdown$ and $\Gup$ satisfy the requirements for being a part of a \spic{} instance.
\begin{lemma}\label{lem:exp-down-up}
Both $\Gdown$ and $\Gup$ are proper interval graphs and the identity is an umbrella ordering
of both of them. Moreover, in the correct branch
$E(\Gdown) \subseteq E(\posmap^{-1}((\ord(G^\ord))[\pos])) \subseteq E(\Gup)$.
\end{lemma}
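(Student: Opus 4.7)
The plan is to establish the lemma in two blocks: first the umbrella property (and hence proper-interval status) of $\Gdown$ and $\Gup$, which is a purely combinatorial check on the witness definitions; then the two sandwich inclusions, which additionally use that $\ord$ is an umbrella ordering of $G^\ord$ together with the clean-up guarantees on the neighbourhoods of expensive vertices.

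For the umbrella property of $\Gdown$, I take a triple $a<b<c$ with $ac \in E(\Gdown)$ witnessed by some $x \in V_\$$. Since $\posmap$ is order-preserving, $\posmap(b)$ lies strictly between $\posmap(a)$ and $\posmap(c)$, hence inside whichever of the two half-intervals $[p_x^L, p_x-1]$ or $[p_x+1, p_x^R]$ already contains $\posmap(a)$ and $\posmap(c)$; the same $x$ then witnesses both $ab$ and $bc$. For $\Gup$ I argue the contrapositive: if $y$ witnesses the non-edge $ab$ with, say, $\posmap(a) < p_y^L \leq p_y < \posmap(b)$, then for any $c>b$ the inequality $\posmap(a) < p_y^L \leq p_y < \posmap(c)$ also holds, so $y$ witnesses $ac \notin E(\Gup)$; the analogous sub-cases (shrinking the left endpoint, and the alternative form using $p_y^R$) go through identically. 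Therefore for $a<b<c$ with $ac \in E(\Gup)$ we cannot have $ab \notin E(\Gup)$ or $bc \notin E(\Gup)$, so the umbrella property holds, and the Looges--Olariu characterisation makes both graphs proper interval.

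For the sandwich inclusions, work in the correct branch and write $u_i := \ord^{-1}(\posmap(i))$ for $i \in \{1, \ldots, |\pos|\}$. The clean-up steps ensure that $\ord(x) = p_x$ and $N_{G^\ord}[x] = \ord^{-1}([p_x^L, p_x^R])$ for every $x \in V_\$$. For the lower inclusion, take $ij \in E(\Gdown)$ with witness $x$, say satisfying $p_x^L \leq \posmap(i) < \posmap(j) < p_x$; then $u_ix, u_jx \in E(G^\ord)$ and $\ord(u_i) < \ord(u_j) < \ord(x)$, so the umbrella property of $\ord$ on $G^\ord$ forces $u_iu_j \in E(G^\ord)$, which is precisely $ij \in E(\posmap^{-1}((\ord(G^\ord))[\pos]))$; the opposite half-interval case is symmetric. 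For the upper inclusion I use the contrapositive: if $ij \notin E(\Gup)$ is witnessed by $y$ with, say, $\posmap(i) < p_y^L \leq p_y < \posmap(j)$, then $u_iy \notin E(G^\ord)$ (since $\ord(u_i) = \posmap(i) < p_y^L$) while $\ord(u_i) < \ord(y) < \ord(u_j)$; were $u_iu_j \in E(G^\ord)$, the umbrella property applied to the triple $(u_i, y, u_j)$ would force $u_iy \in E(G^\ord)$, a contradiction. The other case, where $y$ sits on the right of both $u_i$ and $u_j$, is symmetric.

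The only real obstacle is bookkeeping: each witness definition splits into two symmetric sub-cases, and one must verify in each that the appropriate interval inclusions are preserved as the endpoint indices move inward (for $\Gdown$) or outward (for $\Gup$). Conceptually, the lemma just says that $\Gdown$ and $\Gup$ correctly encode, respectively, the edges between cheap positions that are forced and those that are forbidden by the guessed data together with the umbrella property of the target ordering.
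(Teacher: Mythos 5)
Your proof is correct and follows essentially the same route as the paper: the umbrella property of $\Gdown$ and $\Gup$ is checked directly from the monotonicity of the witness conditions, and the two sandwich inclusions are derived in the correct branch from the umbrella property of $\ord$ on $G^\ord$ together with the guarantee $N_{G^\ord}[x]=\ord^{-1}([p_x^L,p_x^R])$ for $x\in V_\$$. The only cosmetic difference is that for the lower inclusion you apply the umbrella property to the edge $u_i x$ while the paper uses the edges at positions $(p_x^L,p_x)$ and $(p_x,p_x^R)$; the substance is identical.
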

\begin{proof}
For the first claim, observe that in the case of $\Gdown$, for every edge $ij \in E(\Gdown)$
with $i < j$,
its witness $x$ also witnesses that $i'j' \in E(\Gdown)$ for every $i \leq i' < j' \leq j$.
Similarly, in the case of $\Gup$, for any nonedge $ij \notin E(\Gup)$ with $i < j$,
its witness $y$ also witnesses that $i'j' \notin E(\Gup)$ for each $i' \leq i < j \leq j'$.

We now move to the second claim, so assume we are in the correct branch.
For $\Gdown$, observe that if $ij \in E(\Gdown)$,  then
$\ord^{-1}(\posmap(i))\ord^{-1}(\posmap(j)) \in E(G^\ord)$
by the umbrella property as $\ord^{-1}(p_x^L)\ord^{-1}(p_x) \in E(G^\ord)$
and $\ord^{-1}(p_x)\ord^{-1}(p_x^R) \in E(G^\ord)$.
For $\Gup$,  if $i,j$
are such that $\ord^{-1}(\posmap(i))\ord^{-1}(\posmap(j)) \in E(G^\ord)$
and $\posmap(i) < p_y < \posmap(j)$ for some $y \in V_\$$, then by the umbrella property we have that 
$y\ord^{-1}(\posmap(i)),y\ord^{-1}(\posmap(j)) \in E(G^\ord)$ and consequently
$p_y^L \leq \posmap(i) < p_y < \posmap(j) \leq p_y^R$. Since $y$ was chosen arbitrarily, it follows that $ij \in E(\Gup)$ and the lemma follows.
\end{proof}
By Lemma~\ref{lem:exp-down-up}, we may terminate the branches where $\Gdown$
is not a subgraph of $\Gup$.

Define $W = V(G) \setminus V_\$$, $H = G[W]$ and $\ell = k - \cost_\$$.
Recall that in the remaining branches $\mathcal{I} := (H,\ell,(\pos_u)_{u \in V(G)},\Gdown,\Gup)$ is a valid \spic{} instance.
In the next lemmata we show that it is sufficient to solve it instead of $(G,k)$.

\begin{lemma}\label{lem:pic-to-spic}
If $(G,k)$ is a YES-instance to \picname{} with the canonical umbrella ordering $\ord$,
then in the correct branch
the function $\ord_H := \posmap^{-1} \circ \ord|_W$
is a feasible ordering of the \spic{} instance $\mathcal{I}$
with $\sol^{\ord_H} \subseteq \sol^\ord \cap \binom{W}{2} =: \sol_W$;
in particular, for any $u \in W$ we have $|\incsol{\sol^{\ord_H}}{u}| \leq \tau$.
Moreover, $\cost(\ord_H) = |\sol^\ord| - \cost_\$ - |\sol_W \setminus \sol^{\ord_H}| \leq |\sol^\ord| - \cost_\$$.
\end{lemma}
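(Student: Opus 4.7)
The plan is to verify the four assertions in sequence, each reducing to a direct application of the umbrella property of $\ord$ combined with the definitions fixed in the correct branch. First I would establish feasibility of $\ord_H$. The position constraint $\ord_H(u) \in \pos_u$ holds because $u \notin V_\$$ implies $\ord(u) \in \pos$, and for each $v \in V_\$ \cap N_G(u)$ the edge $uv \in E(G) \subseteq E(G^\ord)$ combined with $\ord(N_{G^\ord}[v]) = [p_v^L,p_v^R]$ forces $\ord(u) \in [p_v^L,p_v^R]$, which is precisely the intersection defining $\pos_u$. For the $\Gup$ constraint I would argue by contradiction: were $uw \in E(H)$ mapped to a non-edge of $\Gup$, the witness $y \in V_\$$ would sit with $p_y$ strictly between $\ord(u)$ and $\ord(w)$ while $[p_y^L,p_y^R]$ fails to contain both of $\ord(u),\ord(w)$; applying the umbrella property of $\ord$ to the triple $u,y,w$ yields edges $uy, wy \in E(G^\ord)$, whence $\ord(u),\ord(w) \in [p_y^L,p_y^R]$, contradicting the choice of witness.

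Second, I would prove $\sol^{\ord_H} \subseteq \sol_W$. Any $xy \in \sol^{\ord_H}$ automatically satisfies $x,y \in W$ and $xy \notin E(G)$, so it suffices to show $xy \in E(G^\ord)$. If $xy$ is included by criterion~(\ref{p:sol-Gdown}), so that $\ord_H(x)\ord_H(y) \in E(\Gdown)$ with witness $z \in V_\$$, then the two cases of the $\Gdown$ definition place $z$ either strictly to the left or strictly to the right of both $x$ and $y$, with $\ord(x),\ord(y) \in [p_z^L,p_z^R]$; this gives $xz, yz \in E(G^\ord)$, and the umbrella property of $\ord$ applied to the triple $x,y,z$ delivers $xy \in E(G^\ord)$. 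If $xy$ is included by criterion~(\ref{p:sol-cap}), the witnessing pair $x', y' \in W$ transports through the order-preserving map $\posmap$ and invokes the umbrella property of $\ord$ on $G$ directly. The pointwise bound $|\incsol{\sol^{\ord_H}}{u}| \leq \tau$ is then immediate from $\sol^{\ord_H} \subseteq \sol^\ord$ and the cheapness of $u \in W$.

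Third, the cost identity is obtained by accounting. Lemma~\ref{lem:exp-cost} applied with $\ord' = \ord$ and $\sol = \sol^\ord$ certifies that exactly $\cost_\$$ edges of $\sol^\ord$ are incident with $V_\$$, so $|\sol_W| = |\sol^\ord| - \cost_\$$; combining with $\sol^{\ord_H} \subseteq \sol_W$ we get
\[
\cost(\ord_H) = |\sol^{\ord_H}| = |\sol_W| - |\sol_W \setminus \sol^{\ord_H}| = |\sol^\ord| - \cost_\$ - |\sol_W \setminus \sol^{\ord_H}|,
\]
and the final inequality follows because the removed term is nonnegative.

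The main obstacle I expect is the case analysis in the $\Gup$ direction of Step~1 and the $\Gdown$ criterion of Step~2: in each one must carefully invoke the witness vertex from $V_\$$ inside an umbrella triple with the two $W$-vertices, tracking whether the witness lies to their left or to their right, and chaining together the guessed neighborhood interval $[p_v^L,p_v^R]$ with the umbrella property of $\ord$. Everything else amounts to unpacking the definitions and the correctness of Lemma~\ref{lem:exp-cost}.
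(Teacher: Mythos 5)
Your proposal is correct and follows essentially the same route as the paper: feasibility via the $\Gup$ witness plus the umbrella property (yours phrased as a contradiction, the paper's as a contrapositive), the inclusion $\sol^{\ord_H}\subseteq\sol_W$ via the $\Gdown$ witness and the monotone transport of criterion~(\ref{p:sol-cap}) through $\posmap$, and the cost formula from Lemma~\ref{lem:exp-cost} applied to $\ord$ and $\sol^\ord$. The only cosmetic difference is that in the $\Gdown$ case you use the edges $xz,yz\in E(G^\ord)$ directly, whereas the paper invokes the spanning edge $\ord^{-1}(p_x^L)x$ (resp.\ $\ord^{-1}(p_x^R)x$); both are valid.
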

\begin{proof}
Observe that $\ord_H$ is indeed  an ordering of $W$.
We first verify that it is feasible. Clearly, in the correct branch $\ord_H(u) = \posmap^{-1}(\ord(u)) \in \pos_u$ for any $u \in W$.
Consider any pair $u,v$ with 
$\ord_H(u) < \ord_H(v)$ and $\ord_H(u)\ord_H(v) \notin E(\Gup)$.
Let $y$ be a witness that $\ord_H(u)\ord_H(v) \notin E(\Gup)$.
If $\ord_H(u) < p_y^L \leq p_y < \ord_H(v)$ then
$uy \notin E(G^\ord)$ and, by the umbrella property, $uv \notin E(G^\ord)$, so in particular $uv\notin E(G)$.
Symmetrically, if $\ord_H(u) < p_y \leq p_y^R < \ord_H(v)$ then
$yv \notin E(G^\ord)$ and, by the umbrella property, $uv \notin E(G^\ord)$, so in particular $uv\notin E(G)$.
Consequently, $uv \notin E(G)$ in both cases and $\ord_H$ is feasible.

We now show that $\sol^{\ord_H} \subseteq \sol_W$.
Consider any $uv \in \sol^{\ord_H}$ and w.l.o.g. assume $\ord_H(u) < \ord_H(v)$.
If there exist $u',v' \in W$ with $\ord_H(u') \leq \ord_H(u) < \ord_H(v) \leq \ord_H(v')$
and $u'v' \in E(G)$, then $\ord(u') \leq \ord(u) < \ord(v) \leq \ord(v')$ by the monotonicity
of $\posmap$ and hence $uv \in \sol^\ord$.
Otherwise, by the definition of $\sol^{\ord_H}$, we have that $\ord_H(u)\ord_H(v) \in E(\Gdown)$.
By the definition of $\Gdown$, there exists $x \in V_\$$ with
$p_x^L \leq \posmap(\ord_H(u)) = \ord(u) < \ord(v) = \posmap(\ord_H(v)) < p_x$
or $p_x < \posmap(\ord_H(u)) = \ord(u) < \ord(v) = \posmap(\ord_H(v)) \leq p_x^R$.
In the first case, by the umbrella property we have that  $uv \in \sol^\ord$
because  $\ord^{-1}(p_x^L)x \in E(G^\ord)$.  Similarly, in the second case, $uv \in \sol^\ord$
  since $\ord^{-1}(p_x^R)x \in E(G^\ord)$.

We now compute the cost of $\ord_H$. By Lemma~\ref{lem:exp-cost}, there are exactly
$\cost_\$$ edges of $\sol^\ord$ incident with $V_\$$. Therefore
$|\sol_W| = |\sol^\ord| - \cost_\$$.
The already proven inclusion $\sol^{\ord_H} \subseteq \sol_W$ finishes the proof of the formula for the cost of $\ord_H$.
\end{proof}

\begin{lemma}\label{lem:spic-to-pic}
Let $\ord_H$ be a feasible ordering of the \spic{} instance $\mathcal{I}$ in some branch. Let also 
$\ord'$  be an ordering of $V(G)$ such that $\ord'(u) = \posmap(\ord_H(u))$ for $u \in W$
and $\ord'(u) = \ord_\$(u)$ for $u\not \in W$. Then $|\sol^{\ord'}|\leq \cost(\ord_H) + \cost_\$$.

\end{lemma}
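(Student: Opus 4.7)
The plan is to exhibit a completion $\sol''$ of $G$ for which $\ord'$ is an umbrella ordering and $|\sol''| \leq \cost(\ord_H) + \cost_\$$, and then to invoke Lemma~\ref{lem:unique-sandwich-sol} applied to the original \picname{} instance (viewed as a \spic{} instance with edgeless $\Gdown$ and complete $\Gup$) to deduce $|\sol^{\ord'}| \leq |\sol''|$ by inclusion-wise minimality of $\sol^{\ord'}$. Concretely, I would take $\sol''$ to be the disjoint union of $\sol_\$$ (fills inside $V_\$$), the set $\sol_M$ of all pairs $vu$ with $v \in V_\$$, $u \in W$, $vu \notin E(G)$ and $\ord'(u) \in [p_v^L,p_v^R]$ (fills between $V_\$$ and $W$), and $\sol^{\ord_H}$ (fills inside $W$).

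Once $\sol''$ is defined, the size bound reduces to applying Lemma~\ref{lem:exp-cost} to $(\ord',\sol'')$. As soon as I verify that $\ord'$ is an umbrella ordering of $G+\sol''$ and that $\ord'(N_{G+\sol''}[v]) = [p_v^L,p_v^R]$ for every $v \in V_\$$, the lemma yields that exactly $\cost_\$$ edges of $\sol''$ are incident with $V_\$$; since by construction the remaining edges of $\sol''$ are precisely $\sol^{\ord_H}$, I obtain $|\sol''| = \cost_\$ + \cost(\ord_H)$. The neighborhood equality is essentially immediate from the construction: the cleanup rules together with $\sol_\$$ handle the $V_\$$--$V_\$$ pairs, while feasibility of $\ord_H$ places every $u \in W$ adjacent in $G$ to some $v \in V_\$$ into $\posmap^{-1}([p_v^L,p_v^R])$ and $\sol_M$ completes all missing $V_\$$--$W$ adjacencies inside this interval, while blocking everything outside it.

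It remains to verify the umbrella property for every triple $a,b,c$ with $\ord'(a) < \ord'(b) < \ord'(c)$ and $ac \in E(G+\sol'')$, which I do by case analysis on which of the three lie in $V_\$$. Triples entirely in $V_\$$ are handled directly by Lemma~\ref{lem:exp-sol}, and triples entirely in $W$ by the umbrella property of $\ord_H$ on $H+\sol^{\ord_H}$, supplied by Lemma~\ref{lem:unique-sandwich-sol} applied to $\mathcal{I}$. The mixed cases are where the definitions of $\Gdown$ and $\Gup$ really enter: for instance, if $a,c \in W$ and $b \in V_\$$, then feasibility of $\ord_H$ gives $\ord_H(a)\ord_H(c) \in E(\Gup)$, and the fact that $b$ fails to witness that this is a non-edge of $\Gup$ forces $p_b^L \leq \ord'(a)$ and $\ord'(c) \leq p_b^R$, placing $ab$ and $bc$ into $E(G) \cup \sol_M$; symmetrically, if $a \in V_\$$ and $b,c \in W$, then $a$ itself witnesses that $\ord_H(b)\ord_H(c) \in E(\Gdown)$, so $bc \in E(G) \cup \sol^{\ord_H}$. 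The main obstacle is not conceptual but combinatorial: each subcase requires invoking the right cleanup invariant at the right moment — in particular the monotonicity $p_{v_1} \leq p_{v_2} \Rightarrow p_{v_1}^L \leq p_{v_2}^L$ and $p_{v_1}^R \leq p_{v_2}^R$ for $v_1,v_2 \in V_\$$, and the consistency between $E(G)$ and the guessed neighborhoods of expensive vertices enforced by the terminated branches.
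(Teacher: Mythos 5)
Your proposal is correct and follows essentially the same route as the paper: you construct exactly the same completion $\sol = \sol^{\ord_H} \cup \sol_\$ \cup \{uv : u\in W,\ v\in V_\$,\ uv\notin E(G),\ \posmap(\ord_H(u))\in[p_v^L,p_v^R]\}$, bound $|\sol^{\ord'}|\leq|\sol|$ by minimality, count $|\sol|=\cost(\ord_H)+\cost_\$$ via Lemma~\ref{lem:exp-cost}, and verify the umbrella property by the same case analysis on $\{a,b,c\}\cap V_\$$, using Lemma~\ref{lem:exp-sol}, the cleanup monotonicity, and the definitions of $\Gdown$ and $\Gup$ exactly where the paper does.
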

\begin{proof}
We define
$$\sol = \sol^{\ord_H} \cup \sol_\$ \cup \{uv: u \in W \wedge v \in V_\$ \wedge uv \notin E(G) \wedge \posmap(\ord_H(u)) \in [p_v^L,p_v^R]\}.$$
We now show that $\ord'$ is an umbrella ordering of $G+\sol$.
Observe that if this is true, then Lemma~\ref{lem:exp-cost} will yield that $|\sol^{\ord'}| \leq |\sol| = |\sol^{\ord_H}| + c_\$$,
finishing the proof of the lemma; the condition (ii) of Lemma~\ref{lem:exp-cost} can be directly checked from the definitions of $\ord',\sol$.

Consider then a triple $a,b,c \in V(G)$ with $\ord'(a) < \ord'(b) < \ord'(c)$
and $ac \in E(G) \cup \sol$. We consider a few cases, depending on the
intersection $V_\$ \cap \{a,b,c\}$.

First, consider the case $a,c \in V_\$$.
If $b \in V_\$$, then $ab,bc \in E(G) \cup \sol$ by Lemma~\ref{lem:exp-sol}.
Otherwise, observe that the cleanup operation imply that $\ord'(a)=p_a\in [p_c^L,p_c^R]$ and $\ord'(c)=p_c\in [p_a^L,p_a^R]$ and we obtain
$\ord'(b) = \posmap(\ord_H(b)) \in [p_a^L,p_a^R] \cap [p_c^L,p_c^R]$.
Hence $ab,bc \in E(G) \cup \sol$ directly from the definition of $\sol$.

Second, consider the case $a \in V_\$$ and $c \in W$.
We claim that $ac \in E(G) \cup \sol$ implies that $\ord'(c) = \posmap(\ord_H(c)) \in [p_a^L,p_a^R]$. Indeed, if $ac\in \sol$ then this follows directly from the definition of $\sol$. If $ac\in E(G)$, however, then $\ord'(c) = \posmap(\ord_H(c))\in \posmap(\pos_c)\subseteq [p_a^L,p_a^R]$ since $\ord_H$ is feasible.
Now observe that since $\ord'(a) = p_a \in [p_a^L,p_a^R]$, then we have also that $\ord'(b) \in [p_a^L,p_a^R]$. Since $\ord'(a) < \ord'(b) < \ord'(c)$, then in fact $\ord'(b),\ord'(c)\in [p_a,p_a^R]$.

Assume first that $b \in V_\$$.
Then $ab \in E(G) \cup \sol_\$$ by the definition of $\sol_\$$.
Moreover, as $\ord'(b) = p_b > \ord'(a) = p_a$, by the cleanup operations
we have that $p_b^R \geq p_a^R$ and, consequently, $\ord'(c) = \posmap(\ord_H(c)) \in [p_b^L,p_b^R]$.
Hence, in this case $bc \in E(G) \cup \sol$ by the definition of $\sol$.

Assume now $b \in W$. Clearly $\ord'(b) \in [p_a^L,p_a^R]$ implies that $ab \in E(G) \cup \sol$ by the definition of $\sol$.
Moreover, observe that as both $\ord'(b) = \posmap(\ord_H(b))$ and $\ord'(c) = \posmap(\ord_H(c))$
belong to $[p_a,p_a^R]$, we have $\ord_H(b)\ord_H(c) \in \Gdown$ and hence $bc \in E(G) \cup \sol^{\ord_H}$.

Third, observe that the case $a \in W$ and $c \in V_\$$ is symmetrical to the previous one.

Finally, consider the case $a,c \in W$, so $ac \in E(G) \cup \sol^{\ord_H}$.
If $b \in W$ then $ab,bc \in E(G) \cup \sol^{\ord_H}$
as $\ord_H$ is an umbrella ordering of $G[W]+\sol^{\ord_H}$. Hence, assume $b \in V_\$$.
Observe that $ac \in E(G) \cup \sol^{\ord_H}$ implies that $ac \in E(\Gup)$.
However, we have that $\posmap(\ord_H(a)) < p_b < \posmap(\ord_H(c))$.
Thus, by the definition of $\Gup$, we have $p_b^L \leq \posmap(\ord_H(a)) < \posmap(\ord_H(c)) \leq p_b^R$ and, by the definition of $\sol$, $ab,bc \in E(G) \cup \sol$.
This concludes the proof of the lemma.
\end{proof}
\begin{lemma}\label{lem:pic-spic-equiv}
If $(G,k)$ is a YES-instance to \picname{} with the canonical umbrella ordering $\ord$,
then in the correct branch
the function $\ord_H := \posmap^{-1} \circ \ord|_W$
is the canonical umbrella ordering of the \spic{} instance $\mathcal{I}$
of cost at most $\ell$.
Moreover, $\sol^{\ord_H} = \sol^\ord \cap \binom{W}{2}$; in particular,
for any $u \in W$ we have $|\incsol{\sol^{\ord_H}}{u}| \leq \tau$.
\end{lemma}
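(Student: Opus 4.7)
The plan is to close the loop between Lemmas~\ref{lem:pic-to-spic} and~\ref{lem:spic-to-pic}. Lemma~\ref{lem:pic-to-spic} already supplies, in the correct branch, that $\ord_H$ is a feasible ordering of $\mathcal{I}$ with $\sol^{\ord_H} \subseteq \sol_W := \sol^\ord \cap \binom{W}{2}$ and $\cost(\ord_H) = |\sol^\ord| - \cost_\$ - |\sol_W \setminus \sol^{\ord_H}|$. The cheap-vertex bound will then follow for free: since every $u \in W$ is cheap with respect to $\ord$, we have $|\incsol{\sol^{\ord_H}}{u}| \leq |\incsol{\sol^\ord}{u}| \leq \tau$.

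The key observation I would use next is that feeding $\ord_H$ back into Lemma~\ref{lem:spic-to-pic} reproduces exactly the ordering $\ord$ we started with. Indeed, the resulting $\ord'$ satisfies $\ord'(u) = \posmap(\ord_H(u)) = \posmap(\posmap^{-1}(\ord(u))) = \ord(u)$ for $u \in W$ and $\ord'(v) = \ord_\$(v) = p_v = \ord(v)$ for $v \in V_\$$. Hence Lemma~\ref{lem:spic-to-pic} gives $|\sol^\ord| = |\sol^{\ord'}| \leq \cost(\ord_H) + \cost_\$$; combined with the cost formula above, this forces $|\sol_W \setminus \sol^{\ord_H}| = 0$, so $\sol^{\ord_H} = \sol_W$ and $\cost(\ord_H) = |\sol^\ord| - \cost_\$ \leq k - \cost_\$ = \ell$.

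It remains to verify that $\ord_H$ is the \emph{canonical} solution to $\mathcal{I}$, i.e., that among feasible orderings it has minimum cost and, subject to that, is lexicographically smallest. Both parts I would prove by contradiction: suppose some feasible $\tilde\ord_H$ beats $\ord_H$. Feeding $\tilde\ord_H$ into Lemma~\ref{lem:spic-to-pic} gives an ordering $\tilde\ord'$ of $V(G)$ with $|\sol^{\tilde\ord'}| \leq \cost(\tilde\ord_H) + \cost_\$$. If $\cost(\tilde\ord_H) < \cost(\ord_H)$, then $|\sol^{\tilde\ord'}| < \cost(\ord_H) + \cost_\$ = |\sol^\ord|$, contradicting minimality of $\ord$. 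Otherwise cost equality holds and $\tilde\ord_H \prec \ord_H$ lexicographically over $W$.

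The part I expect to require the most care is the transfer of this lexicographic inequality from $W$ to all of $V(G)$. Let $w_{i^*}$ be the first vertex of $W$ (in the $\ord_0$-order) at which $\tilde\ord_H$ and $\ord_H$ disagree. I would argue that $\tilde\ord'$ and $\ord$ agree on every vertex that precedes $w_{i^*}$ in the $\ord_0$-order of $V(G)$: vertices of $V_\$$ are mapped identically to their $p_v$, while vertices of $W$ with smaller $\ord_0$-index coincide because $\tilde\ord_H$ and $\ord_H$ do. At $w_{i^*}$ itself, monotonicity of $\posmap$ turns $\tilde\ord_H(w_{i^*}) < \ord_H(w_{i^*})$ into $\tilde\ord'(w_{i^*}) < \ord(w_{i^*})$. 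Hence $\tilde\ord'$ is lexicographically smaller than $\ord$ yet of the same cost, contradicting canonicity of $\ord$. This closes the proof.
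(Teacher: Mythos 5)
Your proof is correct and follows essentially the same route as the paper: both rely on the round trip through Lemmas~\ref{lem:pic-to-spic} and~\ref{lem:spic-to-pic} to sandwich the costs into equalities, and on the monotonicity of $\posmap$ to transfer lexicographic minimality from $\ord$ to $\ord_H$. Your explicit treatment of the first disagreeing vertex $w_{i^*}$ merely spells out the step the paper summarizes as ``by the monotonicity of $\posmap$.''
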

\begin{proof}
We focus on the correct branch.
By Lemma~\ref{lem:pic-to-spic}, there exists a feasible ordering
of the \spic{} instance $\mathcal{I}$.
Let $\ord_H'$ be the canonical ordering of this instance.
Define $\ord'$ as in Lemma~\ref{lem:spic-to-pic} for the ordering $\ord_H'$.

By Lemma~\ref{lem:spic-to-pic} and the optimality of $\ord$,
we have that
$$|\sol^\ord| \leq |\sol^{\ord'}| \leq \cost(\ord_H') + \cost_\$.$$
On the other hand, by Lemma~\ref{lem:pic-to-spic} and the optimality of $\ord_H'$, 
we have that
$$\cost(\ord_H') \leq \cost(\posmap^{-1} \circ \ord|_W) \leq |\sol^\ord| - \cost_\$.$$
Hence, all aforementioned inequalities are in fact equalities, and $\sol^{\ord_H} = \sol^\ord \cap \binom{W}{2}$.
In particular,
$\sol^{\ord'}$ is a minimum completion of $G$ and $\posmap^{-1} \circ \ord|_W$
is of minimum possible cost.
By the monotonicity of $\posmap$, we infer that the lexicographical minimization
in fact chooses $\ord_H' = \ord_H$ and the lemma is proven.
\end{proof}

In the next sections we will show the following.
\begin{theorem}\label{thm:spic}
There exists an algorithm that, given a branch with a \spic{} instance $\mathcal{I}$,
runs in time $n^{\Oh(\ell/\tau + \tau^2)}$
and, if given the correct branch, computes the canonical ordering of $\mathcal{I}$.
\end{theorem}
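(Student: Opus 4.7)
The plan is to run a two-layer dynamic programming routine that reconstructs the canonical umbrella ordering $\ord_H$ of $\mathcal{I}$ from left to right. The atomic building block is a \emph{cut at position $i$}: the set of vertices $u \in W$ whose interval in $G[W]+\sol^{\ord_H}$ straddles the gap between positions $i$ and $i+1$. Classify a position $i$ as \emph{cheap} if fewer than $\tau$ edges of $\sol^{\ord_H}$ have exactly one endpoint in the cut at $i$, and as \emph{expensive} otherwise. A double-counting argument against the total budget $\ell$ shows that any collection of pairwise vertex-disjoint expensive cuts has size at most $\Oh(\ell/\tau)$, so the ``space'' between two successive cheap cuts is short in this combinatorial sense.

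First I would enumerate all plausible candidate cheap cuts. A candidate consists of the vertex set of the cut, an internal umbrella-consistent ordering, and the partition of $W$ minus the cut into a left side and a right side. Since a cheap cut has fewer than $\tau$ incident edges of $\sol^{\ord_H}$, every $u \in W$ satisfies $|\incF{u}|\leq\tau$, and the entire pattern on $V_\$$ is already pinned down by the branch, one shows that the internal order of a cheap cut is determined up to $n^{\Oh(\tau^2)}$ possibilities. In the \textbf{outer DP}, states are these candidate cheap cuts; the value at a state is the minimum cost of an umbrella ordering of the prefix of $W$ placed to its left, subject to $\pos_u$, $\Gdown$, $\Gup$ and to the branch guess. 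Transitions jump from one cheap cut to the next cheap cut strictly to its right, with the jump cost computed by the inner DP.

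The \textbf{inner DP} fills the region between two cheap border cuts $C_1,C_2$ under the assumption that every intermediate cut is expensive. Mirroring Feige's $\Ohstar(10^n)$ bandwidth algorithm, I would partition the positions strictly between $C_1$ and $C_2$ into $\Oh(\ell/\tau)$ \emph{horizontal layers} whose boundaries are chosen at cuts of a maximal pairwise vertex-disjoint family of expensive cuts contained in the region. The inner DP processes these layers in order; its state at each layer boundary records the frontier of currently open intervals, which is itself a cut whose incident-edge count is bounded by the cumulative cost and hence admits $n^{\Oh(\ell/\tau)}$ possibilities. Transitions enumerate how the vertices placed in the current layer are ordered, locally verify the umbrella property against $\Gdown$, $\Gup$ and $\pos_u$, and charge the corresponding portion of $\sol^{\ord_H}$.

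Putting the two layers together, the outer DP has $n^{\Oh(\tau^2)}$ states, each transition invokes an inner DP of size $n^{\Oh(\ell/\tau)}$ with polynomial per-transition work, and the overall running time is $n^{\Oh(\ell/\tau+\tau^2)}$ as required. Correctness in the correct branch follows because the ordering $\ord_H$ of Lemma~\ref{lem:pic-spic-equiv} induces a sequence of cheap cuts that is enumerated by the outer DP, and the inner DP between two consecutive cheap cuts of $\ord_H$ reconstructs the intermediate ordering; lexicographic tie-breaking against $\ord_0$ at every local choice, enabled by Lemma~\ref{lem:lex-match}, ensures that the returned ordering is indeed canonical. The main obstacle, which the remaining sections must dispatch, lies in proving the two counting lemmas --- the $n^{\Oh(\tau^2)}$ bound on cheap-cut candidates and the $n^{\Oh(\ell/\tau)}$ bound on inner-DP frontiers --- together with correctly setting up the Feige-style horizontal partition so that the DP states respect the umbrella property and the $\sol^{\ord_H}$-cost splits additively across layers.
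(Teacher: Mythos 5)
Your high-level plan is recognizably the paper's: a cheap/expensive dichotomy on cuts, an outer DP over cheap cuts and an inner, Feige-style ``horizontal'' DP between two cheap border cuts. But the two counting claims you defer are exactly the substance of the proof, and the second one fails as you state it. First, you give no mechanism for enumerating the \emph{vertex sets} of the candidate cuts (nor the left/right split of the remaining vertices). Knowing that a cut is incident with few completion edges does not identify it; the paper spends all of Section~\ref{sec:sections} on this, enumerating a family of $k^{\Oh(\tau)}$ candidate \emph{sections} (prefixes of the ordering) by guessing, for each twin class of the completed graph, five boundary vertices together with their at most $\tau$ incident completion edges (Lemma~\ref{lem:ptc-crux}), recovering the internal order of the twin class by a matching argument (Lemmas~\ref{lem:lex-match} and~\ref{lem:ptc-replacement}), and resolving the left/right placement of the components not seen by the class via connectivity. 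A cut is then obtained as a difference of two sections, and its internal order is recovered only in the \emph{cheap} case, by guessing its at most $2k/\tau$ incident completion edges (Theorem~\ref{thm:jump-enum}). Without this machinery your ``candidate cheap cuts'' and their $n^{\Oh(\tau^2)}$ bound are unsupported.

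The more serious problem is the inner DP. Your frontier at a layer boundary is an \emph{expensive} cut: it may contain $\Theta(n)$ vertices and be incident with up to $\ell$ completion edges, so neither guessing its incident edges nor any cost argument yields an $n^{\Oh(\ell/\tau)}$-size candidate family for it, and ``the cumulative cost bounds the incident-edge count, hence $n^{\Oh(\ell/\tau)}$ possibilities'' is a non sequitur. Likewise, a layer between two vertex-disjoint expensive cuts may contain many vertices and need not be a clique, so ``enumerate how the vertices placed in the current layer are ordered'' is not affordable. The paper never enumerates expensive cuts or their orders at all: because each intermediate jump set is incident with more than $2k/\tau$ completion edges and these sets are disjoint, the jump sequence from any position reaches the right border in at most $\tau$ steps (Lemma~\ref{lem:short-z}), and the layer-two states are pairs of \emph{chains} of length at most $\tau$ (anchor positions, one vertex per anchor, and sections), of which there are only $(n\cdot k^{\Oh(\tau)})^{\Oh(\tau)}$. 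The internal order of the large expensive cuts is then fixed implicitly by a recursion that advances a single position at a time, and its correctness (including canonicity) rests on explicit exchange arguments using the separation structure of Lemma~\ref{lem:z-cut} and an augmented lower sandwich graph (Lemma~\ref{lem:layer-two-ord}); global lexicographic minimality does not follow from local tie-breaking alone. So the gap is not merely two missing lemmas: the inner layer needs to be restructured (chains of length $\Oh(\tau)$ rather than $\Oh(\ell/\tau)$ frontier layers) before the claimed running time and correctness can be argued.
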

The equivalence shown in Lemmata~\ref{lem:pic-to-spic}, \ref{lem:spic-to-pic}
and~\ref{lem:pic-spic-equiv}, together with the bound $n = \Oh(k^3)$, allows us to solve the \picname{} instance
$(G,k)$ by applying the algorithm of Theorem~\ref{thm:spic} to each branch separately. Observe that we have $k^{\Oh(k^{2/3})}$ branches, and for $\tau=(2k)^{1/3}$, $\ell\leq k$ and $n=\Oh(k^3)$ we have $n^{\Oh(\ell/\tau + \tau^2)}=k^{\Oh(k^{2/3})}$; therefore, the running time will be as guaranteed in Theorem~\ref{thm:main}.

Hence, it remains to prove Theorem~\ref{thm:spic}.
In its proof it will be clear that the algorithm runs within the given time bound.
Hence, we assume that we work in the correct branch and we will mostly focus on proving 
that we indeed find the canonical ordering of $\mathcal{I}$.

%

\section{Sections}\label{sec:sections}
\providecommand{\secfam}{\mathcal{S}}
\providecommand{\ptcfam}{\mathcal{T}}
\providecommand{\ptc}{\Lambda}

We now proceed with the proof of Theorem~\ref{thm:spic}.
Assume we are given the correct branch with a \spic{} instance $\mathcal{I} = (H,\ell,(\pos_u)_{u \in V(G)},\Gdown,\Gup)$.
Recall that we look for the canonical ordering $\ord_H$ of $\mathcal{I}$
and we assume that $\ord_H$ is of cost at most $\ell$ and $|\incsol{\sol^{\ord_H}}{u}| \leq \tau$ for every $u \in V(G)$.
The last assumption 
allows us to guess edges $\incsol{\sol^{\ord_H}}{u}$ for a set  of carefully chosen vertices
$u \in V(H)$. 
In this section we use this property to show the following statement.

\begin{definition}
A \emph{section} is a subset $A$ of $V(H)$. 
A section $A$ is \emph{consistent} with an  ordering $\ord_H$ 
if $\ord_H$ maps $A$ onto the first $|A|$ positions.
\end{definition}

\begin{theorem}\label{thm:sec-enum}
In $k^{\Oh(\tau)}$ time one can enumerate a family
$\secfam$ of $k^{\Oh(\tau)}$ sections
that contains all sections consistent with the canonical ordering $\ord_H$.
\end{theorem}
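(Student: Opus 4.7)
The plan is to enumerate $\secfam$ by guessing, for each section $A$ consistent with $\ord_H$, a short certificate and canonically reconstructing $A$ from it.

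For a section $A$ consistent with $\ord_H$, let $u^* \in A$ be the rightmost vertex of $A$ under $\ord_H$ (i.e.\ $\ord_H(u^*) = |A|$), and let $F^* := \incsol{\sol^{\ord_H}}{u^*}$ be the set of completion edges incident to $u^*$. By the cheapness assumption from Section~\ref{sec:exp} we have $|F^*| \le \tau$, so the number of certificates $(u^*, F^*)$ is at most $n \cdot \binom{n^2}{\tau} = n^{\Oh(\tau)} = k^{\Oh(\tau)}$, using $n = \Oh(k^3)$.

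Given a candidate certificate $(u^*, F^*)$, set $N^* := N_G[u^*] \cup \{v : u^*v \in F^*\}$; in the correct guess $N^* = N_{G^{\ord_H}}[u^*]$. Two observations from the umbrella property are the pillars of the reconstruction. First, any edge $xy \in E(G)$ with $x, y \notin N^*$ cannot straddle $u^*$ in $\ord_H$: such straddling would, by umbrella, force $xu^*, yu^* \in E(G^{\ord_H})$, contradicting $x, y \notin N^*$. Hence each connected component of $G[V(H) \setminus N^*]$ lies entirely on one side of $u^*$ in $\ord_H$. Second, the sets $L := N^* \cap A$ (including $u^*$) and $R := N^* \setminus A$ must each form a clique in $G^{\ord_H}$, since two same-side neighbours of $u^*$ are pairwise adjacent by umbrella.

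To produce the candidate section I would proceed in two steps. \emph{Splitting $N^*$:} the clique conditions combined with the $G$-non-edges inside $N^*$ impose bipartite-like constraints that partially determine the split $(L, R)$; since $u^*$ is cheap, only $\Oh(\tau)$ completion edges can bridge the two sides, so the residual ambiguity amounts to $2^{\Oh(\tau)}$ extra guesses. \emph{Labelling components:} each component $C$ of $G[V(H) \setminus N^*]$ is labelled ``left'' (included in $A$) or ``right'' (excluded from $A$); a $G$-edge from $C$ to $L \setminus \{u^*\}$ forces ``left'' and a $G$-edge to $R$ forces ``right'', these alternatives being mutually exclusive by the first observation. ``Floating'' components with no $G$-edges to $N^*$ are labelled using the positional constraints from $\pos_u$, $\Gdown$ and $\Gup$, with canonical tie-breaking via the fixed ordering $\ord_0$.

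The main obstacle is to rigorously bound the number of cases arising in the splitting and labelling steps by $k^{\Oh(\tau)}$ per certificate; I expect this to follow from a careful argument combining the cheapness of $u^*$, the global budget $\ell \le k$ on completion size, and the rigidity of the sandwich constraints of the \spic{} instance. Once this bookkeeping is in place, the enumeration runs in $k^{\Oh(\tau)}$ time and produces a family of $k^{\Oh(\tau)}$ candidate sections; by construction, every section consistent with $\ord_H$ is output by its own correct certificate, and hence lies in $\secfam$.
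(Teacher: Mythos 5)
Your overall scheme --- guess an anchor vertex at the section boundary together with its at most $\tau$ incident completion edges, then reconstruct the section canonically --- has the same flavour as the paper's, but the decisive step of your reconstruction does not hold up. The claim that only $\Oh(\tau)$ edges of $\sol^{\ord_H}$ can bridge $L = N^* \cap A$ and $R = N^* \setminus A$ does not follow from the cheapness of $u^*$: a completion edge $xy$ with $\ord_H(x) < \ord_H(u^*) < \ord_H(y)$ need not be incident to $u^*$ (the umbrella property only forces $xu^*,u^*y \in E(G^{\ord_H})$, and these may well be original edges of $G$). Worse, Theorem~\ref{thm:sec-enum} must output the sections consistent with $\ord_H$ at \emph{every} position, in particular at positions whose cuts are expensive --- exactly the regime the layer-two dynamic programming is later designed for --- and there the number of completion edges crossing the boundary, or lying inside $N^*$ without touching $u^*$, can be of order $k \gg \tau$. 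Consequently a $G$-non-edge inside $N^*$ does not impose a ``different sides'' constraint (it may be a completion edge joining two left or two right vertices), the bipartite-like constraints determine very little, and the residual ambiguity of the split $(L,R)$ is not bounded by $2^{\Oh(\tau)}$ by any argument you give. A second, smaller omission: even once the coarse left/right structure of $N^*$ is known, deciding which true twins of $u^*$ in $H^{\ord_H}$ precede it in the \emph{canonical} ordering requires the lexicographic-minimization argument via feasible positions (the analogue of Lemma~\ref{lem:ptc-replacement} combined with Lemma~\ref{lem:lex-match}); your proposal invokes canonical tie-breaking only for ``floating'' components.

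The paper sidesteps the splitting problem by paying for more guesses up front: it anchors at the twin class $\Lambda$ occupying position $|A|+1$ and guesses \emph{five} vertices $a,b_1,c_1,c_2,b_2$ flanking $\Lambda$, together with the completion edges incident to each of them (still $k^{\Oh(\tau)}$ branches). Lemma~\ref{lem:ptc-crux} then classifies every vertex of $N_{H^{\ord_H}}[a]$ as before, inside, or after $\Lambda$ \emph{deterministically}, with no per-pair guessing inside the neighbourhood; components outside $N_{H^{\ord_H}}[\Lambda]$ are placed using connectivity of the original graph $G$ through $N_{H^{\ord_H}}[\Lambda]$ or $V_\$$; and the fact that a section may cut through a twin class is handled by one extra position parameter plus the lex-min matching of Lemma~\ref{lem:lex-match}. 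To salvage your single-anchor certificate you would need a classification argument of this kind, not a counting bound on bridging completion edges.
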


The proof of Theorem~\ref{thm:sec-enum} is divided into two steps.
First, we investigate true twin classes in the graph $H^{\ord_H}$,
and show that we can efficiently enumerate a small family
of candidates for these twin classes. Then we use  the twin class
residing at position $|A|+1$ to efficiently `guess' a
section $A$ consistent with the canonical ordering $\ord_H$.
Henceforth we assume that the canonical ordering $\ord_H$ is of cost at most $k$.

\subsection{Potential twin classes}

Recall that two vertices $x$ and $y$ are \emph{true twins} if $N[x] = N[y]$;
in particular, this implies that they are adjacent. The relation
of being a true twin is an equivalence relation, and an equivalence class
of this relation is called a \emph{twin class}.
We remark the following observation, straightforward from the definition of
an umbrella ordering.
\begin{lemma}\label{lem:twins-together}
In an umbrella ordering of a proper interval graph, the vertices of any twin
class occupy consecutive positions.
\end{lemma}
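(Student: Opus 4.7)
My plan is to prove the statement in its contrapositive-style sharp form: if $x$ and $y$ are true twins with $\sigma(x) < \sigma(y)$ in the umbrella ordering $\sigma$, then every vertex $z$ with $\sigma(x) < \sigma(z) < \sigma(y)$ is itself a true twin of $x$ (and thus of $y$). This is enough, because if some twin class did not occupy a consecutive block of positions, there would be some outsider sandwiched strictly between two members of the class, contradicting the intermediate-vertex statement above.

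The first, easy step is to record that $xy \in E(G)$ (since $x \in N[x] = N[y]$), and then the umbrella property applied to the triple $(x,z,y)$ immediately yields $xz, zy \in E(G)$. In particular $z \in N[x] = N[y]$, handling the membership of $x$ and $y$ in $N[z]$ and of $z$ in $N[x], N[y]$.

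The second step is to verify, for every other vertex $w \notin \{x,y,z\}$, the equivalence $w \in N[x] \iff w \in N[z]$, by a short case analysis on the position of $\sigma(w)$ relative to $\sigma(x), \sigma(z), \sigma(y)$. For the forward direction, I will use repeatedly that $N[x] = N[y]$: if $wx \in E$ and $\sigma(w) < \sigma(x)$, then $wy \in E$ and the umbrella property on $(w,z,y)$ gives $wz \in E$; if $\sigma(x) < \sigma(w) < \sigma(z)$, the edge $xz$ already guarantees $wz \in E$ via the alternative umbrella formulation; the two symmetric cases $\sigma(z) < \sigma(w) < \sigma(y)$ and $\sigma(w) > \sigma(y)$ are analogous, using the edges $xz$ or $wx$ respectively. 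The reverse direction is strictly symmetric: from an assumed $wz \in E$, the umbrella property on $(w,x,z)$, on $(x,y,z)$ combined with $xy \in E(G)$, or on $(z,y,w)$ recovers $wx \in E$ or $wy \in E$, and from the twin identity we then conclude $wx \in E$.

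The main obstacle, such as it is, lies entirely in organising the case split so that each case is one clean invocation of the umbrella property; no deeper structural fact about proper interval graphs is needed. Conceptually, the point is that the umbrella property forces closed neighbourhoods to interpolate monotonically along $\sigma$, so two vertices with identical closed neighbourhoods cannot be separated by a vertex whose neighbourhood differs from theirs.
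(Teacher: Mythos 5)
Your proof is correct and matches the paper's intent: the paper states this lemma as an observation ``straightforward from the definition of an umbrella ordering,'' and your argument is exactly that straightforward verification (show any $z$ placed between two true twins $x,y$ satisfies $N[z]=N[x]$ via single applications of the umbrella property). The only nitpick is the attribution of witnessing edges in the case $\sigma(z)<\sigma(w)<\sigma(y)$ of the forward direction, where the spanning edge is $wx$ (or $xy$) rather than $xz$; the case still follows by one umbrella application, so this is harmless.
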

The main result of this section is the following.
\begin{theorem}\label{thm:ptc-enum}
In $k^{\Oh(\tau)}$ time one can enumerate a family
$\ptcfam$ of $k^{\Oh(\tau)}$ triples $(L,\ptc,\ord_\ptc)$
such that for any twin class $\ptc$ of $H^{\ord_H}$,
if $L$ is the set of vertices of $H$ placed to the left of $\ptc$ in the ordering $\ord_H$,
then $(L,\ptc,\ord_H|_\ptc) \in \ptcfam$.
\end{theorem}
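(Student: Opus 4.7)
The plan is to enumerate candidate triples by iterating over all pairs $(v, F_v)$, where $v \in V(H)$ and $F_v$ is a set of at most $\tau$ non-edges of $H$ incident to $v$, as guesses for the assertions $v \in \Lambda$ and $F_v = F^{\ord_H}(v)$. Since every vertex of $H$ is cheap, $|F^{\ord_H}(v)| \le \tau$ holds in the correct branch, so there are $n \cdot n^{\tau} = k^{\Oh(\tau)}$ such pairs, matching the target family size. In the correct branch, for every twin class $\Lambda$ of $H^{\ord_H}$ and every $v \in \Lambda$, the pair $(v, F^{\ord_H}(v))$ is among those enumerated. From $(v, F_v)$ I form the candidate closed neighborhood $N := N_H[v] \cup \{w : vw \in F_v\}$; in the correct branch this equals $N_{H^{\ord_H}}[v]$, and by the twin-class property coincides with the common closed neighborhood $N_{H^{\ord_H}}[\Lambda]$ of the entire class.

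Given $(v, N)$, I identify $\Lambda$ inside the set $C := \{u \in N \,:\, N_H[u] \subseteq N \text{ and } |N \setminus N_H[u]| \le \tau\}$. Indeed, $u \in \Lambda$ forces $N_{H^{\ord_H}}[u] = N$, which forbids any $H$-neighbor of $u$ outside $N$ (first condition) and requires every vertex of $N \setminus N_H[u]$ to be supplied by an $F^{\ord_H}$-edge incident to $u$, of which there are at most $\tau$ (second condition). By Lemma~\ref{lem:twins-together}, $\Lambda$ occupies a consecutive block of $\ord_H$-positions and hence corresponds to an interval of $C$ under the induced (unknown) ordering $\ord_H|_C$. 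I pin down this interval by additionally guessing the extreme vertices $v_L, v_R \in \Lambda$ together with the sets $F^{\ord_H}(v_L), F^{\ord_H}(v_R)$; this multiplies the number of guesses by $n^{2+2\tau}$, still keeping the total at $k^{\Oh(\tau)}$. For each resulting candidate $\Lambda' \subseteq C$, the ordering $\ord_{\Lambda'}$ is the $\ord_0$-sorted order on $\Lambda'$: within a twin class every permutation preserves cost, so canonicality of $\ord_H$ forces the lex-minimum representative with respect to $\ord_0$. The prefix set $L$ is recovered jointly with $\Lambda'$ by using the guessed position $\ord_H(v_L) \in \Sigma_{v_L}$ to fix $|L| = \ord_H(v_L) - 1$, and pinning down the members of $L$ through consistency with the remaining \spicname{} data.

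The main obstacle is to argue that this enumeration actually yields the canonical triple $(L, \Lambda, \ord_H|_\Lambda)$ for every twin class of $H^{\ord_H}$, while producing at most $k^{\Oh(\tau)}$ candidates in total. The key structural ingredient is that, inside $C$, the twin class $\Lambda$ forms a contiguous range under a canonical ordering inherited from $\ord_H$ via the monotonicity of leftmost- and rightmost-neighbor positions in umbrella orderings; this makes the endpoint guess $(v_L, v_R)$ together with the $F^{\ord_H}$-incident edges at $v$, $v_L$, and $v_R$ sufficient to identify $\Lambda$ uniquely. The casework needed to verify that the reconstructed $\Lambda'$ matches the true twin class of $v$ (rather than a spurious superset, subset, or permutation) and that no candidate triple is missed is where the bulk of the technical work is concentrated.
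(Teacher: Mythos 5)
Your overall strategy --- guess a constant number of vertices together with their at most $\tau$ incident fill edges, which yields $k^{\Oh(\tau)}$ branches, and exploit the fact that a twin class occupies consecutive positions --- is the same skeleton as the paper's, but the step you yourself flag as ``the bulk of the technical work'' is exactly the step that is missing, and your specific choice of guesses does not obviously make it feasible. You guess $v_L,v_R\in\Lambda$ and their fill edges; but $v_L$ and $v_R$ are true twins of $v$ in $H^{\sigma_H}$, so $N_{H^{\sigma_H}}[v_L]=N_{H^{\sigma_H}}[v_R]=N$, and this guess carries essentially no information that discriminates among the other vertices of your candidate set $C$. A vertex $u\in N$ lying to the left of $\Lambda$ whose extra $H^{\sigma_H}$-neighbours outside $N$ (or missing neighbours inside $N$) all come from fill edges incident to $u$ passes your test $N_H[u]\subseteq N$ and $|N\setminus N_H[u]|\le\tau$, and it is adjacent in $H^{\sigma_H}$ to $v$, $v_L$ and $v_R$ alike; nothing in your guessed data separates it from the members of $\Lambda$. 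The paper resolves precisely this by guessing \emph{boundary} vertices outside the class ($b_1,b_2$ just outside $N_{H^{\sigma_H}}[\Lambda]$ on either side, and $c_1,c_2$ the extreme vertices of $N_{H^{\sigma_H}}[\Lambda]$) together with their incident fill edges, and proving (Lemma~\ref{lem:ptc-crux}) that adjacency to these four vertices classifies every $u\in N$ as left of $\Lambda$, in $\Lambda$, or right of $\Lambda$. Without such a classification lemma you have not shown that the true twin class is recoverable from your guesses at all, only that it is contained in $C$.

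Two further steps are also wrong or unargued. First, $\sigma_H|_\Lambda$ is not simply the $\sigma_0$-sorted order on $\Lambda$: the instance at hand is a \textsc{SPIC} instance, and the positions assigned to the vertices of $\Lambda$ must respect the allowed-position sets $\Sigma_u$, so not every permutation of a twin class is feasible. The paper characterizes $\sigma_H|_\Lambda$ as the lexicographically minimum bijection onto $\{i,\ldots,i+|\Lambda|-1\}$ subject to $\sigma(u)\in\Sigma_u$ (Lemma~\ref{lem:ptc-replacement}) and computes it with the matching routine of Lemma~\ref{lem:lex-match}; plain $\sigma_0$-sorting can produce an infeasible or non-canonical ordering. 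Second, recovering $L$ from $|L|$ and ``consistency with the remaining data'' is a handwave: knowing the cardinality of $L$ does not determine which vertices of $V(H)\setminus N_{H^{\sigma_H}}[\Lambda]$ lie to the left of $\Lambda$. The paper needs a concrete argument here: every connected component of $H\setminus N_{H^{\sigma_H}}[\Lambda]$ lies entirely on one side, and since the original graph $G$ is connected, each such component has a $G$-neighbour either in $N_{H^{\sigma_H}}[\Lambda]$ (already classified by Lemma~\ref{lem:ptc-crux}) or in $V_{\$}$ (whose position was guessed in the expensive-vertex phase), which decides the side. Until these three points are filled in, the proposal does not constitute a proof of the theorem.
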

We describe the algorithm of Theorem~\ref{thm:ptc-enum} as a branching algorithm
that produces $k^{\Oh(\tau)}$ subcases and, in each subcase, produces 
one pair $(L,\ptc,\ord_\ptc)$. We fix one twin class $\ptc$ of $H^{\ord_H}$ and argue
that the algorithm in one of the branches produces $(L,\ptc,\ord_H|_\ptc)$, where
$L$ is defined as in Theorem~\ref{thm:ptc-enum}.
We perform this task in two phases: we first reason about $L$ and $\ptc$, and then
we deduce the ordering $\ord_H|_\ptc$.

\subsubsection{Phase one: $L$ and $\ptc$}

The algorithm guesses the following five vertices (see also Figure~\ref{fig:lambda}):
\begin{enumerate}
\item $a$ is any vertex of $\ptc$,
\item $b_1$ is the rightmost vertex outside $N_{H^{\ord_H}}[\ptc]$ in $\ord_H$ that lies before $\ptc$, or $b_1 = \bot$ if no such vertex exists;
\item $c_1$ is the leftmost vertex of $N_{H^{\ord_H}}[\ptc]$ in $\ord_H$;
\item $c_2$ is the rightmost vertex of $N_{H^{\ord_H}}[\ptc]$ in $\ord_H$;
\item $b_2$ is the leftmost vertex outside $N_{H^{\ord_H}}[\ptc]$ in $\ord_H$ that lies after $\ptc$, or $b_2 = \bot$ if no such vertex exists.
\end{enumerate}

\begin{figure}
\centering
\includegraphics{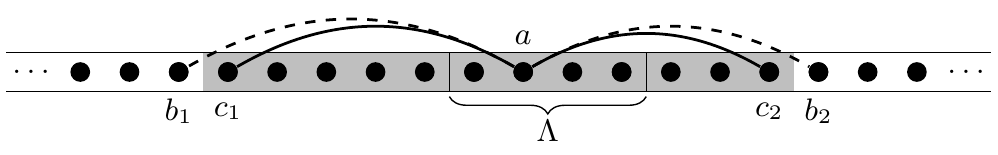}
\caption{The guessed vertices $a$, $b_1$, $b_2$, $c_1$ and $c_2$ with respect
  to a twin class $\ptc$. The gray area denotes $N_{H^{\ord_H}}(\ptc)$.}
\label{fig:lambda}
\end{figure}

Moreover, for each $u \in \{a,b_1,b_2,c_1,c_2\} \setminus \{\bot\}$ the algorithm guesses
$\incsol{\sol^{\ord_H}}{u}$. This leads us to $k^{\Oh(\tau)}$ subcases.
We now argue that, if the guesses are correct, we can deduce the pair $(L,\ptc)$.
The crucial step is the following.
\begin{lemma}\label{lem:ptc-crux}
In the branch where the guesses are correct, the following holds for any $u \in N_{H^{\ord_H}}[a]$,
\begin{enumerate}
\item if $u \in N_{H^{\ord_H}}[b_1]$ or $u \notin N_{H^{\ord_H}}[c_2]$, then $u \notin \ptc$ and $u$ lies before $\ptc$ in the ordering $\ord_H$;
\item if $u \in N_{H^{\ord_H}}[b_2]$ or $u \notin N_{H^{\ord_H}}[c_1]$, then $u \notin \ptc$ and $u$ lies after $\ptc$ in the ordering $\ord_H$;
\item if none of the above happens, then $u \in \ptc$.
\end{enumerate}
Here we take the convention that $N_{H^{\ord_H}}[\bot] = \emptyset$.
\end{lemma}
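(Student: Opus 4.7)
The plan is to split the lemma into two halves: first I would establish the forward implications in (1) and (2) directly, and then I would derive (3) by contrapositive, showing that any $u \in N_{H^{\ord_H}}[a] \setminus \ptc$ must fall into one of the two cases. The only tool required is the umbrella property in its equivalent form highlighted after the umbrella ordering definition: whenever $xy \in E(H^{\ord_H})$ and $\ord_H(x) < \ord_H(y)$, the positions $[\ord_H(x), \ord_H(y)]$ induce a clique in $H^{\ord_H}$. A one-time setup fixes the geometry: since $a \in \ptc$ and $\ptc$ is a twin class, $N_{H^{\ord_H}}[a] = N_{H^{\ord_H}}[\ptc]$, and applying the clique-interval property to the edges $ac_1$ and $ac_2$ shows this neighbourhood is exactly the set of vertices at positions $[\ord_H(c_1), \ord_H(c_2)]$. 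By Lemma~\ref{lem:twins-together}, $\ptc$ itself occupies a consecutive sub-block $[p_L, p_R]$ containing $\ord_H(a)$. Finally, by the definitions of $c_1$ and $c_2$, whenever $b_1 \neq \bot$ it must sit at position $\ord_H(c_1)-1$, and symmetrically for $b_2$.

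For (1) I would handle the two clauses separately. If $u \in N_{H^{\ord_H}}[b_1]$, then $\ord_H(b_1) < \ord_H(c_1) \leq \ord_H(u)$ and the clique-interval property forces $b_1$ adjacent to every vertex whose position lies in $[\ord_H(b_1), \ord_H(u)]$; should $\ord_H(u) \geq p_L$ hold, this range would swallow a vertex of $\ptc$, contradicting $b_1 \notin N_{H^{\ord_H}}[\ptc]$, so $\ord_H(u) < p_L$, giving both $u \notin \ptc$ and $u$ strictly before $\ptc$. If instead $u \notin N_{H^{\ord_H}}[c_2]$, the same trick reflected works: the edge $ac_2$ makes $[\ord_H(a), \ord_H(c_2)]$ a clique so $\ord_H(u) < \ord_H(a)$, while any $u$ landing in $[p_L, \ord_H(a)]$ would itself belong to $\ptc \subseteq N_{H^{\ord_H}}[c_2]$, a contradiction; again $\ord_H(u) < p_L$. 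Part (2) is entirely symmetric.

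For (3) I would argue the contrapositive: take $u \in N_{H^{\ord_H}}[a] \setminus \ptc$ and, by symmetry, assume $\ord_H(u) < \ord_H(a)$. Short clique-interval arguments first give $u \in N_{H^{\ord_H}}[c_1]$ and $\ord_H(u) < p_L$ (otherwise $u$ would land in $[p_L, p_R]$ and hence lie in $\ptc$). Writing $N_{H^{\ord_H}}[u]$ as a consecutive interval $[L_u, R_u]$, I would split on $R_u$. The case $R_u > \ord_H(c_2)$ is impossible, since an edge from $u$ to a position past $c_2$ would, by the clique-interval property, drag $a$ along and produce a neighbour of $a$ past $c_2$, violating the maximality of $c_2$. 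The case $R_u < \ord_H(c_2)$ yields $u \notin N_{H^{\ord_H}}[c_2]$ immediately, so (1) fires. In the remaining case $R_u = \ord_H(c_2)$, the fact that $u \notin \ptc$ forces $L_u \neq \ord_H(c_1)$; combined with $u \in N_{H^{\ord_H}}[c_1]$ this gives $L_u < \ord_H(c_1)$, and the witnessing neighbour of $u$ at position $L_u$ then certifies $b_1 \neq \bot$; applying the clique-interval property to this last edge places $b_1$ inside $N_{H^{\ord_H}}[u]$, triggering (1) once more. The only delicate part of the argument is this endpoint bookkeeping in the $R_u = \ord_H(c_2)$ case, where one must make sure that the three subcases genuinely cover every $u \in N_{H^{\ord_H}}[a] \setminus \ptc$; everything else reduces to repeated invocations of the clique-interval rephrasing of the umbrella property.
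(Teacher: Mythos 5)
Your proof is correct and follows essentially the same route as the paper's: both rest on the umbrella (clique-interval) property, the fact that $\ptc$ and $N_{H^{\ord_H}}[\ptc]$ occupy consecutive blocks of positions, and the same two witnesses (a neighbour of $u$ to the left of $N_{H^{\ord_H}}[\ptc]$ forcing $u \in N_{H^{\ord_H}}[b_1]$, and a missing edge towards the right forcing $u \notin N_{H^{\ord_H}}[c_2]$). The only difference is organizational: you prove implications (1)--(2) directly and (3) by contrapositive with a case split on the right endpoint $R_u$ of the interval $N_{H^{\ord_H}}[u]$, whereas the paper runs the same argument as a trichotomy on the position of $u$ relative to $\ptc$, splitting on whether $N_{H^{\ord_H}}(u) \setminus N_{H^{\ord_H}}[a]$ or $N_{H^{\ord_H}}(a) \setminus N_{H^{\ord_H}}[u]$ is nonempty.
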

\begin{proof}
By the definition of $b_1$, $b_2$, $c_1$ and $c_2$, we have that every vertex $u \in \ptc$ lies in $N_{H^{\ord_H}}[c_1]$ and $N_{H^{\ord_H}}[c_2]$,
but not in $N_{H^{\ord_H}}[b_1]$ nor in $N_{H^{\ord_H}}[b_2]$. Consequently, any vertex of $\ptc$ falls into the third category of the statement of the lemma.

We now show that any other vertex of $N_{H^{\ord_H}}[a]$ falls into one of the first two categories, depending on its position in the ordering $\ord_H$.
By symmetry, we may only consider a vertex $u \in N_{H^{\ord_H}}[a] \setminus \ptc$ that lies before $\ptc$ in $\ord_H$.
Note that the umbrella property together with $a \notin N_{H^{\ord_H}}[b_2]$ implies that $u \notin N_{H^{\ord_H}}[b_2]$,
and together with $ac_1 \in E(H^{\ord_H})$ implies $uc_1 \in E(H^{\ord_H})$.
Consequently, $u$ does not fall into the second category in the statement of the lemma. We now show that it falls into the first one.

As $u \notin \ptc$ and $u\in N_{H^{\ord_H}}[a]$, either $N_{H^{\ord_H}}(u) \setminus N_{H^{\ord_H}}[a]$ is not empty or $N_{H^{\ord_H}}(a) \setminus N_{H^{\ord_H}}[u]$ is not empty.
In the first case, let $uw \in E(H^{\ord_H})$ but $aw \notin E(H^{\ord_H})$. Since also $ua\in E(H^{\ord_H})$, by the umbrella property it easily follows that $w$ lies before $u$ in the ordering $\ord_H$, so in particular before $\ptc$.
By the definition of $b_1$, $b_1$ exists and $\ord_H(b_1) \geq \ord_H(w)$. By the umbrella property, $b_1u \in E(H^{\ord_H})$ and hence $u \in N_{H^{\ord_H}}[b_1]$.

In the second case, assume $uw \notin E(H^{\ord_H})$ but $aw \in E(H^{\ord_H})$. Again, since $ua\in E(H^{\ord_H})$, by the umbrella property it easily follows that $w$ lies after $\ptc$ in the ordering $\ord_H$, so in particular after $u$.
By the definition of $c_2$ and the existence of $w$, $c_2 \notin \ptc$ and $\ord_H(c_2) \geq \ord_H(w)$. By the umbrella property, $c_2u \notin E(H^{\ord_H})$ and $u \notin N_{H^{\ord_H}}[c_2]$.
Hence, $u$ falls into the first category and the lemma is proven.
\end{proof}

The knowledge of $a$ and $\incsol{\sol^{\ord_H}}{a}$ allows us to compute
$N_{H^{\ord_H}}[\ptc] = N_{H^{\ord_H}}[a]$. By making use of Lemma~\ref{lem:ptc-crux}, we can further partition
$N_{H^{\ord_H}}[\ptc]$ into $\ptc$, the vertices of $N_{H^{\ord_H}}(\ptc)$
that lie before $\ptc$ in the ordering $\ord_H$, and the ones that lie after $\ptc$.
We are left with the vertices outside $N_{H^{\ord_H}}[\ptc]$.

We  guess the position $i$ such that the first vertex of $\ptc$ in the ordering $\ord_H$
is in position $i$.
Note that, by Lemma~\ref{lem:twins-together}, the vertices of $\ptc$
occupy positions $i,i+1,\ldots,i+|\ptc|-1$ in $\ord_H$.

Let $C$ be a connected component of $H \setminus N_{H^{\ord_H}}[\ptc]$.
Recall that by Lemma~\ref{lem:pic-spic-equiv}, $\ord_H = \posmap^{-1} \circ \ord|_W$
and $\sol^{\ord_H} = \sol^\ord \cap \binom{W}{2}$.
As no vertex of $C$ is incident with $\ptc$ in $H^{\ord_H}$, by the properties
of an umbrella ordering we infer that all vertices of $N_G[C]$ lie before position $\posmap(i)$
or all vertices of $N_G[C]$ lie after position $\posmap(i+|\ptc|-1)$ in the ordering $\posmap \circ \ord_H = \ord|_W$.
As $G$ is assumed to be connected, $N_G(C)$
contains a vertex of $N_{H^{\ord_H}}[\ptc]$ or of $V_\$$.
Any such vertex allows us to deduce which of the two aforementioned options is true for $C$ in $\ord$.
This allows us to decide whether $C \subseteq L$ or $L \cap C = \emptyset$, and consequently deduce the set $L$. Note that it must hold that $|L|=i-1$, and otherwise we may discard the guess.

\subsubsection{Phase two: the ordering $\ord_H|_\ptc$}\label{sss:ptc-phase-2}

We are left with determining $\ord_H|_\ptc$.
Note that we already know the domain $\ptc$ and the codomain $\{i,i+1,\ldots,i+|\ptc|-1\}$
of this bijection.
We prove the following.
\begin{lemma}\label{lem:ptc-replacement}
The bijection $\ord_H|_\ptc$ is 
the lexicographically minimum bijection
$\ord_\ptc:\ptc \to \{i,i+1,\ldots,i+|\ptc|-1\}$
among those bijections $\ord_\ptc$ that satisfy $\ord_\ptc(u) \in \pos_u$ for any $u \in \ptc$.
\end{lemma}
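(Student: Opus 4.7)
The plan is to show that for any bijection $\ord_\ptc: \ptc \to \{i,\ldots,i+|\ptc|-1\}$ satisfying $\ord_\ptc(u) \in \pos_u$ for all $u \in \ptc$, the ordering obtained from $\ord_H$ by replacing $\ord_H|_\ptc$ with $\ord_\ptc$ is again a feasible ordering of cost equal to $\cost(\ord_H)$; the canonicity of $\ord_H$ will then force $\ord_H|_\ptc$ to be the lexicographically smallest among such bijections.

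First, I would define $\ord_H'$ to agree with $\ord_H$ on $V(H) \setminus \ptc$ and with $\ord_\ptc$ on $\ptc$. The central observation is a twin-invariance statement: because $\ptc$ is a true twin class of $H^{\ord_H}$ and, by Lemma~\ref{lem:twins-together}, is mapped by $\ord_H$ exactly onto the block $\{i,\ldots,i+|\ptc|-1\}$, any permutation of its vertices among these positions leaves the graph on positions untouched, i.e.\ $\ord_H'(H^{\ord_H}) = \ord_H(H^{\ord_H})$. This I would prove by a short case distinction on pairs of positions: when both lie inside the block, both original and permuted inhabitants come from $\ptc$ which forms a clique in $H^{\ord_H}$; in the mixed case, all members of $\ptc$ share the same external neighbors in $H^{\ord_H}$, so relabelling within the block does not change adjacency with any outside position; and when both positions lie outside the block, the inhabitants are unchanged.

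From the identity $\ord_H'(H^{\ord_H}) = \ord_H(H^{\ord_H})$ I would derive three consequences in turn. First, $\ord_H'$ is an umbrella ordering of $H^{\ord_H}$, because $\ord_H$ is. Second, $E(\Gdown) \subseteq E(\ord_H'(H^{\ord_H})) \subseteq E(\Gup)$, because this already holds for $\ord_H$. The uniqueness-minimality part of Lemma~\ref{lem:unique-sandwich-sol} then gives $\sol^{\ord_H'} \subseteq \sol^{\ord_H}$; combining with the positional constraint $\ord_H'(u) \in \pos_u$ (inherited from $\ord_H$ outside $\ptc$ and imposed on $\ord_\ptc$ inside $\ptc$) and the equivalent form of feasibility noted right after Lemma~\ref{lem:unique-sandwich-sol}, I conclude that $\ord_H'$ is feasible with $\cost(\ord_H') \le \cost(\ord_H)$. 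The optimality of $\ord_H$ then forces $\cost(\ord_H') = \cost(\ord_H)$.

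Finally, since $\ord_H$ and $\ord_H'$ agree outside $\ptc$, their global associated sequences differ precisely at the positions in the $\ord_0$-sorted enumeration of $V(H)$ that correspond to vertices of $\ptc$, so the lexicographic comparison of the two global sequences reduces exactly to the comparison of $\ord_H|_\ptc$ against $\ord_\ptc$ viewed as associated sequences on $\ptc$. Hence, if some admissible $\ord_\ptc$ were lex-smaller than $\ord_H|_\ptc$, then $\ord_H'$ would be lex-smaller than $\ord_H$ among feasible minimum-cost orderings, contradicting the canonicity of $\ord_H$. The main subtlety is the twin-invariance claim $\ord_H'(H^{\ord_H}) = \ord_H(H^{\ord_H})$: one has to remember that $\ptc$ is a twin class of the completed graph $H^{\ord_H}$ rather than of $H$ itself, so vertices of $\ptc$ may genuinely have different neighborhoods in $H$, yet this discrepancy is absorbed by the completion and does not affect the graph on positions.
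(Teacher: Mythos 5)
Your proposal is correct and follows essentially the same route as the paper: replace $\ord_H|_\ptc$ by an admissible bijection, observe that $\ord_H'(H^{\ord_H}) = \ord_H(H^{\ord_H})$ since $\ptc$ is a twin class occupying a contiguous block, deduce feasibility and $\sol^{\ord_H'} \subseteq \sol^{\ord_H}$, and invoke cost-optimality plus lexicographic minimality of the canonical ordering. The only (immaterial) difference is that the paper instantiates the argument directly with the lexicographically minimum admissible bijection, while you argue by contradiction from an arbitrary one.
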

\begin{proof}
Let $\ord_\ptc:\ptc \to \{i,i+1,\ldots,i+|\ptc|-1\}$ be the lexicographically minimum bijection
among those that satisfy $\ord_\ptc(u) \in \pos_u$ for any $u \in \ptc$; note that at least one such bijection exists, since $\ord_H|_\ptc$ is one.
Consider an ordering $\ord'$ of $V(H)$
defined as follows: $\ord'(u) = \ord_\ptc(u)$ if $u \in \ptc$ and $\ord'(u) = \ord_H(u)$ otherwise.
Observe that $\ord'$ is an ordering of $V(H)$.
Moreover, as $\ptc$ is a twin class of $H^{\ord_H}$, we have
$\ord'(H^{\ord_H}) = \ord(H^{\ord_H})$. Hence $\ord'$ is a feasible ordering of $H$ and umbrella ordering of $H^{\ord_H}$.
We infer that $\sol^{\ord'} \subseteq \sol^{\ord_H}$.
On the other hand, as $\ord_H$ is the canonical solution, we have $\cost(\ord_H) \leq \cost(\ord')$.
Hence, both aforementioned inequalities are in fact tight and $\sol^{\ord'} = \sol^{\ord_H}$.
Furthermore, the lexicographical minimization criterion
implies that $\ord_\ptc = \ord_H|_\ptc$ and $\ord' = \ord_H$.
\end{proof}

Finally, observe that the characterization of $\ord_H|_\ptc$ given by Lemma~\ref{lem:ptc-replacement}
fits into the conditions of Lemma~\ref{lem:lex-match} and, consequently, $\ord_H|_\ptc$
can be computed in polynomial time given $L$, $\ptc$ and the index $i$.
This concludes the proof of Theorem~\ref{thm:ptc-enum}.

\subsection{Proof of Theorem~\ref{thm:sec-enum}}

Given Theorem~\ref{thm:ptc-enum}, the proof of Theorem~\ref{thm:sec-enum} is now straightforward.
We first compute the family $\ptcfam$ of Theorem~\ref{thm:ptc-enum}.
Then, for each $(L,\ptc,\ord_\ptc) \in \ptcfam$ and each position $p\in \{1,2,\ldots,|V(H)|\}$ we output a set
$$A := L \cup \{u \in \ptc: \ord_\ptc(u) < p\}.$$
Additionally, we output a section $V(H)$.
Clearly, the algorithm outputs $k^{\Oh(\tau)}$ sections and works within the promised
time bound.
It remains to argue that it outputs all sections consistent with $\ord_H$.

Consider a section $A$ consistent with $\ord_H$, that is, $A = \ord^{-1}(\{1,2,\ldots,|A|\})$.
If $A = V(H)$, the statement is obvious, so assume otherwise.
Consider the position $p := |A|+1$, let $u = \ord^{-1}(p)$
and let $\ptc$ be the twin class of $u$ in $H^{\ord_H}$.
Moreover, let $L$ be the set of vertices of $H$ placed before $\ptc$
in $\ord_H$. By Theorem~\ref{thm:ptc-enum}, $(L,\ptc,\ord_H|_\ptc) \in \ptcfam$.
Moreover, note that the algorithm outputs exactly the set $A$ when it considers the triple
$(L,\ptc,\ord_H|_\ptc)$ and position $p$.
This concludes the proof of Theorem~\ref{thm:sec-enum}.

\section{Dynamic programming}\label{sec:dp}
\providecommand{\jump}[1]{\mathtt{jump}(#1)}
\providecommand{\jumpset}[1]{X_{#1}}
\providecommand{\secp}[1]{A_{#1}}
\providecommand{\jumpfam}{\mathcal{J}}
\providecommand{\chainfam}{\mathcal{C}}

In this section we conclude the proof of Theorem~\ref{thm:spic}
by showing the following.
\begin{theorem}\label{thm:dp}
Given a \spic{} instance $\mathcal{I} = (G,k,(\pos_u)_{u \in V(G)},\Gdown,\Gup)$ with $n = |V(G)|$,
 a threshold $\tau$ and a family $\secfam \subseteq 2^{V(G)}$,
one can in $n^{\Oh(k/\tau+\tau)} |\secfam|^{\Oh(\tau)}$ time find the canonical ordering $\ord$ of $\mathcal{I}$, assuming that
\begin{enumerate}
\item $\cost(\ord) \leq k$;
\item for each $u \in V(G)$, $|\incsol{\sol^\ord}{u}| \leq \tau$;
\item each section consistent with $\ord$ belongs to $\secfam$.
\end{enumerate}
\end{theorem}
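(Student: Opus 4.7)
My plan follows the two-layer strategy sketched in the introduction. The outer layer is a dynamic program over pairs $(A, \mathrm{int})$, where $A \in \secfam$ and $\mathrm{int}$ is an \emph{interface} describing the cut at position $|A|$; the inner layer computes, for each transition $(A_1,\mathrm{int}_1) \to (A_2,\mathrm{int}_2)$ with $A_1 \subsetneq A_2$, the minimum cost of arranging the vertices of $A_2 \setminus A_1$ at positions $|A_1|+1, \ldots, |A_2|$ consistently with the two interfaces. For a section $A$ consistent with $\ord$, I define its cut $C_A$ as the set of vertices in $A$ with a neighbor in $G^\ord$ outside $A$; by the umbrella property $C_A$ is a clique in $G^\ord$. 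I call $A$ \emph{cheap} if the subset $C_A^F \subseteq C_A$ of vertices incident with solution edges $\sol^\ord$ crossing the cut has size at most $\Oh(\tau)$, and \emph{expensive} otherwise. An interface at a cheap section records the order in which the vertices of $C_A$ appear in $\ord$ together with, for each $v \in C_A^F$, the position $\jump{v}$ of its rightmost $G^\ord$-neighbor; since $|C_A^F| \leq \Oh(\tau)$ and the positions of cut vertices outside $C_A^F$ are rigidly fixed by proper-interval monotonicity of rightmost-neighbor positions along the cut, each cheap section has only $n^{\Oh(\tau)}$ interfaces.

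The outer DP table $\mathrm{DP}[A,\mathrm{int}]$ stores the minimum cost of an ordering of positions $1,\ldots,|A|$ compatible with $(A, \mathrm{int})$; the answer is read off $\mathrm{DP}[V(G), \emptyset]$. The key structural claim for the outer layer is that only $\Oh(\tau)$ cheap sections need to be marked as chain points to decompose the canonical ordering into pieces manageable by the inner DP: expensive intermediate cuts by definition carry more than $\tau$ solution-incident cut vertices each, and amortizing against $|\sol^\ord| \leq k$ together with $|\incsol{\sol^\ord}{u}| \leq \tau$ forces the number of chain points the DP must commit to to be $\Oh(\tau)$. Iterating over all such chains of length $\Oh(\tau)$ with elements in $\secfam$ contributes the $|\secfam|^{\Oh(\tau)}$ factor in the target running time, with an additional $n^{\Oh(\tau)}$ coming from the interface choices at each chain point.

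For the inner DP I use horizontal, Feige-style partitioning. Fixing two chain points $(A_1,\mathrm{int}_1)$ and $(A_2,\mathrm{int}_2)$, all intermediate positions $p \in (|A_1|, |A_2|)$ are expensive, so by the same amortization the ``effective width'' the inner DP must traverse is $\Oh(k/\tau)$. At each horizontal layer the DP selects a subset of $A_2 \setminus A_1$ that joins or leaves the cut together with one of $n$ available positions, giving $n^{\Oh(k/\tau)}$ states per transition; checks of the umbrella property, $E(\Gdown)$-containment and $E(\Gup)$-containment follow directly from the definition of $\sol^\ord$ and Lemma~\ref{lem:unique-sandwich-sol}. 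To recover the \emph{canonical} (as opposed to merely some optimal) ordering, I would apply the lexicographic tie-breaking procedure of Lemma~\ref{lem:lex-match} within each DP cell, in the same spirit as Lemma~\ref{lem:ptc-replacement}. The hardest part of the argument will be making the two amortization bounds rigorous: the $\Oh(\tau)$ bound on chain length in the outer DP and the $\Oh(k/\tau)$ bound on the width of an expensive gap in the inner DP. Both rely on carefully charging solution incidences to positions in the ordering and leveraging proper-interval rigidity to keep the inner state space from blowing up combinatorially. Multiplying the layer costs yields $|\secfam|^{\Oh(\tau)} \cdot n^{\Oh(\tau)} \cdot n^{\Oh(k/\tau)} = n^{\Oh(k/\tau+\tau)}\,|\secfam|^{\Oh(\tau)}$, as required.
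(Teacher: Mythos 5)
Your high-level plan --- distinguishing cheap and expensive cuts, recovering the interface of a cheap cut up to its few ``dirty'' vertices via proper-interval rigidity and lexicographic matching, and combining an outer DP over cheap cut points with a Feige-style horizontal DP across expensive stretches --- is indeed the strategy of the paper, but the quantitative core of your argument does not go through. In the outer layer you claim that only $\Oh(\tau)$ cheap sections need to be committed to as chain points, amortizing the more than $\tau$ dirty vertices of each expensive cut against $|F^{\sigma}|\leq k$. This amortization fails: cuts at nearby positions are far from disjoint, so the same dirty vertices are counted repeatedly, and even granting disjointness the arithmetic gives $\Oh(k/\tau)$, not $\Oh(\tau)$. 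Moreover, no bound on the number of cheap cut points is available or needed --- in an instance that is already a proper interval graph every cut is cheap, so there are $n$ of them. The paper does not enumerate global chains of cheap cuts at all; it enumerates all cheap jump tuples (there are $n^{\Oh(k/\tau)}|\mathcal{S}|^{2}$ of them, obtained by guessing the at most $2k/\tau$ incident solution edges and reconstructing the internal order, Lemma~\ref{lem:jump-detect} and Theorem~\ref{thm:jump-enum}) and runs an interval-style DP over all pairs, splitting at an arbitrary intermediate cheap tuple; the disjointness needed for any amortization appears only for the iterated jump sets $X_{z_q(i)}$ in Lemma~\ref{lem:short-z}.

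The inner layer has a matching accounting gap. With your thresholds (cheap $=$ at most $\Oh(\tau)$ dirty cut vertices, expensive otherwise) an expensive stretch decomposes into $\Oh(k/\tau)$ horizontal strata, and your claim of ``$n^{\Oh(k/\tau)}$ states per transition'' ignores the cost of specifying which vertices constitute each stratum. These vertex sets must be described somehow --- in the paper via sections $B(i)\in\mathcal{S}$ inside a chain --- and with $\Oh(k/\tau)$ strata this costs $|\mathcal{S}|^{\Oh(k/\tau)}$, exceeding the permitted $|\mathcal{S}|^{\Oh(\tau)}$ (in the intended application $k/\tau=k^{2/3}\gg\tau=k^{1/3}$). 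The paper's parameter choices are the reverse of yours precisely to avoid this: cheapness is measured by at most $2k/\tau$ incident solution \emph{edges}, so the $n^{\Oh(k/\tau)}$ factor is paid only once when guessing those edges, while Lemma~\ref{lem:short-z} bounds the number of horizontal strata in an expensive stretch by $\tau$, so the layer-two chains cost only $(n|\mathcal{S}|)^{\Oh(\tau)}$. Finally, even with correct bookkeeping the gluing is not routine: one needs the separation properties of Lemmata~\ref{lem:jump-cut} and~\ref{lem:z-cut}, the auxiliary sandwich graph $G_{\downarrow}^{\ast}$ that records the cliques enforced across strata, and the notion of relevant pairs to prove that the restriction of the canonical ordering is optimal for each sub-state; none of this is addressed in your sketch, and it is exactly where the ``carefully charging'' you defer would have to happen.
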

Observe that if we apply Theorem~\ref{thm:dp} to a branch with a \spic{} instance $\mathcal{I}$,
the threshold $\tau$ and family $\secfam$ output by Theorem~\ref{thm:sec-enum},
then we obtain the algorithm promised by Theorem~\ref{thm:spic}.

The algorithm of Theorem~\ref{thm:dp} is a dynamic programming algorithm.
Henceforth assume that the instance $\mathcal{I}$ with threshold $\tau$ and family $\secfam$
is as promised in the statement of Theorem~\ref{thm:dp}, and let $\ord$ be the canonical ordering of $\mathcal{I}$.
We develop two different ways of separating the graphs $G$ and $G^\ord$ into smaller parts,
suitable for dynamic programming. Consequently, the dynamic programming algorithm
has in some sense `two layers', and two different types of states.

\subsection{Layer one: jumps and jump sets}

We first develop a way to split the graphs $G$ and $G^\ord$ `vertically'.
To this end, first denote for any position $p$
the section $\secp{p} = \{v \in V(G): \ord(v) < p\}$; note that this definition also makes sense
for $p = \infty$ and $\secp{\infty} = V(G)$.
Second, for any position $p$ define
$$\jump{p} = \min\{q: q > p \wedge \ord^{-1}(p)\ord^{-1}(q) \notin E(G^\ord)\};$$
in this definition we follow the convention that the minimum of an empty set is $\infty$. Moreover, we define a \emph{jump set} for position $p$ as
$$\jumpset{p} = \ord^{-1}([p,\jump{p}-1]) = \secp{\jump{p}} \setminus \secp{p}.$$
See also Fig.~\ref{fig:jump} for an illustration.

\begin{figure}
\centering
\includegraphics{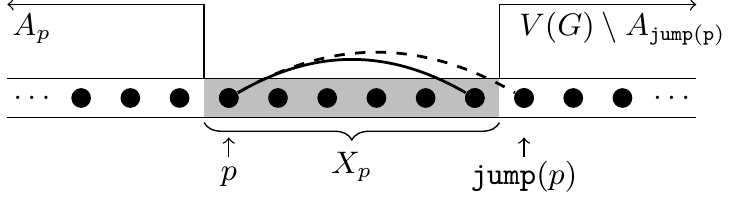}
\caption{A jump at position $p$ and the corresponding jump set.
  The jump set $\jumpset{p}$, denoted with gray, is a clique in $G^\ord$, and no edge of $G^\ord$
    connects $\secp{p}$ with $V(G) \setminus \secp{\jump{p}}$.}
\label{fig:jump}
\end{figure}

The next two lemmata follow directly from the definition of a jump and the properties
of  umbrella orderings.
\begin{lemma}\label{lem:jump-ineq}
For any positions $p$ and $q$, if $p \leq q$ then $\jump{p} \leq \jump{q}$.
\end{lemma}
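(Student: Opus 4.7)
The plan is to split into two cases based on whether $q$ falls inside or outside the interval $[p, \jump{p}-1]$. For brevity let $v_r := \ord^{-1}(r)$ denote the vertex at position $r$. Recall that by the definition of $\jump{p}$, the vertex $v_p$ is adjacent in $G^\ord$ to each of $v_{p+1}, v_{p+2}, \ldots, v_{\jump{p}-1}$ (if $\jump{p} = \infty$ this means all vertices to the right of $v_p$).

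The first case is when $q \geq \jump{p}$. Then directly from the definition $\jump{q} > q \geq \jump{p}$, and we are done. The second case is when $p \leq q < \jump{p}$, and here we invoke the umbrella property. Fix any $r$ with $q < r < \jump{p}$. Then $p \leq q < r < \jump{p}$, so by the previous observation $v_p v_r \in E(G^\ord)$. If $p < q$, applying the umbrella property to the triple $(v_p, v_q, v_r)$ yields $v_q v_r \in E(G^\ord)$, while if $p = q$ there is nothing to prove. Either way, $v_q$ is adjacent in $G^\ord$ to every vertex in positions $q+1, q+2, \ldots, \jump{p}-1$, and hence by the defining minimality property of $\jump{q}$ we conclude $\jump{q} \geq \jump{p}$.

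There is no real obstacle here: the statement is essentially a direct corollary of the umbrella property of $\ord$ for the graph $G^\ord$. The only mildly delicate point is to handle the degenerate case $p = q$ (in which the umbrella property is inapplicable because one needs three distinct positions), and to be careful when $\jump{p} = \infty$, but in the latter situation the same argument shows $v_q v_r \in E(G^\ord)$ for every $r > q$, so $\jump{q} = \infty = \jump{p}$ as well.
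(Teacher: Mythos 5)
Your proof is correct, and it matches the paper's intent: the paper states this lemma without a written proof, remarking only that it "follows directly from the definition of a jump and the properties of umbrella orderings," which is precisely the argument you spell out (the only case needing the umbrella property is $p < q < \mathtt{jump}(p)$, where adjacency of $\ord^{-1}(p)$ to every position in $(q,\mathtt{jump}(p))$ transfers to $\ord^{-1}(q)$). Your handling of the degenerate cases $p=q$ and $\mathtt{jump}(p)=\infty$ is also fine.
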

\begin{lemma}\label{lem:jump-cut} Jump set 
$\jumpset{p}$ is  a clique in $G^\ord$, but no edge of $G^\ord$
connects a vertex of $\secp{p}$ with a vertex of $V(G) \setminus \secp{\jump{p}}$.
\end{lemma}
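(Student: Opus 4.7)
The plan is to prove both statements directly from the definition of $\jump{p}$ and the umbrella property of $\ord$ on $G^\ord = G + \sol^\ord$. Since $\ord$ is an umbrella ordering of $G^\ord$, I may freely use the alternative formulation recorded in Section~\ref{sec:prelims}: whenever $\ord(a) \leq \ord(a') < \ord(b') \leq \ord(b)$ and $ab \in E(G^\ord)$, then also $a'b' \in E(G^\ord)$. Both parts will be quick consequences once the four witnesses for the umbrella property are chosen correctly.

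For the first claim, that $\jumpset{p}$ is a clique, fix distinct $u,v \in \jumpset{p}$ with $\ord(u) < \ord(v)$. If $\jump{p} = p+1$ then $\jumpset{p}$ is a singleton and the claim is vacuous, so assume $\jump{p} > p+1$. Let $w := \ord^{-1}(p)$ and $w' := \ord^{-1}(\jump{p}-1)$. By the minimality built into the definition of $\jump{p}$, we have $ww' \in E(G^\ord)$. Since $\ord(w) = p \leq \ord(u) < \ord(v) \leq \jump{p}-1 = \ord(w')$, the umbrella property applied with $a=w$, $a'=u$, $b'=v$, $b=w'$ yields $uv \in E(G^\ord)$.

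For the second claim, suppose $u \in \secp{p}$ and $v \in V(G) \setminus \secp{\jump{p}}$; the statement is vacuous if $\jump{p} = \infty$, so assume $\jump{p}$ is a valid position, and let $w := \ord^{-1}(p)$ and $w' := \ord^{-1}(\jump{p})$. By the very definition of $\jump{p}$, we have $ww' \notin E(G^\ord)$. Assume, for contradiction, that $uv \in E(G^\ord)$. Then $\ord(u) < p = \ord(w)$ and $\ord(w') = \jump{p} \leq \ord(v)$, so applying the umbrella property with $a=u$, $a'=w$, $b'=w'$, $b=v$ would force $ww' \in E(G^\ord)$, a contradiction.

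The only thing that warrants a small care is the corner case $\jump{p} = p+1$ (resp.\ $\jump{p} = \infty$) in each of the two parts, but both degenerate trivially. Consequently I do not expect any real obstacle; the lemma is essentially the textbook restatement that a jump in an umbrella ordering corresponds to a cut whose `inside' forms a clique.
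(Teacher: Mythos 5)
Your proof is correct and is essentially the argument the paper has in mind (the paper omits it, stating the lemma follows directly from the definition of a jump and the umbrella property): both parts are exactly the right applications of the four-vertex form of the umbrella property, using minimality of $\mathtt{jump}(p)$ for the clique part and the defining non-edge $\sigma^{-1}(p)\sigma^{-1}(\mathtt{jump}(p))$ for the separation part. The only cosmetic point is that in the clique part the case $\mathtt{jump}(p)=\infty$ makes $\sigma^{-1}(\mathtt{jump}(p)-1)$ undefined, but then $\sigma^{-1}(p)$ is adjacent to every later vertex and the same umbrella step applies with $b=v$, so nothing is lost.
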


We now slightly augment the graph $G$ so that $\jump{p} \neq \infty$ for all interesting
positions; see also Figure~\ref{fig:augment}.
We take $\Oh(n^2)$ branches, guessing the first
and the last vertex of $G$ in the ordering $\ord$; denote them by $\alpha$ and $\omega$.
We introduce  new vertices, $\alpha_1,\alpha_2,\omega_1,\omega_2,\omega_3$ 
and new edges $\alpha_1\alpha_2,\alpha\alpha_1,\omega_2\omega_3,\omega_1\omega_2,\omega\omega_1$ in $G$.
We also introduce new positions $-1,0,n+1,n+2,n+3$, isolated in $\Gdown$ and connected
by edges $\{-1,0\},\{0,1\},\{n,n+1\},\{n+1,n+2\},\{n+2,n+3\}$ in $\Gup$.
We define $\pos_{\alpha_1} = \{0\}$, $\pos_{\alpha_2} = \{-1\}$, $\pos_{\omega_1} = \{n+1\}$,
   $\pos_{\omega_2} = \{n+2\}$ and $\pos_{\omega_3} = \{n+3\}$.
Moreover, we put $\alpha_2$ and $\alpha_1$ before all vertices of $G$ in the ordering $\ord_0$,
and $\omega_1$, $\omega_2$ and $\omega_3$ after them.
Note that, if we precede all the vertices in the ordering $\ord$ with
$\alpha_2,\alpha_1$ and succeed with $\omega_1,\omega_2,\omega_3$ we obtain an ordering with no
higher cost. Due to the way we have extended $\ord_0$ to the new vertices, the
extended ordering $\ord$ defined in this way is the canonical ordering
of the extended graph $G$. Hence, we may abuse the notation and denote by $G$
the graph after the addition of these five new vertices, and assume that $V(\Gdown) = V(\Gup) = \{1,2,\ldots,|V(G)|\}$ again.

\begin{figure}
\centering
\includegraphics{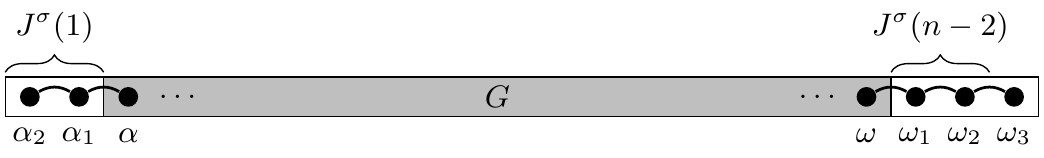}
\caption{Augmentation of the input graph $G$.}
\label{fig:augment}
\end{figure}

Observe now that $\jump{1} = 3$ and $\jumpset{1} = \{\alpha_2,\alpha_1\}$,
 as $\ord^{-1}(1) = \alpha_2$ and $\ord^{-1}(3) = \alpha$.
Moreover, $\jump{n-2} = n$ and $\jumpset{n-2} = \{\omega_1,\omega_2\}$,
as $\ord^{-1}(n-2) = \omega_1$, $\ord^{-1}(n-1) = \omega_2$, and $\ord^{-1}(n) = \omega_3$

The main observation now is that a jump set, together with all edges of $\sol^\ord$
incident with it (i.e., $\incsol{\sol^\ord}{\jumpset{p}}$) contains
all sufficient information to divide the problem into parts before and after
a jump set.
\begin{lemma}\label{lem:jump-equiv}
For any position $p$, the following holds.
\begin{enumerate}
\item
For any $u_1,u_2 \in \jumpset{p}$ such that $\ord(u_1) \leq \ord(u_2)$ we have
\begin{align}
N_{G^\ord}(u_1) \cap \secp{p} &\supseteq N_{G^\ord}(u_2) \cap \secp{p}, \label{eq:jump-left} \\
N_{G^\ord}(u_1) \setminus \secp{\jump{p}} &\subseteq N_{G^\ord}(u_2) \setminus \secp{\jump{p}}. \label{eq:jump-right}
\end{align}
\item For any bijection $\ord_p: \jumpset{p} \to [p,\jump{p}-1]$
such that $\ord_p(u) \in \pos_u$ for any $u \in \jumpset{p}$ and 
both inclusions~\eqref{eq:jump-left} and~\eqref{eq:jump-right} hold
for any $u_1,u_2 \in \jumpset{p}$ with $\ord_p(u_1)  \leq \ord_p(u_2)$,
if we define an ordering $\ord'$ of $V(G)$ as
$\ord'(u) = \ord_p(u)$ if $u \in \jumpset{p}$ and $\ord'(u) = \ord(u)$ otherwise,
then $\ord'$ is feasible and $\ord'(G^{\ord'})$ is a subgraph of $\ord(G^\ord)$. 
\end{enumerate}
\end{lemma}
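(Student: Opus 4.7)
For Part 1, my plan is to invoke the umbrella property of $\ord$ on $G^\ord$ directly. Given $u_1,u_2 \in \jumpset{p}$ with $\ord(u_1) \leq \ord(u_2)$, any $w \in N_{G^\ord}(u_2) \cap \secp{p}$ satisfies $\ord(w) < p \leq \ord(u_1) \leq \ord(u_2)$, and the edge $u_2w \in E(G^\ord)$ together with the umbrella property applied to the triple $(w,u_1,u_2)$ forces $u_1w \in E(G^\ord)$. The second inclusion is entirely symmetric, with $w$ lying to the right of $u_2$.

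For Part 2, I first verify feasibility of $\ord'$. The position condition $\ord'(u) \in \pos_u$ is immediate from the hypothesis on $\ord_p$ when $u \in \jumpset{p}$, and from feasibility of $\ord$ otherwise. For $E(\ord'(G)) \subseteq E(\Gup)$, I take any edge $uv \in E(G)$ and split on how many endpoints lie in $\jumpset{p}$. Both outside is trivial since $\ord'$ agrees with $\ord$ there. If both lie inside, their $\ord'$-positions fall in $[p,\jump{p}-1]$; since $\jumpset{p}$ is a clique in $G^\ord$ by Lemma~\ref{lem:jump-cut} and $\ord(G^\ord) \subseteq \Gup$ by feasibility of $\ord$, the whole interval $[p,\jump{p}-1]$ induces a clique in $\Gup$. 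If exactly one endpoint is inside, say $u \in \jumpset{p}$ with $\ord(v) < p$, then the umbrella property of the identity on $\Gup$ applied to the chain $\ord(v) < \ord'(u) \leq \ord(u)$ and the edge $\ord(v)\ord(u) \in E(\Gup)$ yields $\ord(v)\ord'(u) \in E(\Gup)$.

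The heart of Part 2 is the containment $\ord'(G^{\ord'}) \subseteq \ord(G^\ord)$. My key observation is that the hypothesis on $\ord_p$, together with Part 1 applied to $\ord$, forces both the $\ord$-sorted and the $\ord_p$-sorted sequences of left-neighborhoods $N_{G^\ord}(u) \cap \secp{p}$ (indexed by $u \in \jumpset{p}$) to be decreasing chains of subsets of $\secp{p}$ built from the same underlying multiset of sets, so they must coincide term by term; symmetrically for the right-neighborhoods in $V(G) \setminus \secp{\jump{p}}$. Consequently, for every position $q \in [p,\jump{p}-1]$, the vertices $\ord^{-1}(q)$ and ${\ord'}^{-1}(q)$ share the same $G^\ord$-neighborhood both inside $\secp{p}$ and inside $V(G) \setminus \secp{\jump{p}}$. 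I expect this neighborhood-invariance observation to be the main obstacle, since it is what enables the case analysis below.

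With this invariance in hand, I fix any edge $ab \in E(G^{\ord'})$ and argue that $\{\ord'(a),\ord'(b)\} \in E(\ord(G^\ord))$. If $ab \in E(G)$, I split on $\jumpset{p}$-membership of $a,b$: both outside is trivial; both inside follows from the cliqueness of $\jumpset{p}$ in $G^\ord$; and one of each is resolved exactly by the invariance just noted. If instead $ab \in \sol^{\ord'}$, I consider the two defining criteria: either $\ord'(a)\ord'(b) \in E(\Gdown) \subseteq E(\ord(G^\ord))$ by feasibility of $\ord$, or there is a sandwiching edge $a''b'' \in E(G)$ whose $\ord'$-positions bracket those of $a,b$. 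In the latter case the already-handled $G$-edge case promotes $a''b''$ to an edge of $G^\ord$ between positions $\ord'(a''),\ord'(b'')$, and the umbrella property of $\ord$ on $G^\ord$ then propagates that edge inward to the positions $\ord'(a),\ord'(b)$, completing the proof.
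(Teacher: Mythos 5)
Your proof is essentially correct, and its core insight coincides with the paper's: the two inclusions for $\ord_p$, combined with Part~1 for $\ord$ and the fact that $\jumpset{p}$ is a clique in $G^\ord$, force each position $q\in[p,\jump{p}-1]$ to receive under $\ord$ and under $\ord'$ vertices with identical $G^\ord$-neighbourhoods in $\secp{p}$ and in $V(G)\setminus\secp{\jump{p}}$ (your ``invariance''; the chain argument for the sorted multisets of comparable neighbourhoods is valid). Where you diverge is in how the conclusion is extracted. The paper phrases the invariance as ``$\ord$ and $\ord'$ differ only on the internal order of twin classes of $G^\ord$'', concludes $\ord'(G^\ord)=\ord(G^\ord)$, and then gets both feasibility and $\sol^{\ord'}\subseteq\sol^\ord$ (hence $\ord'(G^{\ord'})\subseteq\ord(G^\ord)$) in one stroke from the minimality of $\sol^{\ord'}$ (Lemma~\ref{lem:unique-sandwich-sol}). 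You instead verify $\ord'(G^{\ord'})\subseteq\ord(G^\ord)$ edge by edge, splitting on $E(G)$ versus the two criteria defining $\sol^{\ord'}$; this is more laborious but self-contained, and each case checks out (for the $\Gdown$ criterion the correct justification is Lemma~\ref{lem:unique-sandwich-sol}, i.e.\ $E(\Gdown)\subseteq E(\ord(G^\ord))$, rather than ``feasibility'', which concerns $\Gup$).

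One step needs repair as written: in the feasibility check for an edge $uv\in E(G)$ with $u\in\jumpset{p}$ and $\ord(v)<p$, you shrink the edge $\ord(v)\ord(u)\in E(\Gup)$ along the chain $\ord(v)<\ord'(u)\leq\ord(u)$; but nothing guarantees $\ord'(u)\leq\ord(u)$ (and symmetrically $\ord'(u)\geq\ord(u)$ may fail for a neighbour to the right of $\jumpset{p}$), so the umbrella shrinking argument does not apply in that sub-case. The fix is immediate with your own invariance: the vertex $w=\ord^{-1}(\ord'(u))$ has the same left (resp.\ right) $G^\ord$-neighbourhood as $u$, so $vw\in E(G^\ord)$ and $\ord(v)\ord'(u)\in E(\ord(G^\ord))\subseteq E(\Gup)$. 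With that substitution your argument is complete.
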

\begin{proof}
The first statement is straightforward from the properties of an umbrella ordering.
Let $\ord_p$ and $\ord'$ be as in the second statement.
Observe that inclusions~\eqref{eq:jump-left} and~\eqref{eq:jump-right}, together with the fact that $X_p$ is a clique in $\ord(G^\ord)$,
imply that $\ord'$ and $\ord$ differ only on the internal order of twin classes of $G^\ord$ and consequently $\ord'(G^\ord) = \ord(G^\ord)$.
Together with the fact that $\ord'(u)\in \pos_u$ for any $u\in V(G)$, this means that $\ord'$ is a feasible ordering of $G$ and an umbrella ordering of $G^\ord$.
Consequently $\sol^{\ord'} \subseteq \sol^\ord$, $\ord'(G^{\ord'})$ is a subgraph of $\ord'(G^\ord) = \ord(G^\ord)$, and the lemma is proven.
\end{proof}

We use Lemma~\ref{lem:jump-equiv} to fit the task of computing $\ord|_{\jumpset{p}}$ into Lemma~\ref{lem:lex-match}.
\begin{lemma}\label{lem:jump-detect}
Given a position $p$ and the sets $\jumpset{p}$, $\secp{p}$ and
$\incsol{\sol^\ord}{\jumpset{p}}$, one can
in polynomial time compute the ordering $\ord|_{\jumpset{p}}$.
\end{lemma}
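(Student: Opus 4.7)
The plan is to reduce the task to an application of Lemma~\ref{lem:lex-match}. First, I would reconstruct, for every $u\in\jumpset{p}$, the full neighbourhood $N_{G^\ord}(u)$: the edges of $G$ incident with $u$ are read directly from $G$, and the edges of $\sol^\ord$ incident with $u$ are provided as part of $\incsol{\sol^\ord}{\jumpset{p}}$. Using the known sets $\secp{p}$ and $\jumpset{p}$, I then compute for each $u\in\jumpset{p}$ the two sets $N^L(u) := N_{G^\ord}(u)\cap \secp{p}$ and $N^R(u) := N_{G^\ord}(u)\setminus \secp{\jump{p}}$.

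Next, Lemma~\ref{lem:jump-equiv}(1) tells me that $\ord|_{\jumpset{p}}$ must be compatible with the preorder $u \preceq v$ defined by $N^L(u)\supseteq N^L(v)$ and $N^R(u)\subseteq N^R(v)$. Since such a compatible ordering exists (namely $\ord|_{\jumpset{p}}$ itself), this relation is in fact a total preorder on $\jumpset{p}$, a property that can be directly verified; a failed check would signal an inconsistent input. For each $u$ I count $s^<(u) := |\{v\in \jumpset{p}: v\prec u\}|$ and $s^>(u) := |\{v\in \jumpset{p}: u\prec v\}|$, and set
\[
A_u := \pos_u \cap \bigl[\,p+s^<(u),\ \jump{p}-1-s^>(u)\,\bigr].
\]
Any bijection $\ord_p:\jumpset{p}\to[p,\jump{p}-1]$ with $\ord_p(u)\in A_u$ for every $u$ automatically satisfies both inclusion conditions of Lemma~\ref{lem:jump-equiv}(1): each $\preceq$-equivalence class occupies a contiguous block of positions on which $N^L$ and $N^R$ are constant, and the vertices inside such a block are true twins in $G^\ord$ (they have identical left and right neighbourhoods and $\jumpset{p}$ is a clique by Lemma~\ref{lem:jump-cut}). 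Conversely, every bijection obeying the inclusion conditions together with the position constraints has exactly this shape.

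I then invoke Lemma~\ref{lem:lex-match} with $X := \jumpset{p}$ ordered by $\ord_0$, $Y := [p,\jump{p}-1]$ in the natural order, and allowed sets $A_u$. This returns in polynomial time the bijection $\ord_p^\star$ that is lexicographically smallest (with respect to $\ord_0$) among those satisfying $\ord_p^\star(u)\in A_u$. To conclude $\ord_p^\star = \ord|_{\jumpset{p}}$, I extend $\ord_p^\star$ to an ordering $\ord'$ of $V(G)$ by declaring $\ord' := \ord$ outside $\jumpset{p}$. By Lemma~\ref{lem:jump-equiv}(2), $\ord'$ is feasible and $\sol^{\ord'}\subseteq \sol^\ord$, so $\cost(\ord')=\cost(\ord)$ by optimality of $\ord$. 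Since $\ord$ and $\ord'$ agree outside $\jumpset{p}$, the lex-minimality of the canonical $\ord$ among minimum-cost feasible orderings forces $\ord|_{\jumpset{p}}$ to be lex-minimum among admissible bijections on $\jumpset{p}$, which by the guarantee of Lemma~\ref{lem:lex-match} is exactly $\ord_p^\star$.

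The main obstacle I anticipate is precisely the translation of the two-sided inclusion constraints of Lemma~\ref{lem:jump-equiv}(1) into per-vertex allowed sets, which requires arguing both that the induced relation is a total preorder and that any reshuffle within an equivalence class is harmless. Both facts hinge on the special structure of a jump --- the jump set being a clique, combined with monotonicity of left/right neighbourhoods --- so this is the step where the properties of jumps are really used; the remainder of the argument is bookkeeping.
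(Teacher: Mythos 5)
Your proposal is correct and follows essentially the same route as the paper's proof: compute the left/right neighbourhoods from the given data, use Lemma~\ref{lem:jump-equiv}(1) to obtain the total quasi-order on $\jumpset{p}$, feed the resulting block/position constraints together with $\pos_u$ into Lemma~\ref{lem:lex-match}, and conclude via Lemma~\ref{lem:jump-equiv}(2) combined with the cost- and lexicographic minimality of the canonical ordering $\ord$. Your explicit encoding of the block constraints as per-vertex intervals $A_u$ is just a concrete rendering of what the paper does implicitly when it restricts each block $U_j$ to its range of positions.
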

\begin{proof}
First, observe that the data promised in the lemma statement allows
us to compute $N_{G^\ord}(u) \cap \secp{p}$ and $N_{G^\ord}(u) \setminus \secp{\jump{p}}$
for every $u \in \jumpset{p}$.
Define a binary relation $\preceq$ on $\jumpset{p}$ as $u_1 \preceq u_2$ if and only if
both~\eqref{eq:jump-left} and~\eqref{eq:jump-right} hold for $u_1$ and $u_2$.
Lemma~\ref{lem:jump-equiv} asserts that $\preceq$ is a total quasi-order on $\jumpset{p}$.
That is, the set $\jumpset{p}$ can by partitioned into sets $U_1,U_2,\ldots,U_s$
such that $u_1 \preceq u_2$ and $u_2 \preceq u_1$ for any $1 \leq j \leq s$ and
$u_1,u_2 \in U_j$, and $u_1 \preceq u_2$, $u_2 \not\preceq u_1$ for any $1 \leq j_1 < j_2 \leq s$
and $u_1 \in U_{j_1}$, $u_2 \in U_{j_2}$.
(Formally, we terminate the current branch if $\preceq$ does not satisfy these properties.)

Observe that $\ord|_{\jumpset{p}}$ maps $\jumpset{p}$ onto $[p,\jump{p}-1]$.
Lemma~\ref{lem:jump-equiv} asserts that all vertices of $U_1$
are placed by $\ord$ on the first $|U_1|$ positions of the range of $\ord|_{\jumpset{p}}$,
all vertices of $U_2$ are placed on the next $|U_2|$ positions etc.
We use Lemma~\ref{lem:lex-match} to find a lexicographically minimum ordering $\ord_p$ that satisfies the above
and additionally $\ord_p(u) \in \pos_u$ for each $u \in \jumpset{p}$.
Define $\ord'$ as in Lemma~\ref{lem:jump-equiv}.
By the minimality of $\ord$, we have $\cost(\ord') \geq \cost(\ord)$, but Lemma~\ref{lem:jump-equiv}
asserts that $\ord'(G^{\ord'})$ is a subgraph of $\ord(G^\ord)$.
Hence, $\ord'$ is of minimum possible cost.
By the lexicographical minimality of $\ord_p$, we have $\ord_p = \ord|_{\jumpset{p}}$
and the lemma is proven.
\end{proof}


With help of family $\secfam$, Lemma~\ref{lem:jump-detect} allows us to efficiently enumerate
jump sets with their surroundings.
\begin{theorem}\label{thm:jump-enum}
One can in $n^{\Oh(k/\tau)}|\secfam|^2$ time enumerate a family $\jumpfam$
of at most $n^{\Oh(k/\tau)}|\secfam|^2$ tuples $(A,X,\ord_X)$ such that:
\begin{enumerate}
\item in each tuple $(A,X,\ord_X)$ we have
\begin{enumerate}
\item $A,X \subseteq V(G)$ and $A \cap X = \emptyset$,
\item $\Gdown[[|A|+1,|A|+|X|]]$ is a complete graph,
\item $\ord_X$ is a bijection between $X$ and $[|A|+1,|A|+|X|]$;
\end{enumerate}
\item for any position $p$, if there are at most $2k/\tau$
edges of $\sol^\ord$ incident to $\jumpset{p}$, then
the tuple $J^\ord(p) := (\secp{p},\jumpset{p},\ord|_{\jumpset{p}})$ belongs to $\jumpfam$.
\end{enumerate}
\end{theorem}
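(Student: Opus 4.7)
The plan is to enumerate candidate tuples by iterating over pairs of sections in $\secfam$ and then, for each pair, recovering the ordering within the would-be jump set by first guessing the small set of completion edges incident to it and then invoking Lemma~\ref{lem:jump-detect}. The starting observation is that in the correct tuple $J^\ord(p) = (\secp{p}, \jumpset{p}, \ord|_{\jumpset{p}})$, both $\secp{p}$ and $\secp{\jump{p}} = \secp{p} \cup \jumpset{p}$ are sections consistent with $\ord$, so by the hypothesis on $\secfam$ both lie in $\secfam$. Hence I enumerate all ordered pairs $(A, B) \in \secfam \times \secfam$ with $A \subseteq B$, set $X := B \setminus A$, and immediately discard the pair unless condition~(1b) holds, which can be checked directly from $\Gdown$ in polynomial time.

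For each surviving pair $(A, X)$ I use the assumption that the correct $\incsol{\sol^\ord}{\jumpset{p}}$ has at most $2k/\tau$ edges in order to guess this set. Concretely, I enumerate every set $F$ of at most $2k/\tau$ pairs $\{u, v\} \notin E(G)$ with at least one endpoint in $X$; the number of such pairs is at most $|X| \cdot n \leq n^2$, so there are $n^{\Oh(k/\tau)}$ choices of $F$. For each triple $(A, X, F)$ I run the procedure from Lemma~\ref{lem:jump-detect}, substituting $A$ for $\secp{p}$, $X$ for $\jumpset{p}$ and $F$ for $\incsol{\sol^\ord}{\jumpset{p}}$; whenever the procedure succeeds and returns a bijection $\ord_X : X \to [|A|+1, |A|+|X|]$, I add $(A, X, \ord_X)$ to $\jumpfam$.

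Correctness follows because in the correct case the triple $(\secp{p}, \jumpset{p}, \incsol{\sol^\ord}{\jumpset{p}})$ is among those considered, and Lemma~\ref{lem:jump-detect} then outputs precisely $\ord|_{\jumpset{p}}$, so $J^\ord(p) \in \jumpfam$. Conditions~(1a) and~(1c) hold by construction, and~(1b) by the filtering step. The total number of produced tuples is at most $|\secfam|^2 \cdot n^{\Oh(k/\tau)}$, and the running time is bounded by the same quantity times a polynomial factor per triple, as claimed.

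The main obstacle to a naive approach is that the ordering of a large jump set $X$ can have $|X|!$ possibilities, which is far too many to enumerate directly. The decisive leverage is Lemma~\ref{lem:jump-detect}: once the small set $F$ of completion edges incident to $X$ is known, the correct ordering of $X$ is essentially forced, up to the canonical lexicographic tiebreaking already built into $\ord$. The budget bound $|F| \leq 2k/\tau$ then collapses the enumeration of orderings to $n^{\Oh(k/\tau)}$ possibilities per pair, which, combined with the $|\secfam|^2$ choices for $(A, B)$, yields precisely the target running time.
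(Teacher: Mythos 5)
Your proposal is correct and follows essentially the same route as the paper's proof: enumerate the pair of sections $\secp{p}$ and $\secp{\jump{p}}$ from $\secfam$ (recovering $\jumpset{p}$ as their difference), guess the at most $2k/\tau$ completion edges incident to the jump set in $n^{\Oh(k/\tau)}$ ways, and invoke Lemma~\ref{lem:jump-detect} to reconstruct $\ord|_{\jumpset{p}}$, filtering candidates by the polynomially checkable conditions. The only point glossed over---why the correct tuple $J^\ord(p)$ itself satisfies condition (1b) and thus survives the filter---is asserted equally tersely in the paper (``by the definition of a jump''), so your write-up matches the paper's level of detail as well as its structure.
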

\begin{proof}
We provide a procedure of guessing at most $n^{\Oh(k/\tau)}|\secfam|^2$ candidate tuples that will constitute the family $\jumpfam$. Since the promised properties of elements of $\jumpfam$ can be checked in polynomial time, it suffices to argue that every triple of the form $(\secp{p},\jumpset{p},\ord|_{\jumpset{p}})$ will be among the guessed candidates.

The number of choices for $\secp{p}$ and $\secp{\jump{p}}$ is $|\secfam|^2$.
Observe that then $\jumpset{p} = \secp{\jump{p}} \setminus \secp{p}$.
Furthermore, there are $n^{\Oh(k/\tau)}$ ways to choose $\incsol{\sol^\ord}{\jumpset{p}}$
and, by Lemma~\ref{lem:jump-detect}, we can further deduce $\ord|_{\jumpset{p}}$. Finally,  observe that  by the definition of a jump it follows that every triple $(\secp{p},\jumpset{p},\ord|_{\jumpset{p}})$ satisfies the promised properties of the elements of $\jumpfam$.
\end{proof}

We are now ready to describe the first layer of our dynamic programming algorithm.
\begin{definition}[layer-one state]
A \emph{layer-one state} is a pair $(J^1,J^2)$ of two elements of $\jumpfam$,
$J^1 = (A^1,X^1,\ord_X^1)$, $J^2 = (A^2,X^2,\ord_X^2)$ such that 
$A^1 \subseteq A^2$ and $(A^1 \cup X^1) \subseteq (A^2 \cup X^2)$.
The \emph{value} of a layer-one state $(J^1,J^2)$ is a bijection
$f[J^1,J^2] : (A^2 \cup X^2) \setminus A^1 \to [|A^1|+1,|A^2\cup X^2|]$ satisfying the following:
\begin{enumerate}
\item $f[J^1,J^2]$ is a feasible ordering of its domain, that is,
  for any $u \in (A^2 \cup X^2) \setminus A^1$ we have $f[J^1,J^2](u) \in \pos_u$ and
  for any $u_1,u_2 \in (A^2 \cup X^2) \setminus A^1$ such that $u_1u_2\in E(G)$, we have
  $f[J^1,J^2](u_1)f[J^1,J^2](u_2) \in E(\Gup)$;
\item $f[J^1,J^2](u) = \ord_X^1(u)$ for any $u \in X^1$ and
$f[J^1,J^2](u) = \ord_X^2(u)$ for any $u \in X^2$;
  \label{p:layer-one-last}
\item among all functions $f$ satisfying the previous conditions,\label{p:layer-one-min}
  $f[J^1,J^2]$ minimizes the cardinality of $\sol^f$ 
(where in the expression $\sol^f$ the function $f$ is treated as an ordering of the set $(A^2 \cup X^2) \setminus A^1$
 in the \spic{} instance $(G,k,(\pos_u)_{u \in V(G)},\Gdown,\Gup)$);
\item among all functions $f$ satisfying the previous conditions, 
$f[J^1,J^2]$ is lexicographically minimum.
\end{enumerate}
\end{definition}

We first observe the following consequence of the above definition.
\begin{lemma}\label{lem:layer-one-1}
For any $p^1 \leq p^2$ such that $J^\ord(p^1),J^\ord(p^2) \in \jumpfam$, we have that $(J^\ord(p^1),J^\ord(p^2))$ is a layer-one state and
$$f[J^\ord(p^1),J^\ord(p^2)] = \ord|_{\secp{\jump{p^2}} \setminus \secp{p^1}}.$$
In particular, $\ord=f[J^\ord(1),J^\ord(n-2)]\cup \{(\omega_3,n)\}$.
\end{lemma}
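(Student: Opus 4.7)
The plan is to verify the structural conditions for $(J^\ord(p^1),J^\ord(p^2))$ to be a layer-one state, exhibit $g := \ord|_{\secp{\jump{p^2}}\setminus\secp{p^1}}$ as a candidate for $f[J^\ord(p^1),J^\ord(p^2)]$, and then prove optimality via an exchange argument against the canonical $\ord$. The structural part is immediate: $A^1 = \secp{p^1}\subseteq \secp{p^2} = A^2$ from $p^1\leq p^2$, and $A^1\cup X^1 = \secp{\jump{p^1}} \subseteq \secp{\jump{p^2}} = A^2\cup X^2$ from Lemma~\ref{lem:jump-ineq}; the remaining conditions ($A\cap X=\emptyset$, $\Gdown$-clique on the $X$-positions, etc.) are inherited from $\jumpfam$. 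Conditions (1) and (2) on $g$ hold since $g$ is a restriction of the feasible ordering $\ord$, and on $X^i$ it coincides with $\ord|_{X^i}=\ord_X^i$ by the definition of $J^\ord(p^i)$.

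The crux is showing conditions (3)--(4). Given any candidate $f$ on $I := \secp{\jump{p^2}}\setminus\secp{p^1}$, define an ordering $\ord'$ of $V(G)$ by $\ord'|_I = f$ and $\ord'|_{V(G)\setminus I} = \ord|_{V(G)\setminus I}$; this is a well-defined bijection because $f$ maps $I$ onto the positions $[|A^1|+1,|A^2\cup X^2|]$, which $\ord$ leaves free outside $I$. The goal is to establish $(\star)$ that $\ord'$ is feasible and $(\star\star)$ that $|\sol^{\ord'}| - |\sol^\ord| = |\sol^f| - |\sol^g|$. The main obstacle is $(\star\star)$; the key tool is Lemma~\ref{lem:jump-cut} applied at $p^1$ and $p^2$, which forbids any edge of $G^\ord$ (and a fortiori of $G$) from connecting $A^1$ to $V(G)\setminus (A^2\cup X^2)$, and forces any $G$-edge from $A^1$ into $I$ to land in $X^1$ and symmetrically on the right in $X^2$. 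Since both $\ord$ and $\ord'$ pin $X^1$ to positions $[|A^1|+1,|A^1|+|X^1|]$ and $X^2$ to the last $|X^2|$ positions of $I$ (by condition (2) of a layer-one state value), every boundary-straddling pair sees the same position data under both orderings. A case split on the location of the two endpoints then shows that both criteria (i) and (ii) in the definition of $\sol^\cdot$ give the same verdict for any pair not entirely inside $I$, while pairs both inside $I$ contribute exactly $|\sol^g|$ under $\ord$ and $|\sol^f|$ under $\ord'$; this yields $(\star\star)$, and $(\star)$ follows from the same case split.

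With $(\star)$ and $(\star\star)$ in hand, the canonicity of $\ord$ concludes the argument. If any candidate $f$ achieved $|\sol^f|<|\sol^g|$, then $\cost(\ord')<\cost(\ord)$, contradicting the optimality of $\ord$; hence $g$ realizes the minimum in condition (3). If an equal-cost $f$ were lexicographically smaller than $g$ with respect to the restriction of $\ord_0$ to $I$, then in the global sequence $(\ord'(x_1),\ldots,\ord'(x_n))$ vs.\ $(\ord(x_1),\ldots,\ord(x_n))$ the first disagreement would occur at the first $\ord_0$-index landing in $I$ on which $f$ and $g$ disagree, making $\ord'$ lexicographically smaller than $\ord$ and again contradicting the canonicity of $\ord$. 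Hence $g = f[J^\ord(p^1),J^\ord(p^2)]$. The ``in particular'' statement is the special case $p^1=1$, $p^2=n-2$: from the augmentation we have $\secp{1}=\emptyset$ and $\secp{\jump{n-2}}=\secp{n}=V(G)\setminus\{\omega_3\}$ with $\ord(\omega_3)=n$, so adjoining the pair $(\omega_3,n)$ to $f[J^\ord(1),J^\ord(n-2)] = \ord|_{V(G)\setminus\{\omega_3\}}$ recovers all of $\ord$.
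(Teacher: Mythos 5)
Your proof is correct and follows essentially the same route as the paper's: substitute a candidate $f$ into $\ord$ on $\secp{\jump{p^2}}\setminus\secp{p^1}$ to get $\ord'$, use the separation of Lemma~\ref{lem:jump-cut} together with the agreement of $\ord'$ and $\ord$ on $X^1\cup X^2$ to compare $|\sol^{\ord'}|$ with $|\sol^\ord|$, and conclude from the cost-optimality and lexicographic minimality of $\ord$. The only detail the paper includes that you skip is the observation that no edges of $\sol^\ord$ are incident to $\jumpset{1}$ or $\jumpset{n-2}$, which is what guarantees $J^\ord(1),J^\ord(n-2)\in\jumpfam$ and hence that the main claim is applicable in the ``in particular'' statement.
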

\begin{proof}
Let $M:=\secp{\jump{p^2}} \setminus \secp{p^1}$. It is straightforward to verify that $(J^\ord(p^1),J^\ord(p^2))$ is a layer-one state
and $\ord|_M$ satisfies the first \ref{p:layer-one-last} properties of the value of a layer-one state.
Also, no edges of $\sol^\ord$ are incident to $X_1$ nor to $X_{n-2}$, and hence $J^\ord(1),J^\ord(n-2) \in \jumpfam$ and
$(J^\ord(1),J^\ord(n-2))$ is a layer-one state.

Let $f$ be any function satisfying the first \ref{p:layer-one-min} conditions of the definition
of a value of the layer-one state $(J^\ord(p^1),J^\ord(p^2))$.
Let $\ord'$ be an ordering of $V(G)$ defined as $\ord'(u) = f(u)$ if $u$ is the domain of $f$, and $\ord'(u) = \ord(u)$ otherwise.
It is straightforward to verify that $\ord'$ is feasible, using the separation property provided by Lemma~\ref{lem:jump-cut} and the fact that $\ord'|_{X^1\cup X^2}=\ord|_{X^1\cup X^2}$.
For the same reasons, by the definition of $\ord'$ we have that $\sol^{\ord'}=\left(\sol^{\ord}\setminus \binom{M}{2}\right)\cup \sol^f$. By the optimality of $f$ we have that $|\sol^f|\leq |\sol^{\ord|_{M}}|\leq |\sol^{\ord}\cap \binom{M}{2}|$, and so $|\sol^{\ord'}|\leq |\sol^\ord|$. By the optimality of $\ord$ we infer that $|\sol^{\ord'}|=|\sol^{\ord}|$, and $\sol^{\ord'}$ is also a minimum completion of $G$.
Since $\sol^\ord$ is also lexicographically minimum, it is easy to see that the last criterion of the definition of the value of the layer-one state $(J^\ord(p^1),J^\ord(p^2))$ indeed chooses $\ord|_{M}$.
\end{proof}
By Lemma~\ref{lem:layer-one-1}, our goal is to compute $f[J^\ord(1),J^\ord(n-2)]$ by dynamic programming. Observe
that both $J^\ord(1)$ and $J^\ord(n-2)$ are known, due to the augmentation performed at the beginning of this section.

Our dynamic programming algorithm computes value $g[J^1,J^2]$ for every
layer-one state $(J^1,J^2)$, and we will ensure that $g[J^\ord(p^1),J^\ord(p^2)] = f[J^\ord(p^1),J^\ord(p^2)]$ for any $p^1 \leq p^2$ with $J^\ord(p^1),J^\ord(p^2) \in \jumpfam$;
we will not necessarily guarantee that the values of $f$
and $g$ are equal for other states.
(Formally, $g[J^1,J^2]$ may also take value of $\bot$, which implies that either $J^1$ or $J^2$ is not consistent with $\ord$;
 we assign this value to $g[J^1,J^2]$ whenever we find no candidate for its value.)

Consider now one layer-one state $(J^1,J^2)$ with 
$J^1 = (A^1,X^1,\ord_X^1)$, $J^2 = (A^2,X^2,\ord_X^2)$.
The base case for computing $g[J^1,J^2]$ is the case where $A^2 \subseteq A^1 \cup X^1$.
Then $\ord_X^1 \cup \ord_X^2$ is the only candidate for the value $f[J^1,J^2]$, provided that $\ord_X^1$ and $\ord_X^2$ agree on the intersection of their domains.

In the other case, we iterate through all possible tuples
$J^3 = (A^3,X^3,\ord_X^3)$, with $A^1 \subset A^3 \subset A^2$
such that both $(J^1,J^3)$ and $(J^3,J^2)$ are layer-one states,
and try $g[J^1,J^3] \cup g[J^3,J^2]$ as a candidate value for $g[J^1,J^2]$.
That is, we temporarily
pick $g[J^1,J^2]$ with the same criteria as for $f[J^1,J^2]$, but taking into
account only values $g[J^1,J^3] \cup g[J^3,J^2]$ for different choices of $J^3$.

Since the minimization for $g[J^1,J^2]$ is taken over smaller set of functions
than for $f[J^1,J^2]$, we infer that
\begin{enumerate}
\item the cardinality of $\sol^{f[J^1,J^2]}$ is not larger than the cardinality of $\sol^{g[J^1,J^2]}$;
\item even if these two sets are of equal size, $f[J^1,J^2]$ is lexicographically
not larger than $g[J^1,J^2]$.
\end{enumerate}
However, observe that if $J^1 = J^\ord(p^1)$ and $J^2 = J^\ord(p^2)$
and there exists $p^3$ such that $p^1 < p^3 < p^2$ and $J^\ord(p^3) \in \jumpfam$,
    then $g[J^1,J^\ord(p^3)]\cup g[J^\ord(p^3),J^2]$ is taken into account when evaluating $g[J^1,J^2]$. If we compute the values for the states $(J^1,J^2)$ in the order of non-decreasing values of $|A^2\setminus A^1|$, then the values $g[J^1,J^\ord(p^3)], g[J^\ord(p^3),J^2]$ have been computed before, and moreover by induction hypothesis they are equal to $f[J^1,J^\ord(p^3)]$ and $f[J^\ord(p^3),J^2]$, respectively. Therefore,
$$f[J^1,J^\ord(p^3)] \cup f[J^\ord(p^3),J^2] = \ord|_{\secp{\jump{p^2}} \setminus \secp{p^1}}$$
is taken as a candidate value for $g[J^1,J^2]$ and, consequently, $g[J^1,J^2] = f[J^1,J^2] = \ord_{\secp{\jump{p^2}} \setminus \secp{p^1}}$.

Finally, we need to ensure that $g[J^1,J^2] = f[J^1,J^2]$ in the case
when such position $p^3$ does not exist. To this end, we take also more
candidate values for $g[J^1,J^2]$, computed by the layer-two dynamic programming
in the next section. We ensure that, if $J^1 = J^\ord(p^1)$,
$J^2 = J^\ord(p^2)$ but for any $p^1 < q < p^2$ we have $J^\ord(q) \notin \jumpfam$,
then the layer-two dynamic programming actually outputs $f[J^1,J^2]$ as one of the candidates,
     and runs in time $(n|\secfam|)^{\Oh(\tau)}$ for any choice of $J^1,J^2$. By Theorem~\ref{thm:jump-enum} there are at most $n^{\Oh(k/\tau)}|\secfam|^4$ layer-one states. Hence by using $(n|\secfam|)^{\Oh(\tau)}$ work for each of them will give the running time promised in Theorem~\ref{thm:dp}.

\subsection{Layer two: chains}

In this section we are given a layer-one state $(J^1,J^2)$ with $J^1 = (A^1,X^1,\ord_X^1)$, $J^2 = (A^2,X^2,\ord_X^2)$;
denote $p^\alpha = |A^\alpha| + 1$, $r^\alpha = |A^\alpha \cup X^\alpha|+1$ for $\alpha = 1,2$.
We are to compute, in time $(n|\secfam|)^{\Oh(\tau)}$, the value $f[J^1,J^2]$, assuming: $J^1 = J^\ord(p^1)$, $J^2 = J^\ord(p^2)$, and
for no position $p^1 < q < p^2$ it holds that $J^\ord(q) \in \jumpfam$. By Theorem~\ref{thm:jump-enum}, it implies that the number of edges of $\sol^\ord$
incident to any set $\jumpset{q}$ for $p^1 < q < p^2$ is more than $2k/\tau$. Observe that the following holds  $X^\alpha=\secp{\jump{p^\alpha}} \setminus \secp{p^\alpha}$, and hence $r^\alpha=\jump{p^\alpha}$ for $\alpha=1,2$. 

For any position $q$, consider the following sequence: $z_q(0) = q$ and $z_q(i+1) = \jump{z_q(i)}$ (with the convention that $\jump{\infty} = \infty$).
Observe the following.
\begin{lemma}\label{lem:short-z}
For any $q \geq p^1$ it holds that $z_q(\tau) \geq p^2$.
\end{lemma}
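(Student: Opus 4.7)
The plan is to prove Lemma~\ref{lem:short-z} by contradiction: suppose $z_q(\tau) < p^2$, and derive $|\sol^\ord| > k$, contradicting the bound $\cost(\ord) \leq k$. Since all iterates $z_q(0) < z_q(1) < \cdots < z_q(\tau)$ are finite (each is strictly smaller than $p^2$), the sequence is strictly increasing, and the jump sets $\jumpset{z_q(i)} = \ord^{-1}([z_q(i), z_q(i+1)-1])$ for $i = 0, 1, \ldots, \tau - 1$ are pairwise disjoint by construction.

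The key structural step is a \emph{locality property}: an edge of $G^\ord$ with one endpoint in $\jumpset{z_q(i)}$ must have its other endpoint in $\jumpset{z_q(i-1)} \cup \jumpset{z_q(i)} \cup \jumpset{z_q(i+1)}$ (or outside the sequence range entirely). Indeed, applying Lemma~\ref{lem:jump-cut} at position $z_q(i+1)$ forbids any edge between $\secp{z_q(i+1)}$ and $V(G)\setminus\secp{\jump{z_q(i+1)}} = V(G)\setminus\secp{z_q(i+2)}$, ruling out edges from $\jumpset{z_q(i)}$ to $\jumpset{z_q(j)}$ with $j \geq i+2$; a symmetric application at $z_q(i-1)$ handles the other direction. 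Consequently, if indices $i$ and $i'$ differ by at least $2$, the edge sets of $\sol^\ord$ incident to $\jumpset{z_q(i)}$ and $\jumpset{z_q(i')}$ are disjoint.

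The final step is a counting argument. Since $q \geq p^1$ and $z_q(\tau) < p^2$, all $\tau + 1$ positions $z_q(0), \ldots, z_q(\tau)$ lie in $[p^1, p^2)$, and at most one of them (namely $z_q(0)$) can equal $p^1$ by strict monotonicity. Hence there are $\tau$ consecutive indices $i$ for which $p^1 < z_q(i) < p^2$. By the standing hypothesis of this subsection, $J^\ord(z_q(i)) \notin \jumpfam$ for each such index, so Theorem~\ref{thm:jump-enum} guarantees that $\jumpset{z_q(i)}$ has strictly more than $2k/\tau$ edges of $\sol^\ord$ incident to it. Selecting every second of these $\tau$ consecutive indices yields $\lceil \tau/2 \rceil \geq \tau/2$ indices pairwise at distance at least $2$; by the locality property, the corresponding incident-edge sets are pairwise disjoint subsets of $\sol^\ord$, giving
\[
|\sol^\ord| > \lceil \tau/2 \rceil \cdot (2k/\tau) \geq k,
\]
the desired contradiction.

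I expect the locality property --- showing that edges of $G^\ord$ cannot ``skip over'' an entire intermediate jump set --- to be the main subtlety, as it requires a careful double application of Lemma~\ref{lem:jump-cut} at the correct witness positions. Once locality is in place, the ``every other index'' selection and the arithmetic of $\tau$ versus $2k/\tau$ are routine.
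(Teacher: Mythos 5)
Your proof is correct and rests on the same two facts as the paper's: under the standing assumption each intermediate jump set $X_{z_q(i)}$ with $p^1 < z_q(i) < p^2$ is incident to more than $2k/\tau$ edges of $\sol^\ord$ (via Theorem~\ref{thm:jump-enum}), these jump sets are pairwise disjoint, and $|\sol^\ord| \leq k$ then caps the number of iterations at fewer than $\tau$. The paper avoids your locality lemma and the every-other-index selection simply by observing that, the jump sets being pairwise disjoint, each edge of $\sol^\ord$ can be incident to at most two of them --- which is precisely why the threshold carries the factor $2$ in $2k/\tau$ --- so your extra structural step is sound but unnecessary.
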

\begin{proof}
Consider any $q \geq p^1$. For any $i > 0$ such that $z_q(i) < p^2$ we have
that there are more than $2k/\tau$ edges of $\sol^\ord$ incident to $\jumpset{z_q(i)}$.
However, the sets $\jumpset{z_q(i)}$ are pairwise disjoint for different values of $i$.
Since $|\sol^\ord|\leq k$, we infer that for less than $\tau$ values $i > 0$ we may have $z_q(i) < p^2$, and the lemma is proven.
\end{proof}

By a straightforward induction from Lemma~\ref{lem:jump-ineq} we obtain the following.
\begin{lemma}\label{lem:z-interlace}
For any two positions $c,d$ with $c \leq d \leq \jump{c}$
and for any $i \geq 0$ it holds that
$$z_c(i) \leq z_d(i) \leq z_c(i+1).$$
\end{lemma}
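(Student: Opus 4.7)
The plan is to prove Lemma~\ref{lem:z-interlace} by straightforward induction on $i$, leveraging the monotonicity of the $\jump{\cdot}$ function guaranteed by Lemma~\ref{lem:jump-ineq}.

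For the base case $i = 0$, I would simply unfold the definition of $z_q(0)$: we have $z_c(0) = c$, $z_d(0) = d$, and $z_c(1) = \jump{c}$. The inequalities $z_c(0) \leq z_d(0) \leq z_c(1)$ then follow directly from the assumption $c \leq d \leq \jump{c}$.

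For the inductive step, suppose the inequalities $z_c(i) \leq z_d(i) \leq z_c(i+1)$ hold for some $i \geq 0$. By applying $\jump{\cdot}$ term-wise and invoking Lemma~\ref{lem:jump-ineq} (which states that $\jump{\cdot}$ is monotone nondecreasing), we obtain
\[
\jump{z_c(i)} \leq \jump{z_d(i)} \leq \jump{z_c(i+1)},
\]
which by the recursive definition of the sequence $z_q$ is exactly $z_c(i+1) \leq z_d(i+1) \leq z_c(i+2)$, completing the induction. One technical point to mind is the convention $\jump{\infty} = \infty$ that was set up earlier; but since $\jump{\cdot}$ on extended integers is still monotone under this convention, the argument goes through unchanged even once some terms in the sequence reach $\infty$.

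There is no real obstacle here: the whole statement is essentially a formal consequence of monotonicity of $\jump{\cdot}$, packaged as a two-sided interlacing bound. The only mild subtlety is remembering to verify the base case using the hypothesis $d \leq \jump{c}$ (which supplies the right-hand inequality), whereas the hypothesis $c \leq d$ supplies the left-hand one.
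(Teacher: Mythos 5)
Your proof is correct and follows exactly the route the paper intends: the paper dismisses this lemma as ``a straightforward induction from Lemma~\ref{lem:jump-ineq},'' which is precisely your induction on $i$ using monotonicity of $\jump{\cdot}$, with the base case supplied by the hypothesis $c \leq d \leq \jump{c}$. Your remark about the convention $\jump{\infty} = \infty$ is a nice touch but does not change the argument.
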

The next observation gives us the crucial separation property for the layer-two dynamic programming (see also Figure~\ref{fig:CiDi}).
\begin{lemma}\label{lem:z-cut}
For any positions $c,d$ with $c \leq d \leq \jump{c}$
define
\begin{align*}
C_i &= \ord^{-1}([z_c(i),z_d(i)-1]),\\
D_i &= \ord^{-1}([z_d(i),z_c(i+1)-1]).
\end{align*}
Then
\begin{enumerate}
\item sets $C_i,D_i$ form a partition of $V(G)\setminus A_c$;
\item for any $i \geq 0$, it holds that both $C_i \cup D_i$ and $D_i \cup C_{i+1}$
are cliques in $G^\ord$;
\item for any $j > i \geq 0$ there is no edge in $G^\ord$ between $C_i$ and $D_j$;
\item for any $i > j+1 > 0$ there is no edge in $G^\ord$ between $C_i$ and $D_j$.
\end{enumerate}
\end{lemma}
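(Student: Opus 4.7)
The plan is to observe that the sets $C_i \cup D_i$ and $D_i \cup C_{i+1}$ are exactly the jump sets $X_{z_c(i)}$ and $X_{z_d(i)}$, respectively, and then derive all four claims as consequences of the interlacing property (Lemma~\ref{lem:z-interlace}) together with the basic clique/separation property of jump sets (Lemma~\ref{lem:jump-cut}).

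First I would handle the partition claim. Applying Lemma~\ref{lem:z-interlace} with the specific positions $c, d$ gives $z_c(i) \le z_d(i) \le z_c(i+1)$, so the intervals $[z_c(i), z_d(i)-1]$ and $[z_d(i), z_c(i+1)-1]$ are disjoint, consecutive, and their union is $[z_c(i), z_c(i+1)-1]$. As $i$ ranges over non-negative integers, these blocks tile the ray $[z_c(0), \infty) = [c, \infty)$, so taking $\ord^{-1}$ partitions $V(G) \setminus A_c$. For the clique claim I would just note that
\[
C_i \cup D_i = \ord^{-1}([z_c(i), z_c(i+1)-1]) = \ord^{-1}([z_c(i), \jump{z_c(i)}-1]) = X_{z_c(i)},
\]
and analogously $D_i \cup C_{i+1} = X_{z_d(i)}$; both are cliques in $G^\ord$ by Lemma~\ref{lem:jump-cut}.

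For the two non-edge statements I would isolate a single auxiliary position to which Lemma~\ref{lem:jump-cut} directly applies. For claim (3), given $u \in C_i$ and $v \in D_j$ with $j > i$, I would take $p := z_d(i)$: then $\ord(u) < z_d(i) = p$ while $\ord(v) \ge z_d(j) \ge z_d(i+1) = \jump{p}$, placing $u$ and $v$ on opposite sides of the jump at $p$. For claim (4), given $u \in C_i$ and $v \in D_j$ with $i \ge j+2$, I would take $p := z_c(j+1)$: then $\ord(v) \le z_c(j+1)-1 < p$ while $\ord(u) \ge z_c(i) \ge z_c(j+2) = \jump{p}$, again by monotonicity of $z_c$ that follows from Lemma~\ref{lem:jump-ineq}. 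In both cases Lemma~\ref{lem:jump-cut} forbids the edge.

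I do not expect any serious obstacles here; the mild trap to avoid is keeping the two interlacing inequalities $z_c(i) \le z_d(i) \le z_c(i+1)$ straight when selecting the pivot position $p$, so that the chosen $p$ sits strictly between $\ord(u)$ and $\ord(v)$ and the corresponding $\jump{p}$ lands on the correct side. Once one notices that $C_i \cup D_i$ and $D_i \cup C_{i+1}$ coincide with jump sets built from $c$ and $d$ respectively, both the cliques and the non-edges fall out mechanically.
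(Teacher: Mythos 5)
Your proof is correct and follows essentially the same route as the paper, which simply notes that all four statements follow from the definitions $z_c(i+1)=\jump{z_c(i)}$, $z_d(i+1)=\jump{z_d(i)}$ together with Lemmata~\ref{lem:jump-cut} and~\ref{lem:z-interlace}; your write-up merely makes this explicit, including the useful identification $C_i\cup D_i=\jumpset{z_c(i)}$ and $D_i\cup C_{i+1}=\jumpset{z_d(i)}$ and the choice of pivot positions $z_d(i)$ and $z_c(j+1)$ for the two non-edge claims.
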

\begin{proof}
All statements follow from the definitions $z_c(i+1) = \jump{z_c(i)}$
and $z_d(i+1) = \jump{z_d(i)}$, and from Lemmata~\ref{lem:jump-cut} and~\ref{lem:z-interlace}.
\end{proof}

\begin{figure}
\centering
\includegraphics{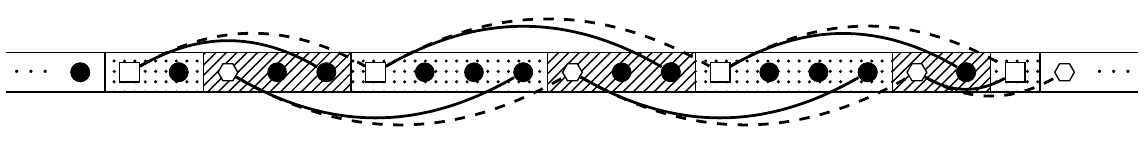}
\caption{The separation property provided by Lemma~\ref{lem:z-cut}.
The sequences $z_c(i)$ and $z_d(i)$ are denoted with rectangular and hexagonal shapes, respectively.
The sets $C_i$ and $D_i$ are denoted with dots and lines, respectively.}
\label{fig:CiDi}
\end{figure}

Intuitively, Lemma~\ref{lem:z-cut} implies that we may independently consider the vertices of
$\bigcup_{i \geq 0} C_i$ and of $\bigcup_{i \geq 0} D_i$: the sequences $z_c(i)$ and $z_d(i)$
give us some sort of `horizontal' partition of the graphs $G$ and $G^\ord$.
We now formalize this idea.

\begin{definition}[chain]
A \emph{chain} is a quadruple $(s,z,u,B)$, where
\begin{align*}
s &\in \{0,1,\ldots,\tau\}, \\
z &: \{0,1,\ldots,s\} \to [p^1,r^2], \\
u &: \{0,1,\ldots,s\} \to V(G), \\
B &: \{0,1,\ldots,s\} \to 2^{V(G)}
\end{align*}
with the following properties:
\begin{enumerate}
\item 
  $z(i) \in [p^2,r^2]$ if and only if $i = s$;
\item $z(i) < z(i+1)$ for any $0 \leq i < s$;
\item $|B(i)| = z(i)-1$ for any $0 \leq i \leq s$;
\item $B(i) \subset B(i+1)$, for any $0 \leq i < s$;
\item $u(i) \in B(j)$ if and only if $0 \leq i < j \leq s$;
\item no edge of $G$ connects a vertex of $B(i)$ with a vertex
of $V(G) \setminus B(i+1)$ for any $0 \leq i < s$.
\end{enumerate}
A chain $(s,z,u,B)$ is \emph{consistent} with the ordering $\ord$
if $s = \min\{i: z_{z(0)}(i) \geq p^2\}$ and for all $0 \leq i \leq s$
\begin{enumerate}
\item $z(i) = z_{z(0)}(i)$;
\item $\ord(u(i)) = z(i)$;
\item $B(i) = \secp{z(i)}$.
\end{enumerate}
\end{definition}

We remark here that if $p^2 \leq n-2$ then $\jump{q} \leq r^2$ for any $q \leq p^2$, and hence $z_{z(0)}(s) \leq r^2$ 
for any $z(0) \leq r^2$ in the definition above.

Our next lemma follows immediately  from the definition of a chain.
\begin{lemma}\label{lem:ord-to-chain}
For  $q \in [p^1,r^2]$, let $s = \min \{i: z_q(i) \geq p^2\}$. For every $0 \leq i \leq s$, let 
\begin{align*}
z(i)  =  z_q(i),\\
u(i) = \ord^{-1}(z(i)) ,\\
B(i) = \secp{z(i)} .
\end{align*}
Then $I^\ord(q) := (s,z,u,B)$ is a chain consistent with $\ord$.

\end{lemma}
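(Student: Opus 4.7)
The plan is to verify each condition in the definition of a chain, plus the consistency requirements, by direct computation from the definitions, leveraging the structural results already established.

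First I would check that $s$ is well-defined and lies in $\{0,1,\ldots,\tau\}$. Since $q \geq p^1$, Lemma~\ref{lem:short-z} guarantees $z_q(\tau) \geq p^2$, so the minimum defining $s$ exists and is at most $\tau$. For the range of $z$, note that $z(i) = z_q(i) \geq q \geq p^1$ by monotonicity (Lemma~\ref{lem:jump-ineq}), and for the upper bound, if $s > 0$ then $z(s-1) < p^2$, so $z(s) = \jump{z(s-1)} \leq \jump{p^2-1} \leq \jump{p^2} = r^2$ by Lemma~\ref{lem:jump-ineq} and the remark following the chain definition (applicable since $p^2 \leq n-2$ in the augmented graph). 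If $s=0$, then $z(0) = q \leq r^2$ by assumption on $q$.

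Next I would verify the six properties in order. Property (1), that $z(i) \in [p^2,r^2]$ iff $i=s$, is immediate from the minimality of $s$ together with the range bounds just established. Property (2) follows because $z(i+1) = \jump{z(i)} > z(i)$ by definition of jump. Property (3) follows because $B(i) = \secp{z(i)} = \ord^{-1}(\{1,\ldots,z(i)-1\})$ has exactly $z(i)-1$ elements. Property (4) is immediate from property (2) and the definition of $\secp{\cdot}$. For property (5), since $\ord(u(i)) = z(i)$, we have $u(i) \in B(j) = \ord^{-1}(\{1,\ldots,z(j)-1\})$ iff $z(i) < z(j)$ iff $i < j$ (using property (2)). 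Finally, property (6) is the only nontrivial one and is exactly the content of Lemma~\ref{lem:jump-cut} applied at position $z(i)$: the lemma rules out edges of $G^\ord$ between $\secp{z(i)} = B(i)$ and $V(G) \setminus \secp{\jump{z(i)}} = V(G) \setminus B(i+1)$, and since $E(G) \subseteq E(G^\ord)$, the same holds in $G$.

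Consistency with $\ord$ is then immediate: $s$ is defined exactly as $\min\{i : z_{z(0)}(i) \geq p^2\}$ (with $z(0) = q$), and the three pointwise conditions $z(i) = z_{z(0)}(i)$, $\ord(u(i)) = z(i)$, $B(i) = \secp{z(i)}$ hold by construction.

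There is no substantive obstacle here; the statement is essentially a bookkeeping verification that the four objects $(s,z,u,B)$ manufactured from a single starting position $q$ fit together into the combinatorial structure isolated by the chain definition. The only place where a nontrivial lemma is invoked is property (6), where the separation property of jump sets (Lemma~\ref{lem:jump-cut}) is doing the real work.
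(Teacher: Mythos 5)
Your verification is correct and matches the paper's intent exactly: the paper states Lemma~\ref{lem:ord-to-chain} as following immediately from the definition of a chain, and your point-by-point check (with Lemma~\ref{lem:short-z} bounding $s$, Lemma~\ref{lem:jump-ineq} and the remark after the chain definition bounding the range of $z$, and Lemma~\ref{lem:jump-cut} giving property (6)) is precisely the routine verification being left implicit.
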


Moreover, the bound of Lemma~\ref{lem:short-z} gives us the following enumeration algorithm.
\begin{theorem}\label{thm:chain-enum}
In $(n|\secfam|)^{\Oh(\tau)}$ time one can enumerate a family $\chainfam$
of at most $(n|\secfam|)^{\Oh(\tau)}$ chains that contains all chains consistent with $\ord$.
\end{theorem}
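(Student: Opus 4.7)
The plan is to enumerate chains component by component, using the fact that all four data in a chain are severely constrained once we know that it is consistent with $\ord$. First, by Lemma~\ref{lem:short-z}, any chain consistent with $\ord$ satisfies $s \leq \tau$, so we may iterate over $s \in \{0, 1, \ldots, \tau\}$ at the outer level. Next, observe that for a consistent chain we have $B(i) = \secp{z(i)}$, which is a section consistent with $\ord$; by the assumption of Theorem~\ref{thm:dp} every such section belongs to the given family $\secfam$. This lets us guess each $B(i)$ from $\secfam$ rather than from the much larger $2^{V(G)}$. Finally, since $|B(i)| = z(i) - 1$, the value $z(i)$ is determined by $B(i)$, so we do not need to guess it independently.

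Concretely, the enumeration algorithm proceeds as follows. For every $s \in \{0,1,\ldots,\tau\}$, and for every sequence of pairs $\bigl((B(0),u(0)), (B(1),u(1)), \ldots, (B(s),u(s))\bigr) \in (\secfam \times V(G))^{s+1}$, set $z(i) := |B(i)|+1$ and verify in polynomial time the syntactic conditions from the definition of a chain: that $z(i) \in [p^1,r^2]$, that $z(i) \in [p^2, r^2]$ iff $i = s$, that $z(0) < z(1) < \cdots < z(s)$, that the sets $B(i)$ are strictly increasing, that $u(i) \in B(j) \iff i < j$, and that no edge of $G$ joins $B(i)$ with $V(G) \setminus B(i+1)$. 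Every tuple passing these checks is added to $\chainfam$.

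For correctness, let $(s,z,u,B)$ be any chain consistent with $\ord$. Lemma~\ref{lem:short-z} gives $s \leq \tau$, and the definition of consistency together with the hypothesis of Theorem~\ref{thm:dp} gives $B(i) \in \secfam$ for every $i$. Hence the tuple $\bigl((B(0),u(0)),\ldots,(B(s),u(s))\bigr)$ is among those enumerated, and the chain axioms are satisfied by hypothesis, so it survives the verification and lands in $\chainfam$.

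For the running time and size bound, the number of tuples considered is at most
\[
\sum_{s=0}^{\tau} (|\secfam| \cdot n)^{s+1} \;=\; (n \cdot |\secfam|)^{\Oh(\tau)},
\]
and each is processed in time polynomial in $n$, so both $|\chainfam|$ and the total running time are $(n|\secfam|)^{\Oh(\tau)}$, as claimed. There is essentially no obstacle here; the work has all been done by Lemma~\ref{lem:short-z} (which gives the crucial subexponential bound on the length of a chain) and by the assumption on $\secfam$ (which lets us avoid enumerating subsets of $V(G)$ for each $B(i)$).
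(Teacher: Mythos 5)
Your proposal is correct and takes essentially the same approach as the paper: both bound $s \leq \tau$ via Lemma~\ref{lem:short-z}, enumerate the remaining data per index from $V(G)$ and $\secfam$ (relying on assumption 3 of Theorem~\ref{thm:dp} to justify drawing $B(i)$ from $\secfam$), and verify the chain axioms in polynomial time. The only difference is cosmetic: you derive $z(i)=|B(i)|+1$ instead of guessing $z(i)$ separately as the paper does, which does not change the $(n|\secfam|)^{\Oh(\tau)}$ bound.
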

\begin{proof}
There are $1+\tau \leq n$ possible values for $s$.
For each $0 \leq i \leq s$, there are at most $n$ choices for $z(i)$,
$n$ choices for $u(i)$ and $|\secfam|$ choices for $B(i)$.
The bound $s \leq \tau$ due to Lemma~\ref{lem:short-z} yields the desired bound.
Observe that the properties of a chain can be verified in polynomial time.
\end{proof}

We are now finally ready to state the definition of a layer-two state with its value.
\begin{definition}[layer-two state]
A \emph{layer-two state} consists of two chains $I^1 = (s^1,z^1,u^1,B^1)$, $I^2 = (s^2,z^2,u^2,B^2)$ with $I^1,I^2 \in \chainfam$ such that
\begin{enumerate}
\item $s^2 \leq s^1 \leq s^2 + 1$,
\item $z^1(i) \leq z^2(i)$, $B^1(i) \subseteq B^2(i)$ for any $1 \leq i \leq s^2$ and $z^2(i) \leq z^1(i+1)$, $B^2(i) \subseteq B^1(i+1)$ for any $1 \leq i < s^1$;
\item $u^1(i) = u^2(j)$ if and only if $z^1(i) = z^2(j)$ for any $1 \leq i \leq s^1$ and $1 \leq j \leq s^2$;
\end{enumerate}
Furthermore, we denote
\begin{align*}
C_i[I^1,I^2] &= B^2(i) \setminus B^1(i) &\mathrm{for\ any\ }0 \leq i\leq s^2,\\
D_i[I^1,I^2] &= B^1(i+1) \setminus B^2(i)&\mathrm{for\ any\ }0 \leq i < s^1,\\
Z_i[I^1,I^2] &= [z^1(i),z^2(i)-1] & \mathrm{for\ any\ }0 \leq  i \leq s^2,\\
C_{s^1}[I^1,I^2] &= (A^2 \cup X^2) \setminus B^1(s^1) &\mathrm{if\ }s^2 < s^1,\\
Z_{s^1}[I^1,I^2] &= [z^1(s^1), r^2-1] &\mathrm{if\ }s^2 < s^1,\\
\end{align*}
and require that for any $0 \leq i \leq s^1$ all positions of $Z_i[I^1,I^2]$ are pairwise adjacent in $\Gup$.
We define $\Gdown^\ast$ to be equal to $\Gdown$ with additionally $[p^2,r^2-1]$ and each $Z_i[I^1,I^2]$ turned into a clique, for every $0 \leq i \leq s^1$.
Note that by Lemma~\ref{lem:intersection-union}, $\Gdown^\ast$ is a proper interval graph with identity being an umbrella ordering. Moreover, it holds that $E(\Gdown^\ast) \subseteq E(\Gup)$ by the construction of $E(\Gdown^\ast)$ and the fact that $J^2\in \jumpfam$.

The \emph{value} of a layer-two state $(I^1,I^2)$ is a bijection
$f[I^1,I^2]: \bigcup_{i=0}^{s^1} C_i[I^1,I^2] \to \bigcup_{i=0}^{s^1} Z_i[I^1,I^2]$
such that:
\begin{enumerate}
\item $f[I^1,I^2]$ is a feasible ordering of its domain, that is,
for any $u$ in the domain of $f[I^1,I^2]$ we have $f[I^1,I^2](u) \in \pos_u$, and
  for any $u_1,u_2$ in the domain of $f[I^1,I^2]$ such that $u_1u_2\in E(G)$, it holds that
  $f[I^1,I^2](u_1)f[I^1,I^2](u_2) \in E(\Gup)$;
\item $f[I^1,I^2](u) \in Z_i[I^1,I^2]$ whenever $u \in C_i[I^1,I^2]$;
\item $f[I^1,I^2](u^1(i)) = z^1(i)$ for all $0 \leq i \leq s^1$;
\item $f[I^1,I^2](u) = \ord_X^1(u)$ whenever $u \in X^1$ and $f[I^1,I^2](u) = \ord_X^2(u)$ whenever $u \in X^2$;\label{p:layer-two-last}
\item among all functions $f$ satisfying the previous conditions, $f[I^1,I^2]$ minimizes the cardinality
of $\sol^{f,\ast}$,
where the set $\sol^{f,\ast}$ is defined as the unique minimal completion for the ordering $f$ of the subgraph of $G$ induced by the domain of $f$
and \spic{} instance $(G,k,(\pos_u)_{u \in V(G)},\Gdown^\ast,\Gup)$;\label{p:layer-two-min}
\item among all functions $f$ satisfying the previous conditions, 
$f[I^1,I^2]$ is lexicographically minimum.
\end{enumerate}
\end{definition}

Note that in the definition of a layer-two state we {\em{do not}} require that any of the chains begins in $[p^1,r^1]$, i.e. that $z^1(0)$ or $z^2(0)$ are in this interval. The values for the states where these chains begin at arbitrary positions within $[p^1,r^2]$ will be essential for computing the final value we are interested in.

Similarly as in the case of layer-one states, we have the following claim.
\begin{definition}[relevant pair]
A pair $(q^1,q^2)$ with $p^1 \leq q^1 \leq q^2 \leq \min(\jump{q^1},r^2)$ is called
\emph{relevant} if one of the following holds:
\begin{enumerate}
\item $q^2 \leq r^1$,
\item $q^1 = q^2$, or
\item there exists a position $q_{\leftarrow} \geq p^1$
such that $\jump{q_\leftarrow} \leq q^1 \leq q^2 \leq \jump{q_\leftarrow+1}$
(see also Figure~\ref{fig:relevant}).
\end{enumerate}
\end{definition}

\begin{figure}
\centering
\includegraphics{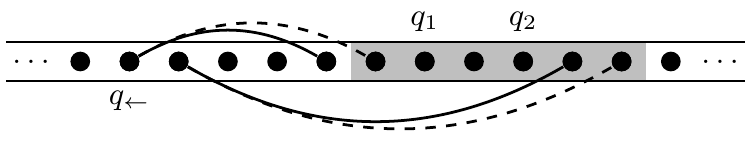}
\caption{The last case of the definition of a relevant pair $(q_1,q_2)$.
  Both positions $q_1$ and $q_2$ need to belong to the gray area.}
\label{fig:relevant}
\end{figure}

\begin{lemma}\label{lem:layer-two-ord}
For any $q_1,q_2$ such that $p^1 \leq q^1 \leq q^2 \leq \min(\jump{q^1},r^2)$, the pair $(I^\ord(q^1),I^\ord(q^2))$ is a layer-two state.
If moreover $(q^1,q^2)$ is a relevant pair, then $f[I^\ord(q^1),I^\ord(q^2)]$ is a restriction of $\ord$ to the domain of $f[I^\ord(q^1),I^\ord(q^2)]$.
In particular, $f[I^\ord(p^1),I^\ord(r^1)] = f[J^\ord(p^1),J^\ord(p^2)]$.
\end{lemma}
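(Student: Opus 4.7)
Writing $I^\alpha = I^\ord(q^\alpha) = (s^\alpha, z^\alpha, u^\alpha, B^\alpha)$ for $\alpha = 1,2$ and $\mathrm{dom}$ for the domain $\bigcup_{i=0}^{s^1} C_i[I^1, I^2]$ of $f[I^1, I^2]$, my plan is to first verify the six defining conditions of a layer-two state, and then establish the value identity via a lifting argument; the final particular case will be obtained by specialising to $(q^1,q^2) = (p^1, r^1)$. The first, routine step goes as follows: the bounds $s^2 \le s^1 \le s^2+1$, the inequalities $z^1(i) \le z^2(i) \le z^1(i+1)$, and the corresponding inclusions on $B^\alpha(i)$ all follow from Lemma~\ref{lem:z-interlace} applied with $c = q^1$, $d = q^2$ (legal since $q^2 \le \jump{q^1}$); the identification $u^1(i) = u^2(j) \iff z^1(i) = z^2(j)$ is immediate from $u^\alpha(i) = \ord^{-1}(z^\alpha(i))$; and Lemma~\ref{lem:jump-cut} combined with the inclusion $Z_i[I^1,I^2] \subseteq [z^1(i), z^1(i+1)-1]$ and the feasibility of $\ord$ yields that the positions in each $Z_i$ are pairwise adjacent in $\Gup$.

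For the value identity, note that $\ord|_{\mathrm{dom}}$ trivially satisfies conditions (1)--(4) of being the value. For conditions (5) and (6), given any candidate $f'$ satisfying (1)--(4), I define a global ordering $\ord'$ of $V(G)$ by setting $\ord'(u) = f'(u)$ for $u \in \mathrm{dom}$ and $\ord'(u) = \ord(u)$ otherwise. The technical heart is to show that $\ord'$ is feasible for $\mathcal{I}$ and that
\[
|\sol^{\ord'}| \;=\; |\sol^\ord| \;-\; |\sol^{\ord|_{\mathrm{dom}},\ast}| \;+\; |\sol^{f',\ast}|.
\]
Once such a decomposition is in place, the minimality of $\ord$ forces $|\sol^{f',\ast}| \ge |\sol^{\ord|_{\mathrm{dom}},\ast}|$, and in the case of equality the lex-minimality of $\ord$---together with the fact that $\ord'$ agrees with $\ord$ outside $\mathrm{dom}$---forces $f'$ to be lex-at-least $\ord|_{\mathrm{dom}}$, finishing (5)--(6).

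The relevance of $(q^1,q^2)$ is used precisely at this feasibility/decomposition step. By Lemma~\ref{lem:z-cut}, a vertex of $C_i[I^1,I^2]$ can have $G^\ord$-neighbours outside $\mathrm{dom}$ only inside $D_{i-1} \cup D_i$ or in the outer regions (whose positions are unchanged between $\ord$ and $\ord'$), and each of $C_i \cup D_i$ and $D_i \cup C_{i+1}$ is a clique of $G^\ord$ by the same lemma. In relevance case~1 ($q^2 \le r^1$), the whole $\mathrm{dom}$ fits inside the single jump set $X^1 = \jumpset{p^1}$ of $J^1$, so no outer interaction needs to be analysed. In case~3, the outer jumps at $q_\leftarrow$ and $q_\leftarrow+1$ confine $q^1$ and $q^2$ to a band flanked by jump sets that are cliques of $G^\ord$ uniformly incident to $\mathrm{dom}$; this is precisely the structural guarantee needed to lift any layer-two feasible permutation of $\mathrm{dom}$ to a feasible global ordering with the advertised cost, while case~2 is degenerate. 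The final sentence then follows by specialising to $(q^1,q^2) = (p^1, r^1)$: this pair is relevant via case~1 with $q^2 = r^1$, and direct inspection of $z^1,z^2$ (which satisfy $s^1 = s^2+1$) shows that $\mathrm{dom} = (A^2 \cup X^2) \setminus A^1$ and $\bigcup_i Z_i[I^1,I^2] = [p^1, r^2 - 1]$, matching the domain and codomain of $f[J^1,J^2]$; Lemma~\ref{lem:layer-one-1} then closes the loop.

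The main obstacle will be exactly the feasibility and cost-decomposition claim for $\ord'$: it requires a careful case analysis on the positions of both endpoints of every edge touching $\mathrm{dom}$, distinguishing edges internal to $\mathrm{dom}$, edges straddling $\mathrm{dom}$ and its complement, and edges of $\sol^\ord$ left untouched by the swap, and verifying in each case that no umbrella-violating triple is created. The relevance cases supply precisely the flanking cliques of $G^\ord$ that make such a check tractable, but the edge-by-edge bookkeeping will be the delicate part.
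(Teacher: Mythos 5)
Your overall route coincides with the paper's: verify the layer-two-state axioms via Lemmata~\ref{lem:jump-ineq}, \ref{lem:z-interlace} and~\ref{lem:z-cut}; prove the value identity by lifting a candidate $f'$ to a global ordering $\ord'$ that agrees with $\ord$ outside the domain $Y$ and invoking the cost- and lexicographic minimality of $\ord$; and obtain the last sentence by specialising to $(p^1,r^1)$ and quoting Lemma~\ref{lem:layer-one-1}. Two inaccuracies in your plan deserve correction. First, the exact equality $|\sol^{\ord'}| = |\sol^\ord| - |\sol^{\ord|_{Y},\ast}| + |\sol^{f',\ast}|$ need not hold: after permuting vertices inside the stripes, witnesses for fill edges of $\sol^\ord$ with both endpoints outside $Y$ may disappear. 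What one proves (and all that the minimality argument needs) is the inclusion $\sol^{\ord'} \subseteq \bigl(\sol^\ord \setminus \binom{Y}{2}\bigr) \cup \sol^{f',\ast}$, combined with the separate observation that $\sol^{\ord|_Y,\ast} \subseteq \sol^\ord \cap \binom{Y}{2}$; the latter is not completely free either, since $\sol^{\,\cdot,\ast}$ is taken with respect to $\Gdown^\ast$, and one must use Lemma~\ref{lem:z-cut} to see that the extra cliques of $\Gdown^\ast$ cost $\ord|_Y$ nothing.

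Second, your account of where relevance enters is off. In case 1 ($q^2 \le r^1$) it is \emph{not} true that the whole domain sits inside $X^1$: only the first stripe $C_0[I^\ord(q^1),I^\ord(q^2)] = \ord^{-1}([q^1,q^2-1])$ does, while the stripes $C_i$ for $i \ge 1$ stretch all the way towards $p^2$, and their interactions with $D_{i-1}$, $D_i$ and the unmoved outer vertices must be analysed via Lemma~\ref{lem:z-cut} in \emph{every} case, relevant or not. What case 1 actually buys is that the requirement $f(u)=\ord_X^1(u)$ for $u \in X^1$ freezes $C_0$, so no vertex of $C_0$ is displaced; the position $q_\leftarrow$ of case 3 is needed precisely (and only) to control edges between a possibly displaced vertex of $C_0$ and vertices placed before $q^1$, and this has to be done separately in the umbrella check, the $\Gdown^\ast$-containment check, and the $\Gup$-feasibility check. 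Finally, be aware that the ``edge-by-edge bookkeeping'' you postpone is not a routine afterthought but the bulk of the paper's proof (four separate claims showing that $\ord'$ is an umbrella ordering of $G+\sol$ for $\sol = (\sol^\ord\setminus\binom{Y}{2})\cup\sol^{f',\ast}$, that $E(\Gdown^\ast)\subseteq E(\ord'(G+\sol))$, and that $\ord'$ is feasible); as written, your proposal states the correct strategy but does not yet contain this main content.
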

\begin{proof}
By somehow abusing the notation, we denote $X^1=X_{p^1}$ and $X^2=X_{p^2}$. It is straightforward to verify from the definition that $(I^\ord(q^1),I^\ord(q^2))$ is a layer-two state
and the restriction of $\ord$ to $Y := \bigcup_i C_i[I^\ord(q^1),I^\ord(q^2)]$ satisfies the first~\ref{p:layer-two-last} requirements
of the definition of a value of a layer-two state, even if $(q^1,q^2)$ is not a relevant pair.
Moreover, observe that Lemma~\ref{lem:z-cut} implies that $\sol^\ord \cap \binom{Y}{2}$ is a completion for the ordering
$\ord|_Y$ of $Y$ in the instance $(G,k,(\pos_u)_{u \in V(G)},\Gdown^\ast,\Gup)$. Hence, $\sol^{\ord|_Y,\ast} \subseteq \sol^\ord \cap \binom{Y}{2}$.

Now assume that $(q^1,q^2)$ is a relevant pair and denote $f = f[I^\ord(q^1),I^\ord(q^2)]$ and $I^\ord(q^\alpha) = (s^\alpha,z^\alpha,u^\alpha,B^\alpha)$
for $\alpha=1,2$.
If $q^1 = q^2$, then observe that the sets $C_i[I^\ord(q^1),I^\ord(q^2)]$ are empty, and the state in question asks for an empty function. Hence, assume $q^1 < q^2$.
Define an ordering $\ord'$ of $V(G)$ so that $\ord'(u) = f(u)$ for any $u\in Y$, and $\ord'(u) = \ord(u)$ otherwise.

Let us define $\sol := (\sol^\ord \setminus \binom{Y}{2}) \cup \sol^{f,\ast}$. 
In the subsequent claims we establish some properties of the graph $G+\sol$ and ordering $\ord'$.
\begin{claim}\label{cl:layer-two:margins}
$\ord'(G+\sol)[[1,r^1-1]\cup[p^2,n]] = \ord(G^\ord)[[1,r^1-1]\cup[p^2,n]]$.
\end{claim}
\begin{proof}
Note here that $\ord'$ and $\ord$ agree on positions before $r^1$ and after $p^2-1$. Observe also that $[p^1,r^1-1]$ and $[p^2,r^2-1]$ are cliques in $\ord(G^\ord)$, and $[p^1,r^1-1]$ can have non-empty intersection only with the first of the intervals $Z_i[I^\ord(q^1),I^\ord(q^2)]$.
Since $\binom{[p^2,r^2-1]}{2},\binom{Z_i[I^\ord(q^1),I^\ord(q^2)]}{2}\subseteq E(\Gdown^\ast)\cap \binom{\sigma(Y)}{2}\subseteq E(\sigma'(G[Y]+\sol^{f,\ast}))$, it follows by the definition of $\sol$ that that intervals $[p^1,r^1-1]$ and $[p^2,r^2-1]$ are cliques in $\ord'(G+\sol)$ as well. Since $Y\subseteq \ord^{-1}([p^1,r^2-1])$, the claim follows.
\cqed\end{proof}

\begin{claim}\label{cl:layer-two:umbrella}
$\ord'$ is a umbrella ordering of $G+\sol$.
\end{claim}
\begin{proof}
Consider any $a,b,c \in V(G)$ with $ac \in E(G+\sol)$ and
$\ord'(a) < \ord'(b) < \ord'(c)$; we want to show the umbrella property for the triple
$a,b,c$ in the graph $G+\sol$. We consider a few cases, depending on the intersection $\{a,b,c\} \cap Y$.
\begin{enumerate}
\item If $a,b,c \in Y$ or $a,b,c \notin Y$, then the umbrella property holds by the definition of $\sol^\ord$ and $\sol^{f,\ast}$.
\item If $\ord'(a) \geq p^2$ or $\ord'(c) < r^1$, then recall that $\ord'(G+\sol)[[1,r^1-1]\cup[p^2,n]] = \ord(G^\ord)[[1,r^1-1]\cup[p^2,n]]$. Then the umbrella property for $a,b,c$ follows from the fact that $\ord$ is an umbrella ordering of $G^\ord$.

Hence, in the remaining cases we have in particular that $a\notin X^2$ and $c\notin X^1$. Observe also that the assumption $ac \in E(G+\sol)$ implies that $p^1 \leq \ord'(a) < \ord'(c) < r^2$, since $r^1=\jump{p^1}$ and $r^2=\jump{p^2}$.

\item If $a,c \in Y$ and $b \notin Y$ then, by the structure of $Y$, we have $a \in C_i[I^\ord(q^1),I^\ord(q^2)]$, $c \in C_j[I^\ord(q^1),I^\ord(q^2)]$ for some $1 \leq i < j \leq s^1$. We claim that $j = i+1$. Assume the contrary. Observe that if $i+1 < j$ then in particular $i < s^2$.
By Lemma~\ref{lem:z-cut}, no edge of $G^\ord$ connects $\secp{z_{q^2}(i)}$ with $V(G) \setminus\secp{z_{q^2}(i+1)}$, so in particular there is no such edge neither in $G$, which is subgraph of $G^\ord$. Likewise, there is no edge between $[1,z_{q^2}(i)]$ and $[z_{q^2}(i+1),n]$ in $\Gdown^\ast$. By the construction of $\sol^{f,\ast}$ it follows that also no edge of $\sol^{f,\ast}$ connects $\secp{z_{q^2}(i)}$ with $V(G) \setminus\secp{z_{q^2}(i+1)}$.
As $\ord$ and $\ord'$ differ only on the internal ordering of each set $C_i[I^\ord(q^1),I^\ord(q^2)]$, and $ac\in E(G+F)$, we have a contradiction,
and hence $c \in C_{i+1}[I^\ord(q^1),I^\ord(q^2)]$.
It follows that $b \in D_i[I^\ord(q^1),I^\ord(q^2)]$ and, by Lemma~\ref{lem:z-cut}, $ab,bc \in E(G^\ord)$.
  By the definition of $\sol$, $ab,bc \in E(G)\cup \sol$.

In the remaining cases we have that either $a$ or $c$ does not belong to $Y$. Hence $ac\in E(G^\ord)$ by the definition of $F$.

\item If $a \in Y \setminus X^2$ and $c \notin Y$, then, by Lemma~\ref{lem:z-cut}, we have $a \in C_i[I^\ord(q^1),I^\ord(q^2)]$
and $c \in D_i[I^\ord(q^1),I^\ord(q^2)]$ for some $0  \leq i < s^1$. By Lemma~\ref{lem:z-cut},
$C_i[I^\ord(q^1),I^\ord(q^2)] \cup D_i[I^\ord(q^1),I^\ord(q^2)]$ is a clique in $G^\ord$, and, by the definition of $\Gdown^\ast$,
$C_i[I^\ord(q^1),I^\ord(q^2)]$ is a clique in $G+\sol$. Hence, $ab,bc \in E(G) \cup \sol$ regardless whether $b\in Y$ or not.
\item If $a \notin Y$ and $c \in C_i[I^\ord(q^1),I^\ord(q^2)]$ for some $i > 0$, then, by Lemma~\ref{lem:z-cut},
  $a \in D_{i-1}[I^\ord(q^1),I^\ord(q^2)]$. As in the previous case, Lemma~\ref{lem:z-cut} asserts that 
$D_{i-1}[I^\ord(q^1),I^\ord(q^2)] \cup C_i[I^\ord(q^1),I^\ord(q^2)]$ is a clique in $G^\ord$, and the definition of $\Gdown^\ast$
gives us that $C_i[I^\ord(q^1),I^\ord(q^2)]$ is a clique in $G+\sol$. Consequently, $ab,bc \in E(G) \cup \sol$  regardless whether $b\in Y$ or not.
\item If $a \notin Y$ and $c \in C_0[I^\ord(q^1),I^\ord(q^2)] = \ord^{-1}([q^1,q^2-1])$ then, as $\ord'(c) \geq r^1$, we have that pair $(q^1,q^2)$ is 
a relevant pair due to existence of some position $q_\leftarrow$. Since $ac\in E(G^\ord)$, we have that $\ord'(a) = \ord(a) \geq q_\leftarrow+1$.
As $\jump{q_\leftarrow+1} \geq q^2$, we have that also $ab\in E(G^\ord)$ and $bc\in E(G^\ord)$. By the definition of $F$ we infer that $ab \in E(G)\cup \sol$ and, additionally, $bc \in E(G) \cup \sol$ in the case $b \notin Y$.
If $b \in Y$ then $b \in C_0[I^\ord(q^1),I^\ord(q^2)]$ and $bc \in E(G) \cup \sol$ by the definition of $\Gdown^\ast$.
\item If $a,c \notin Y$ and $b \in Y$, then let $b \in C_i[I^\ord(q^1),I^\ord(q^2)]$ for some
$0 \leq i \leq s^1$. Since $\ord$ and $\ord'$ differ only on internal ordering of sets $C_i[I^\ord(q^1),I^\ord(q^2)]$ and $a,c\notin Y$, then the condition $\ord'(a) < \ord'(b) < \ord'(c)$ implies also $\ord(a) < \ord(b) < \ord(c)$. Since $ac\in E(G^\ord)$ and $\ord$ is an umbrella ordering of $G^\ord$, we infer that $ab,bc\in E(G^\ord)$. By the definition of $F$ this implies that $ab,bc\in E(G+F)$.
\end{enumerate}
\cqed\end{proof}

\begin{claim}\label{cl:layer-two:Gdown}
$E(\Gdown^\ast) \subseteq E(\ord'(G+\sol))$.
\end{claim}
\begin{proof}
Consider any $pq \in E(\Gdown^\ast)$.
Denote $a = \ord^{-1}(p)$, $b = \ord^{-1}(q)$ and similarly denote $a'$ and $b'$ for the ordering $\ord'$; we want to show that $a'b' \in E(G) \cup \sol$.
As $E(\Gdown^\ast) \subseteq E(\ord(G^\ord))$ we have $ab \in E(G^\ord)$.
If $p,q \in \bigcup_i Z_i[I^\ord(q^1),I^\ord(q^2)]$ then $a'b' \in E(G) \cup \sol^{f,\ast}$ by the definition of $\sol^{f,\ast}$.
Otherwise, without loss of generality assume that $q \notin \bigcup_i Z_i[I^\ord(q^1),I^\ord(q^2)]$, and hence $b = b'$.
If additionally $a=a'$ then $a'b' \in E(G) \cup\sol$ follows directly from the definition of $\sol$ and the fact that $ab \in E(G^\ord)$.
In the remaining case, if $a \neq a'$, we have $p \in Z_i[I^\ord(q^1),I^\ord(q^2)]$ and $a,a' \in C_i[I^\ord(q^1),I^\ord(q^2)]$ for some
$0 \leq i \leq s^1$. Moreover, from the assumption $a \neq a'$ we infer that $r^1 \leq p < p^2$, and consequently $i < s^1$.
By the definition of $\sol$, we need to show that $a'b \in E(G^\ord)$.

We consider two cases, depending on the relative order of $p$ and $q$. If $p < q$, then we have $z^2(i) \leq q < z^1(i+1)$ by Lemma~\ref{lem:z-cut}
and consequently $b \in D_i[I^\ord(q^1),I^\ord(q^2)]$. By Lemma~\ref{lem:z-cut} again, $b$ is adjacent to all vertices of $C_i[I^\ord(q^1),I^\ord(q^2)]$
in the graph $G^\ord$, and $a'b \in E(G^\ord)$.
A similar argument holds if $q < p$ and $i > 0$: by Lemma~\ref{lem:z-cut}, we have first that $b \in D_{i-1}[I^\ord(q^1),I^\ord(q^2)]$ and, second,
that $b$ is adjacent in $G^\ord$ to all vertices of $C_i[I^\ord(q^1),I^\ord(q^2)]$, and hence $a'b \in E(G^\ord)$.
In the remaining case, if $q <p$ and $i=0$ (hence $p\in [q^1,q^2-1]$),  from $p \geq r^1$ it follows that the reason why $(q^1,q^2)$ is a relevant pair
is existence of some position $q_\leftarrow$. Since $ab \in E(G^\ord)$, we infer that $q \geq q_\leftarrow+1$. Hence, $b$ is
adjacent in $G^\ord$ to all vertices of $C_0[I^\ord(q^1),I^\ord(q^2)]$, in particular to $a'$, and the claim is proven.  
\cqed\end{proof}

\begin{claim}\label{cl:layer-two:Gup} 
$E(\ord'(G)) \subseteq \Gup$ and $\ord'$ is a feasible ordering of $G$.
\end{claim}
\begin{proof}
Observe that it follows directly from the definition of $\ord'$ that $\ord'(u) \in \pos_u$ for any vertex $u$.
Hence, to show feasibility of $\ord'$ it suffices to show that $E(\ord'(G)) \subseteq \Gup$.

Consider any $ab \in E(G)$. If both $a$ and $b$ belong to $Y$ or both do not belong,
then the claim is obvious by the feasibility of both $\ord$ and $f$.
Assume then $a \in Y$ and $b \notin Y$. If $\ord(a) = \ord'(a)$ then clearly $\ord'(a)\ord'(b) = \ord(a)\ord(b) \in E(\Gup)$. 
Otherwise, $a \notin X^2$ and $a \in C_i[I^\ord(q^1),I^\ord(q^2)]$ for some $0 \leq i < s^1$.
If $\ord(b) \geq z_{q^2}(i)$ then Lemma~\ref{lem:z-cut} implies that $b \in D_i[I^\ord(q^1),I^\ord(q^2)]$.
By Lemma~\ref{lem:z-cut} again, $[z_{q^1}(i),z_{q^1}(i+1)-1]$ is a clique in $\ord(G^\ord)$ and hence in $\Gup$ as well, so $\ord'(a)\ord'(b) \in E(\Gup)$.
A similar situation happens if $\ord(b) < z_{q^1}(i)$ and $i>0$: $b \in D_{i-1}[I^\ord(q^1),I^\ord(q^2)]$ and again Lemma~\ref{lem:z-cut}
together with feasibility of $\ord$ proves the claim.
In the remaining case $i = 0$ and $\ord(b) < q^1$. As $\ord(a) \neq \ord'(a)$ we have $a \notin X^1$ and hence the reason why $(q^1,q^2)$ is a relevant pair
must be existence of some position $q_\leftarrow$. As $ab \in E(G)$ we have $\ord(b) \geq q_\leftarrow+1$. As $\jump{q_\leftarrow+1} \geq q^2$,
the position $b$ is adjacent to all positions of $[q^1,q^2-1]$ in $\ord(G^\ord)$ and hence $\ord'(a)\ord'(b) \in E(\ord(G^\ord)) \subseteq E(\Gup)$ as claimed.
\cqed\end{proof}

From the above claims we infer that $|\sol^{\ord'}| \leq |\sol^{f,\ast}| + |\sol^\ord \setminus \binom{Y}{2}|$, whereas
$\sol^{\ord|_Y,\ast} \subseteq \sol^\ord \cap \binom{Y}{2}$.
By the minimality of both $f$ and $\ord$, including the lexicographical minimality, we have $f = \ord|_Y$ and the lemma is proven.
\end{proof}

The layer-two dynamic programming algorithm computes,
for any layer-two state $(I^1,I^2)$,   
a function $g[I^1,I^2]$
that satisfies the first~\ref{p:layer-two-last} conditions of $f[I^1,I^2]$, and we will inductively ensure that $g[I^\ord(q^1),I^\ord(q^2)] = f[I^\ord(q^1),I^\ord(q^2)]$
for any relevant pair $(q^1,q^2)$.
We compute the values $g[I^1,I^2]$ in the order of decreasing value of $z^1(0)$ and, subject to that, increasing value of $z^2(0)$.
(Formally, $g[I^1,I^2]$ may also take value of $\bot$, which implies that either $I^1$ or $I^2$ is not consistent with $\ord$;
 we assign this value to $g[I^1,I^2]$ whenever we find no candidate for its value.)

Consider now a fixed layer-two state $(I^1,I^2)$ with
$I^1 = (s^1,z^1,u^1,B^1)$ and $I^2 = (s^2,z^2,u^2,B^2)$.
We start with the the base case when we have that either $z^1(0) = z^2(0)$ or
$z^1(0) \geq p^2-1$. Observe that in this situation we have that the domain of $g[I^1,I^2]$ is either $X^2$ or $X^2$ with an additional element $u^1(0)$ which must be mapped to $z^1(0)=p_2-1$. Hence all the values of $f[I^1,I^2]$ are fixed by $\ord_X^2$, $u^1$ and $z^1$, and there is only one candidate for this value.
It is straightforward to verify that, in the case when $I^1 = I^\ord(q^1)$ and $I^2 = I^\ord(q^2)$, this
unique candidate is indeed a restriction of $\ord$ and hence equals $f[I^1,I^2]$.

In the inductive step we have $z^1(0) < z^2(0)$ and $z^1(0) < p^2-1$.
We consider two cases, depending on the value of $z^2(0) - z^1(0)$.

First assume $z^2(0) - z^1(0) > 1$.
In this case consider all possible chains $I^3 = (s^3,z^3,u^3,B^3)$ such that both $(I^1,I^3)$ and $(I^3,I^2)$ are layer-two states,
and $z^1(0) < z^3(0) < z^2(0)$.
We take as candidate value for $g[I^1,I^2]$ the union $g[I^1,I^3] \cup g[I^3,I^2]$, and pick $g[I^1,I^2]$ using the criteria
from the definition of the value $f[I^1,I^2]$, but taking only functions $g[I^1,I^3] \cup g[I^3,I^2]$ for all choices of $I^3$ as candidates.

We claim that if $I^1 = I^\ord(q^1)$, $I^2 = I^\ord(q^2)$ and $(q^1,q^2)$ is a relevant pair, then 
$g[I^1,I^2] = f[I^1,I^2]$. Note that it suffices to show that $f[I^1,I^2]$ is considered as a candidate for $g[I^1,I^2]$
in the aforementioned process for some choice of $I^3$.
Consider any $q^1 < q^3 < q^2$ and observe that if $(q^1,q^2)$ is a relevant pair,
then also $(q^1,q^3)$ and $(q^3,q^2)$ are relevant pairs: this is clearly true for
the case $q^2 \leq r^1$ and, in the last case of the definition of a relevant pair,
notice that the same position $q_\leftarrow$ witnesses also that $(q^1,q^3)$ and $(q^3,q^2)$ are relevant.
Denote $I^3 = I^\ord(q^3)$ and observe that we consider a candidate $g[I^1,I^3] \cup g[I^3,I^2]$ for $g[I^1,I^2]$.
By Lemma~\ref{lem:layer-two-ord} and the inductive assumption, this candidate is a restriction of $\ord$, and hence,
again by Lemma~\ref{lem:layer-two-ord}, equals $f[I^1,I^2]$.

We are left with the case $z^2(0) = z^1(0) + 1$. As $z^1(0) < p^2 - 1$, we have $z^2(0) < p^2$.
For $\alpha=1,2$ define $s^\alpha_\ast = s^\alpha-1$, and
$z^\alpha_\ast(i) = z^\alpha(i+1)$, $u^\alpha_\ast(i) = u^\alpha(i+1)$ and $B^\alpha_\ast(i) = B^\alpha(i+1)$
for any $0 \leq i \leq s^\alpha_\ast$, and $I^\alpha_\ast = (s^\alpha_\ast, z^\alpha_\ast, u^\alpha_\ast, B^\alpha_\ast)$.
In this case we consider only one candidate for $g[I^1,I^2]$, being $g[I^1_\ast,I^2_\ast]$, extended with 
$g[I^1,I^2](u^1(0)) = z^1(0)$.

It remains to show that if $I^1 = I^\ord(q^1)$, $I^2 = I^\ord(q^2)$ and $(q^1,q^2)$ is an relevant pair,
then $g[I^1,I^2] = f[I^1,I^2]$.
Observe that $I^1_\ast = I^\ord(\jump{q^1})$ and $I^2_\ast = I^\ord(\jump{q^2})$.
Moreover, the position $q^1$ witnesses that $(\jump{q^1},\jump{q^2})$ is a relevant pair, and hence
$g[I^1_\ast,I^2_\ast] = f[I^1_\ast,I^2_\ast]$ by induction.
This completes the proof that $g[I^\ord(q^1),I^\ord(q^2)] = f[I^\ord(q^1),I^\ord(q^2)]$ for all relevant pairs $(q^1,q^2)$.

As candidates for the value $f[J^1,J^2]$ of the layer-one state $(J^1,J^2)$ we are currently processing, we take all the values $g[I^1,I^2]$ for all the layer-two states $(I^1,I^2)$ for which the domain of $f[I^1,I^2]$ is equal to the domain of $f[J^1,J^2]$. By Theorem~\ref{thm:chain-enum}, there are at most $(n|\secfam|)^{\Oh(\tau)}$ guesses for such states, and they can be enumerated in $(n|\secfam|)^{\Oh(\tau)}$ time. Observe also that if indeed $J^1 = J^\ord(p^1)$ and $J^2 = J^\ord(p^2)$, then the layer-two state $(I^1,I^2) = (I^\ord(p^1),I^\ord(r^1))$ will be among the enumerated states. Since $(p^1,r^1)$ is a relevant pair, we have that $g[I^\ord(p^1),I^\ord(r^1)]=f[I^\ord(p^1),I^\ord(r^1)]$, while by Lemma~\ref{lem:layer-two-ord} we have that $f[I^\ord(p^1),I^\ord(r^1)]$ is equal to the restriction of $\ord$ to its domain, which in turn is equal to the domain of $g[J^1,J^2]$. Hence, the restriction of $\ord$ to the domain of $g[J^1,J^2]$, which is exactly equal to $f[J^1,J^2]$ by Lemma~\ref{lem:layer-one-1}, will be among the enumerated candidate values --- this was exactly the property needed by the layer-one dynamic program.

By Theorem~\ref{thm:chain-enum} there are $(n|\secfam|)^{\Oh(\tau)}$ layer-two states, thus the entire computation of $f[J^1,J^2]$
takes $(n|\secfam|)^{\Oh(\tau)}$ time, as was promised.
This concludes the proof of Theorem~\ref{thm:dp}, and hence finishes the proof of Theorem~\ref{thm:main}.

\section{Conclusions}\label{sec:conc}
We have presented the first subexponential algorithm for
\picname{}, running in time $k^{\Oh(k^{2/3})} + \Oh(nm(kn+m))$.
As many algorithms for completion problems in similar graph classes~\cite{my-ic,DrangeFPV13,FominV13,ghosh2012faster} run in time
$\Ohstar(k^{\Oh(\sqrt{k})})$, it is tempting to ask for such a running time also in our case. The bottleneck in the presented approach is the trade-offs between the two layers of our dynamic programming.

Also, observe that every $\Ohstar(2^{o(\sqrt{k})})$-time algorithm for {\sc{PIC}} would be in fact
also a $2^{o(n)}$-time algorithm. Since existence of such an algorithm seems unlikely,  we would like to ask for a $2^{\Omega(\sqrt{k})}$ lower bound, under the assumption of the
Exponential Time Hypothesis. Note that no such lower bound is known for any other completion problem for related graph classes.

\bibliographystyle{abbrv}
\bibliography{../completion}

\end{document}